\edef\savecatcodeat{\the\catcode`@}
\def\tb@ifSpecChars#1#2{#1}
\def\tb@ifNoSpecChars#1#2{#2}
\def\tableau{%
  \bgroup
  \@ifstar{\let\Tif\tb@ifNoSpecChars\tb@tableauB}
          {\let\Tif\tb@ifSpecChars\tb@tableauB}}
\def\tb@tableauB{
  \@ifnextchar[{\tb@tableauC}{\tb@tableauC[]}}
\def\tb@tableauC[#1]{\hbox\bgroup%
    \let\\=\cr
    \def\bl{\global\let\tbcellF\tb@cellNF}%
    \def\tf{\global\let\tbcellF\tb@cellH}
%
    \dimen2=\ht\strutbox \advance\dimen2 by\dp\strutbox%
    \ifx\baselinestretch\undefined\relax%
    \else%
       \dimen0=100sp \dimen0=\baselinestretch\dimen0%
       \dimen2=100\dimen2 \divide\dimen2 by\dimen0%
    \fi%
    \let\tpos\tb@vcenter
    \tb@initYoung
    \tb@options#1\eoo
    \let\arrow\tb@arrow%
    \dimen0=\Tscale\dimen2%
    \dimen1=\dimen0 \advance\dimen1 by \tb@fframe%
    \lineskip=0pt\baselineskip=0pt
%
    \def\tb@nothing{}%
    \def\endcellno{$\rss\egroup\bss\egroup}
    \def\endcell{\endcellno\kern-\dimen0}
    \def\begincell{\vbox to\dimen0\bgroup\vss\hbox to\dimen0\bgroup\hss$}%
    \let\overlay\tb@overlay%
    \let\fl\tb@fl%
    \let\lss\hss\let\rss\hss\let\tss\vss\let\bss\vss
    \def\mkcell##1{
        \let\tbcellF\tb@cellD
        \def\tb@cellarg{##1}
        \ifx\tb@cellarg\tb@nothing\let\tb@cellarg\tb@cellE\fi%
        \begincell\tb@cellarg\endcellno
        \tbcellF}
    \let\savecellF\tbcellF
     \Tif{\catcode`,=4\catcode`|=\active}{}\tb@tableauD}%
\let\tb@savetableauD\tableauD
\gdef\tableauD#1{%
  \Tif{
    \mathcode`|="8000 \mathcode`*="8000%
    \mathcode`~="8000 \mathcode`@="8000%
    \def@{\bullet}%
    \let|\cr
    \let*\tf
    \let~\sk
  }{}%
  \tpos{\tabskip=0pt\halign{&\mkcell{##}\cr#1\crcr}}%
  \global\let\tbcellF\savecellF
  \egroup
  \egroup}
\let\tb@tableauD\tableauD
\let\tableauD\tb@savetableauD
\let\tb@savetableauD\undefined
\def\tb@options#1{\ifx#1\eoo\relax\else\tb@option#1\expandafter\tb@options\fi}
\def\tb@option#1{%
  \if#1t\let\tpos\tb@vtop\fi
  \if#1c\let\tpos\tb@vcenter\fi
  \if#1b\let\tpos\vbox\fi
  \if#1F\tb@initFerrers\fi
  \if#1Y\tb@initYoung\fi
  \if#1s\tb@initSmall\fi
  \if#1m\tb@initMedium\fi
  \if#1l\tb@initLarge\fi
  \if#1p\tb@initPartition\fi
  \if#1a\tb@initArrow\fi
}
\def\tb@vcenter#1{\ifmmode\vcenter{#1}\else$\vcenter{#1}$\fi}
\def\tb@vtop#1{\hbox{\raise\ht\strutbox\hbox{\lower\dimen0\vtop{#1}}}}
\def\tb@initPartition{\def\Tscale{.3}}
\def\tb@initSmall{\def\Tscale{1}}
\def\tb@initMedium{\def\Tscale{2}}
\def\tb@initLarge{\def\Tscale{3}}
\def\tb@initArrow{\dimen2=1.25em}
\def\tb@initYoung{%
  \def\tb@cellE{}
  \let\tb@cellD\tb@cellN
  \def\sk{\global\let\tbcellF\tb@cellNF}}
\def\tb@initFerrers{%
  \def\tb@cellE{\bullet}
  \let\tb@cellD\tb@cellNF
  \def\sk{\bullet}}
\def\tb@sframe#1{%
  \vbox to0pt{
    \vss
    \hbox to0pt{%
      \hss
      \vbox to\dimen1{
        \hrule depth #1 height0pt
        \vss
        \hbox to\dimen1{
          \vrule width #1 height\dimen1
          \hss
          \vrule width #1
          }%
        \vss
        \hrule height #1 depth 0in
        }%
      \kern-\tb@hframe
      }%
    \kern-\tb@hframe}}
\def\tb@hframe{.2pt}\def\tb@fframe{.4pt}\def\tb@bframe{2pt}
\def\tb@cellH{\tb@sframe{\tb@bframe}}       
\def\tb@cellNF{}                            
\def\tb@cellN{\tb@sframe{\tb@fframe}}       
\let\tbcellF\tb@cellN                       
\def\tb@rpad{1pt}
\def\tb@lpad{1pt}
\def\tb@tpad{1.8pt}
\def\tb@bpad{1.8pt}
\def\tb@overlay{\endcell\@ifnextchar[{\tb@overlaya}{\begincell}}
\def\tb@overlaya[#1]{\vbox to\dimen0\bgroup%
  \tb@overlayoptions#1\eoo%
  \tss\hbox to\dimen0\bgroup\lss$}
\def\tb@overlayoptions#1{\ifx#1\eoo\relax\else\tb@overlayoption#1\expandafter\tb@overlayoptions\fi}
\def\tb@overlayoption#1{
  \if#1t\def\tss{\vskip\tb@tpad}\let\bss\vss\fi
  \if#1c\let\tss\vss\let\bss\vss\fi
  \if#1b\def\bss{\vskip\tb@bpad}\let\tss\vss\fi
  \if#1l\def\lss{\hskip\tb@lpad}\let\rss\hss\fi
  \if#1m\let\lss\hss\let\rss\hss\fi
  \if#1r\def\rss{\hskip\tb@rpad}\let\lss\hss\fi
}
\def\tb@fl{\endcell\begincell\vrule depth 0pt width \dimen0 height \dimen0 \endcell\begincell}
\def\tb@arrowpad{.5}
\newoptcommand{\tb@arrow}{\@ne}[2]{%
  \endcell
   \begingroup%
   \let\dg@getnodesize\tb@getnodesize
   \dg@USERSIZE=#1\relax%
   \ifnum\dg@USERSIZE<\@ne \dg@USERSIZE=\@ne \fi%
   \dg@parse{#2}%
   \dg@label{\tb@draw{#1}{#2}}}
\def\tb@getnodesize#1#2#3#4#5{\dimen3=\tb@arrowpad\dimen2 #4=\dimen3 #5=\dimen3\relax}
\def\tb@getnodesize#1#2#3#4#5{\ifnum#2=0\ifnum#3=0\tb@getnodesizetail{#4}{#5}\else\tb@getnodesizehead{#4}{#5}\fi\else\tb@getnodesizehead{#4}{#5}\fi}
\def\tb@getnodesizetail#1#2{\dimen3=.5\dimen2 #1=\dimen3 #2=\dimen3}
\def\tb@getnodesizehead#1#2{\dimen3=.5\dimen2 #1=\dimen3 #2=\dimen3}
\def\tb@draw#1#2#3#4{%
        \dg@X=0\dg@Y=0\dg@XGRID=1\dg@YGRID=1\unitlength=.001\dimen0%
        \dg@LBLOFF=\dgLABELOFFSET \divide\dg@LBLOFF\unitlength%
        \dg@drawcalc
        \begincell
        \let\lams@arrow\tb@lams@arrow
        \begin{picture}(0,0)\begingroup\dg@draw{#1}{#2}{#3}{#4}\end{picture}%
        \endcell
        \endgroup
        \begincell}
\def\tb@lams@arrow#1#2{%
 \lams@firstx\z@\lams@firsty\z@
 \lams@lastx#1\relax\lams@lasty#2\relax
 \lams@center\z@
 %
 \N@false\E@false\H@false\V@false
 \ifdim\lams@lastx>\z@\E@true\fi
 \ifdim\lams@lastx=\z@\V@true\fi
 \ifdim\lams@lasty>\z@\N@true\fi
 \ifdim\lams@lasty=\z@\H@true\fi
 \NESW@false
 \ifN@\ifE@\NESW@true\fi\else\ifE@\else\NESW@true\fi\fi
 %
 \ifH@\else\ifV@\else
  \lams@slope
  \ifnum\lams@tani>\lams@tanii
   \lams@ht\ten@\p@\lams@wd\ten@\p@
   \multiply\lams@wd\lams@tanii\divide\lams@wd\lams@tani
  \else
   \lams@wd\ten@\p@\lams@ht\ten@\p@
   \divide\lams@ht\lams@tanii\multiply\lams@ht\lams@tani
  \fi
 \fi\fi
 %
 \ifH@  \lams@harrow
 \else\ifV@ \lams@varrow
 \else \lams@darrow
 \fi\fi
}
\let\savecatcodeat\undefined
\numberwithin{equation}{section}
\DeclareMathOperator{\ch}{ch}
\renewcommand{\subsubsection}{\@startsection
{subsubsection}
{3}
{0mm}
{\baselineskip}
{-0.5\baselineskip}
{\normalfont\normalsize\bfseries}}
\newtheorem{theorem}{Theorem}
\newtheorem{lemma}[theorem]{Lemma}
\newtheorem{proposition}[theorem]{Proposition}
\newtheorem{conjecture}[theorem]{Conjecture}
\newtheorem{corollary}[theorem]{Corollary}
\theoremstyle{remark}
\newtheorem*{acknow}{Acknowledgments}
\def\itta{I.1}
\def\ittb{I.2}
\def\ittc{I.3}
\def\ittd{I.4}
\def\itte{I.5}
\def\ittf{II.1}
\def\ittg{II.2}
\def\itth{II.3}
\def\itti{II.4}
\def\ittj{II.5}
\def\ittk{III.1}
\def\ittl{III.2}
\def\ittm{III.3}
\def\ittn{III.4}
\def\itto{IV.1}
\def\ittp{IV.2}
\def\ittq{IV.3}
\def\ittr{IV.4}
\def\la{{\lambda}}
\def\cal L{{\mathcal L}}
\def\P{{\mathcal P}}
\def\S{{\mathcal S}}
\def\Z{{\mathbb Z}}
\def\aa{\alpha}
\def\Circle{\pspicture(0.5,0.5)\pscircle(0.25,0.25){.25}\endpspicture}
\newcommand{\cercle}[1]{\ensuremath{\setlength{\unitlength}{1ex}\begin{picture}(2.8,2.8)\put(1.4,0.7){\circle{2.8}\makebox(-5.6,0){#1}}\end{picture}}}
\newcommand{\tcercle}[1]{\ensuremath{\setlength{\unitlength}{1ex}\begin{picture}(2.8,2.8)\put(1.4,1.4){\circle{2.8}\makebox(-5.6,0){#1}}\end{picture}}}
\def\B{{\mathcal B}}
\def\C{{\mathcal C}}
\def\R{{\mathcal R}}
\let\la\lambda
\let\La\Lambda
\let\a\alpha
\def\ad{\a_{k,r}}
\def\Lad{\La_{k,r,N}}
\let\Om\Omega
\let\ta\theta
\let\rw\rightarrow
\newcommand{\LL}{\ensuremath{\langle\!\langle}}
\newcommand{\RR}{\ensuremath{\rangle\!\rangle}}
\newcommand{\coeff}[1]{\ensuremath{\underset{#1}{\mathrm{coeff}}}}
\def\cd{\circledast}
\def\S{\mathcal{S}}
\def\B{\mathcal{B}}
\def\lrw{\leftrightarrow}
\begin{document}
\title[Jack superpolynomials with negative fractional parameter]{Jack superpolynomials with negative fractional parameter:
 clustering  properties and super-Virasoro ideals}
\author{Patrick Desrosiers} \thanks{patrick@inst-mat.utalca.cl}
\address{Instituto de Matem\'atica y F\'{\i}sica, Universidad de
Talca, Casilla 747, Talca, Chile.}
\author{Luc Lapointe}
\thanks{lapointe@inst-mat.utalca.cl}
\address{Instituto de Matem\'atica y F\'{\i}sica, Universidad de
Talca, Casilla 747, Talca, Chile.}
\author{Pierre Mathieu} \thanks{pmathieu@phy.ulaval.ca}
\address{D\'epartement de physique, de g\'enie physique et
d'optique, Universit\'e Laval,  Qu\'ebec, Canada, G1K 7P4.}

\begin{abstract}
The Jack polynomials $P_\la^{(\a)}$ at $\a=-(k+1)/(r-1)$ indexed by certain
$(k,r,N)$-admissible partitions are known to span an ideal $I^{(k,r)}_N$
of the space of symmetric functions in $N$ variables.  The ideal
$I^{(k,r)}_N$ is invariant under the
 action of certain differential operators which include
half the Virasoro algebra. Moreover, the Jack polynomials in
$I^{(k,r)}_N$ admit clusters of size at most $k$: they
vanish when $k+1$ of their variables are identified, and 
they 
do not vanish when only $k$ of them are identified.
We generalize most of these properties to superspace using
orthogonal eigenfunctions of the supersymmetric extension of the
trigonometric Calogero-Moser-Sutherland model known as
Jack superpolynomials.
In particular, we show that the Jack  superpolynomials $P_{\Lambda}^{(\alpha)}$
at $\a=-(k+1)/(r-1)$
indexed by certain
$(k,r,N)$-admissible superpartitions span an ideal ${\mathcal I}^{(k,r)}_N$
of the space of symmetric polynomials in $N$ commuting variables and
$N$ anticommuting variables.  We prove that the ideal
${\mathcal I}^{(k,r)}_N$ is  
stable with respect to the action of the negative-half of the 
super-Virasoro algebra.  {In addition}, we show that the Jack superpolynomials
in 
${\mathcal I}^{(k,r)}_N$ vanish when $k+1$ of their commuting
variables are equal, 
and conjecture that they do not vanish when only $k$  
of them are identified.
This allows us to conclude that the standard Jack polynomials with prescribed symmetry  should satisfy similar clustering properties.  {Finally, we conjecture that the elements of ${\mathcal I}^{(k,2)}_N$ provide a basis for the subspace of symmetric superpolynomials in $N$ 
variables that vanish when $k+1$ commuting variables are set equal to each other.}
\end{abstract}

\subjclass[2000]{05E05 (Primary), 81Q60 and 33D52 (Secondary)}

 \maketitle

\let\la\lambda

\tableofcontents

\section{Introduction}

\subsection{Some remarkable properties of the Jack polynomials}
 
The Jack polynomials $P_{\lambda}^{(\alpha)}$ 
form a  basis of the space ${\bf \Lambda}_N$ of
symmetric polynomials in $N$ variables \cite{Mac}.  When expanded in terms of  
monomial symmetric functions,
\begin{equation}
P^{(\a)}_\la=m_\la+\sum_{\mu<\la}c_{\la\mu}(\alpha)\,m_\mu,\label{tri}
\end{equation}
where the order on partitions is the usual dominance ordering,
the coefficients $c_{\la\mu}(\alpha)$ are ratios of polynomials in $\aa$ with 
{\it positive} integral coefficients. This is a consequence of the positivity
\cite{Knop} of the so-called integral form of the Jack polynomials: 
for some $v_\la (\alpha) \in\mathbb N[\alpha]$ we have that
\begin{equation}\label{JvsP}
J^{(\a)}_\la=v_\la(\aa) P^{(\a)}_\la 
\end{equation}
has monomial expansion coefficients that belong to  $\mathbb N[\alpha]$.
Here is an  example of $P^{(\a)}_\la$ illustrating this point:
\begin{equation}
P_{(3)}^{(\aa)}=m_{(3)}+
\frac{3}{\aa(2\aa+1)}m_{(2,1)}+\frac{6}{(\aa+1)(2\aa+1)}m_{(1,1,1)}.
\end{equation}
The polynomials 
$P_\la^{(\aa)}$ are thus clearly well defined for any real value of 
$\aa> 0$ but they may have poles for negative real values of $\aa$ 
(it is clearly the case for $\aa=-1$ or $-1/2$ in the above example).

It was shown in 
\cite{FJMM} that for particular  classes of partitions, the polynomials 
$P_\la^{(\aa)}$ at certain negative rational values of $\alpha$
not only are regular (i.e., have no poles)  but also have remarkable properties.
Let $k \geq 1$ and $r \geq 2$ 
be integers such that $k+1$ and $r-1$ are coprime.
For partitions $(\lambda_1,\dots,\lambda_N)$ (admitting entries equal to zero), 
we say that 
$\lambda$ is $(k,r,N)$-admissible if 
\begin{equation} \label{eqadmi}
 \la_i- \la_{i+k}\geq r \quad {\text{for all}} \quad 1 \leq i \leq N-k \, .
\end{equation}
The Jack polynomials $P_{\lambda}^{(\ad)}$, where
$\ad=-(k+1)/(r-1)$, do not have poles when $\lambda$ is 
$(k,r,N)$-admissible. 
Somewhat unexpectedly, the Jack polynomials at $\ad$ indexed by 
$(k,r,N)$-admissible partitions vanish whenever $k+1$ of their variables are equal.
\footnote{The best way of seeing the necessity of the condition that 
 $k+1$ and $r-1$ be coprime
 is  to explore the analogous vanishing properties stemming from Macdonald polynomials \cite{Mac}. This is analyzed in \cite{FJMMqt}, where it is found that the parameters $q$ and $t$  of the Macdonald polynomials must be specialized as follows: $q=\omega_1 u^{-(k+1)/m}$ and $t=u^{(r-1)/m}$. In these expressions,  $u$ is an indeterminate, $\omega_1^{(r-1)/m}$ is a primitive $m^{th}$ root of unity and $m$ is the greatest common divisor of $k+1$ and $r-1$ (which are thus not required to be relatively prime in this context). However, the relationship $q=t^\a$ which  (together with $t\rw 1$) yields the Jack limit forces $\omega_1$ to be equal to 1, and thus $m=1$. 
}
Moreover, the approach to zero of these vanishing Jack polynomials can be made precise:
$P^{(\ad)}_\la$  does not vanish when $x_1=\cdots =x_k=x$ and vanishes
with exponent at least $r$ when  $x_{k+1}\rw x$ \cite{BH}, that is,  
as $(x-x_{k+1})^{s}$ with $s\geq r$.  This is a special case of what is often called the {\it clustering property.}
We stress that in most cases of small degree, the exponent $s$ is exactly equal to $r$.\footnote{
There are very few exceptions to this rule. Specifically, with the restrictions $n\leq 12\,, N\leq 8$ and $k,r\leq6$, there is a total of 3619 admissible partitions and only 18 exceptions:
\begin{align*}
&(2,3,3)\text{-adm}:\quad (4,2,0)\;(6,3,0)\;(8,4,0)\; (5,3,1)\; (6,4,2)\;(7,4,1)\\
&(4,3,5)\text{-adm}:\quad (4,4,2,0,0)\;(5,3,3,0,0)\;(4,2,2,2,0)\; (4,3,2,1,0)\\
&(3,4,4)\text{-adm}:\quad (7,4,0,0)\;(5,3,1,0)\;(5,4,2,0)\\
&(4,4,5)\text{-adm}:\quad (5,3,3,1,0)\\
&(2,5,3)\text{-adm}:\quad (6,3,0)\;(8,4,0)\;(7,4,1)\\
&(4,5,5)\text{-adm}:\quad (6,4,1,1,0)\end{align*}
 We observe that $N=k+1$ in all these exceptional cases, so that the exceptions can be readily ruled out by imposing $N>k+1$.  However, this inequality does not provide a fine delimitation of the exceptional cases since among the 3619  admissible partitions, only  1324 of them are such that $N>k+1$. We observe also that the difference  $\la_1-\la_{k+1}$ is 
not only $\geq r$ in all the exceptional cases, but is actually 
$\geq r+1$.
Moreover, in all these cases the order of the zero  is $r+1$.}   For instance, with $r=2$ and $k=1$, we have
\begin{equation} \label{eq420}
P^{(-2)}_{(4,2,0)}(x_1,x_2,x_3)= (x_1-x_2)^2  (x_1-x_3)^2  (x_2-x_3)^2   
\end{equation}
which thus vanishes with exponent 2 when any two variables are identified. 
As another example, consider
$\a=-3/2$ ($r=3,\, k=2)$, with $x_1=x_2=x$:
\begin{equation}\label{eq4310}
P^{(-3/2)}_{(4,3,1,0)}(x,x,x_3,x_4)=2x\, (x_3+x_4)(x_4-x)^3(x_3-x)^3.\end{equation}
illustrating again neatly the cluster property.
This property has recently been proved in \cite{BF} for a special class of admissible partitions.\footnote{This class corresponds to the
minimal-degree staircase-type partitions with steps of width $k$ and relative height $r$. To be more specific,
these are partitions of the form $(\cdots,(3r)^k,(2r)^k,r^k,0^k)$, where $p^k$ means that the part $p$ is repeated $k$ times. As observed in \cite{BH}, the 
corresponding Jack polynomials $P^{(\ad)}_\la$ are translational invariant, i.e., invariant under the transformation $x_i\rw x_i+a$. The Jack polynomial
$P^{(-2)}_{(4,2,0)}$ given in \eqref{eq420} 
illustrates this property. 
Being translation invariant, they are annihilated by the operator $\sum_{i=1}^N\partial_{x_i}$. These special Jack polynomials (and their Macdonald extensions) have been further studied in \cite{JL}.
}

 Since the  Jack polynomials {$P_{\lambda}^{(\a)}$ for $(k,r,N)$-admissible partitions do not have poles at $\a=\ad$}, we can define the space
\begin{equation}
 I_N^{(k,r)} = {\rm span}_{\mathbb C} \bigl\{ 
P^{(\ad)}_{\lambda} \, \big| \, 
\lambda~{\rm is}~(k,r,N){\text-}{\rm admissible}
\bigl\}
\end{equation}
which turns out to be an 
ideal of the ring of symmetric polynomials in $N$ variables over 
$\mathbb C$.  It is also stable under the action of certain differential operators that realize half of the Virasoro algebra.  Now, consider the subspace of symmetric polynomials 
in $N$ variables that vanish   whenever $k+1$ variables are equal:
\begin{equation}
F_N^{(k)} = \{ f(x_1,\dots,x_N) \in {\bf \Lambda}_N \, | \,  f(x_1,\dots,x_N)=0 {\rm ~if~}
x_1=\cdots=x_{k+1} \} \, .
\end{equation}
From the clustering property of the Jack polynomials, 
we get $I_N^{(k,r)} \subseteq F_N^{(k)}$ for all $r$.  But 
more surprisingly, it can be shown that $I_N^{(k,2)}=F_N^{(k)}$,
which provides a connection between Jack polynomials and
the representation theory  of $\widehat{sl}(2)_k$ \cite{FS,AKS}.

These remarkable features of the Jack polynomials have attracted much interest in physics. In particular,
the clustering property of admissible $P^{(\ad)}_\la$ makes them useful  trial wavefunctions for the fractional Hall effect: for $r=2$, these are related to the Read-Reyazi states \cite{RR}. 
In this context, the restriction on sequences of $k+1$ contiguous quasi-particle modes (the parts of the partition)  can be interpreted as  a sort of generalization of the Pauli exclusion principle \cite{BH}.
In conformal field theory, the $P^{(\ad)}_\la$, for admissible $\la$, correspond to the polynomial part of the correlators of $N$ fundamental parafermionic fields, fields that generate
a generalized $\mathcal Z_{k}^{(r/2)}$  parafermionic algebra \cite{ZF,JM} underlying the $W(k+1,k+r)$ minimal models \cite{BGS,ES} (a connection already alluded to in \cite{FJMM}).

\subsection{Extension to superspace}
The main objective of the present work is to extend the results of \cite{FJMM} to superspace. The Jack polynomials have a superspace extension, $P^{(\a)}_\La$, called the Jack superpolynomials or Jack polynomials in superspace \cite{DLM3,DLM7}, which first appeared in the study of the supersymmetric generalization of the trigonometric Calogero-Moser-Sutherland model.  
The Hamiltonian operator for this model was obtained in \cite{SS} (see also \cite{BTW}) by following the method of Freedman and Mende \cite{FM}:
\begin{equation}\label{susyH}
\mathscr{H}=\frac{1}{2}\sum_{i=1}^N\left(x_i \partial_{x_i}\right)^2-\frac{1}{\alpha^2}\sum_{1\leq i<j\leq N}\frac{x_i x_j}{(x_i-x_j)^2}\left(1-\alpha\kappa_{ij}\right),
\end{equation}
where
\begin{equation}\label{eqkappa}
\kappa_{ij}=1-(\theta_i-\theta_j)\left(\partial_{\theta_i}-\partial_{\theta_{j}}\right).
\end{equation}
The variables $x_j=e^{\mathrm{i}\phi_j}$ describe the positions of the $N$ bosonic particles on the unit circle while the $\theta_i$'s stand for their fermionic partners (which can also be interpreted as internal degrees of freedom of the $x_j$). In more mathematical terms,  $\mathscr{H}$ is an operator that acts on differentiable functions  depending upon a set $x=\{x_1,\ldots,x_N\}$ of commuting variables and a set $\theta=\{\theta_1,\ldots,\theta_N\}$ of anti-commuting variables, with the additional assumption that $\theta_ix_j=x_j\theta_i$ for all $i,j$.

 The systematic search for the eigenfunctions of $\mathscr{H}$ started in \cite{DLM1}.   The task was simplified by the following basic observation: for any polynomial $f(\theta_1,\ldots,\theta_N)$, the operator in \eqref{eqkappa} satisfies
\begin{equation}\kappa_{ij}f(\ldots,\theta_i,\ldots,\theta_j,\ldots)=f(\ldots,\theta_j,\ldots,\theta_i,\ldots),\qquad\forall\,i,j. 
\end{equation} This means that the operators $\kappa_{ij}$ provide an action of the symmetric group $S_N$ on polynomials in $\theta$.  It was shown that the Hamiltonian \eqref{susyH} has eigenfunctions of the form 
\begin{equation} \label{eigen}
\Psi(x;\theta)=\prod_{1\leq i<j\leq N}|x_i-x_j|^{1/\alpha}
f_\La(x;\theta)\end{equation}   
where $f_\Lambda$ is a homogeneous polynomial in $x$ and $\theta$ satisfying 
\begin{equation}\label{symsuperpol}\kappa_{ij}K_{ij}\,f_\Lambda(x;\theta) = f_\Lambda(x;\theta),\qquad \forall\, i,j.\end{equation}  
In the last equation, $K_{ij}$ is an operator   that  interchanges the variables $x_i$ and $x_j$.  Any polynomial such as  $f_\Lambda(x;\theta)$ in \eqref{symsuperpol} is called a symmetric superpolynomial.

The Jack superpolynomials $P^{(\a)}_\La$, which are indexed by superpartitions, are symmetric superpolynomials providing orthogonal eigenfunctions of
the Hamiltonian $\mathscr{H}$ (of the form described in \eqref{eigen}).
In short, a superpartition $\Lambda$
of fermionic degree $m$ is
a pair of partitions $\La=(\La^\cd, \La^*)$ 
such that the skew-diagram 
$\La^\cd/\La^*$ is both a vertical and  a horizontal $m$-strip (see Section~\ref{sect22} for the relevant definitions). 
 Note that the number of entries (possibly including zeroes) of
$\Lambda^\cd$ and $\Lambda^*$ will always be equal to $N$.
A convenient representation of $\La$  is obtained by circling the $m$
entries of $\La^*$ such that 
$\La^\cd_i-\La^*_i=1$. For instance, with $N=7$,  
$$\La^\cd=(6,4,4,3,2,1,0),\quad\text{and}\quad\La^*=(5,4,3,3,1,0,0)\quad\Longrightarrow\quad \La=(\cercle{5},4,\cercle{3},3,\cercle{1},\cercle{0},0).$$

Let again $k\geq 1$ and $r \geq 2$ be integers such that $k+1$ and $r-1$
are coprime.  We say that the superpartition $\Lambda$  is $(k,r,N)$-admissible if
\begin{equation} \label{eqadmisup}
 \Lambda_i^{\cd}- \La_{i+k}^*\geq r \quad {\text{for all}} \quad 1 \leq i \leq N-k \, .
\end{equation}
For example, $(\cercle{7},7,5,\cercle{4},\cercle{2},\cercle{1},0)$ is $(2,3,7)$-admissible whereas $(8,\cercle{4},3,\cercle{1},\cercle{0})$ is $(1,2,5)$-admissible. Note that when $m=0$, we have $\La^\cd=\La^*$.
Hence the superpartition $\Lambda=(\La^*,\La^*)$ 
can be thought as
an ordinary partition,  and the conditions in \eqref{eqadmisup} reduce to those in \eqref{eqadmi}.

Let $\Lambda$ be a $(k,r,N)$-admissible superpartition of fermionic degree $m$.
We show that the
Jack superpolynomial $P_{\Lambda}^{(\ad)}$, where
$\ad=-(k+1)/(r-1)$, does not have poles.
   We prove that
$P_{\Lambda}^{(\ad)}$ vanishes whenever $k+1$ of its commuting 
variables are equal.\footnote{{Given that the Jack superpolynomials can be constructed out of the non-symmetric Jack polynomials (see eq.\eqref{jackinnonsym}) and that the latter can be recovered from the non-symmetric Macdonald polynomials, the last two properties can also be deduced from the results of \cite{Kas}.}} 
 Furthermore, 
we conjecture that if  $N\geq k+m+1$, and $m>0$ 
then
\begin{equation} \label{newcluster}
 P_{\Lambda}^{(\ad)}(x_1,\dots,x_{N-k-1},x',\overbrace{x,\ldots,x}^{k \text{ times}}) 
\Big |_{\theta_1 \cdots \theta_m} \quad \text{vanishes as}\quad (x-x')^r
\quad \text{when} \quad x\to x'
\end{equation}
where the notation 
$ |_{\theta_1 \cdots \theta_m}$ refers to the
coefficient of $\theta_1 \cdots \theta_m$.
   In other words, we prove that the polynomial in \eqref{newcluster} vanishes when $x=x'$, and we conjecture that the multiplicity of the factor $(x-x')$ is {exactly} equal to $r$ whenever $N \geq k+m+1$ and $m>0$. 
This is the clustering property of the Jack superpolynomials.
Note that the condition $N \geq m+k+1$ ensures that the sets
$\{1,...,m\}$ and $\{N-k,\dots,N\}$ do not intersect. 
Stronger  conjectures concerning clustering properties 
for the Jack polynomials with prescribed symmetry that will be discussed
later on in the introduction 
(and in more details in Section~\ref{les0}) imply that if $m>0$ then
\begin{equation} \label{newcluster2}
P_{\Lambda}^{(\ad)}(x_1,\dots,x_{N-k-1},x',\overbrace{x,\ldots,x}^{k \text{ times}}) 
\quad \text{vanishes as}\quad  (x-x')^{r-1}  \quad \text{when} \quad x\to x'
\, ,
\end{equation}
where we stress that here the full superpolynomial is concerned while
 in \eqref{newcluster} only the
coefficient of $\theta_1 \cdots \theta_m$ was considered.

The regularity of the Jack superpolynomials we just mentioned
allows us
to define the space
\begin{equation} \label{defidealI}
 {\mathcal I}_N^{(k,r)} = {\rm span}_{\mathbb C} \bigl\{ 
P^{(\ad)}_{\Lambda} \, \big| \, 
\Lambda~{\rm is}~(k,r,N){\text-}{\rm admissible}
\bigl\} \, .
\end{equation}
We show that  ${\mathcal I}_N^{(k,r)}$ 
is an ideal of the ring of symmetric superpolynomials in 
$N$ variables\footnote{ {When we refer to  a superpolynomial in $N$ variables, it is understood that each variable is considered as a pair of commuting and anticommuting variables.}}  over 
$\mathbb C$. That is, ${\mathcal I}_N^{(k,r)}$ is an ideal of the ring
$$
{\bf \Lambda}_N^{\theta} = 
\mathbb C[x_1,\dots,x_N,\theta_1,\dots,\theta_N]^{S_N} \, ,
$$
where $S_N$ acts diagonally on the two sets of variables.
Moreover, we prove that ${\mathcal I}_N^{(k,r)}$ is  stable 
with respect to the action of the negative-half of the super-Virasoro algebra.
In order to demonstrate these central results, we first need 
to establish 
very intricate properties of the Jack superpolynomials that can be seen
as superspace extensions of the technical results of \cite{FJMM}.
We then need to
obtain the explicit action of certain generators
of the negative-half of the super-Virasoro algebra on the Jack superpolynomials
(reminiscent of Pieri rules).
We stress that the very fact that these results can be established is a confirmation 
of the natural character and 
the richness of these superpolynomials.

Now consider the subspace of symmetric superpolynomials 
in $N$ variables that vanish   whenever $k+1$ commuting 
variables are equal:
\begin{equation}
{\mathcal F}_N^{(k)} = \bigl\{ f(x,\theta) 
\in {\bf \Lambda}_N^{\theta} \, 
\big| \,  f(x,\theta)=0 {\rm ~if~}
x_1=\cdots=x_{k+1} \bigr\} \, .
\end{equation}
{From} our earlier discussion about the
clustering property of the Jack superpolynomials, it follows
that ${\mathcal I}_N^{(k,r)} 
\subseteq {\mathcal F}_N^{(k)}$ 
for all $r$.  We conjecture that, as in the non-supersymmetric case, 
${\mathcal I}_N^{(k,2)}={\mathcal F}_N^{(k)}$.
In other words, the Jack superpolynomials in $N$ variables whose superpartitions are $(k,2,N)$-admissible would furnish a linear basis of the ring
$\mathcal F_N^{(k)}$. This conjecture 
clearly points toward superconformal field theoretical applications of the Jack superpolynomials, for instance for the representation theory of the superspace version of $\widehat{sl}(2)_k$ model or the related $\mathcal Z_k$ parafermionic theory \cite{PLL}.

\subsection{Cochain interpretation of the ideals}
Among all the elements of the super-Virasoro algebra, the generators $G_{1/2}$ and $G_{-1/2}$ are particularly interesting.  They can be written as a linear combination  of the operators 
\begin{equation} q=\sum_i\theta_i\partial_{x_i}\quad \text{and}\quad \tilde q=\sum_i\theta_i x_i\partial_{x_i}\end{equation}
together with their respective adjoints $q^\perp$ and $\tilde q^\perp$, whose precise definition is not relevant for the moment (they correspond respectively
to the operators $q^\perp$ and $Q^\perp$ defined in \eqref{homo}).  
The important point here is that both $q$ and $\tilde q$ can be interpreted as exterior derivatives. 

It is indeed well known that the Hamiltonian operators in supersymmetric quantum mechanics are equivalent to the Laplace-Beltrami operators in classical differential geometry (see for example \cite{FGR}).  From this point of view, if the variables $x_j$ are real, a superpolynomial $f_{i_1,\ldots,i_m}(x)\theta_{i_1}\cdots \theta_{i_m}$  and the left-multiplication by $\theta_j$ are respectively  interpreted as the $m$-form field $f_{i_1,\ldots,i_m}(x)dx_{i_1}\wedge\cdots \wedge dx_{i_m}$ and the exterior product $dx_j\wedge\cdot$. The case where the $x_j$'s belong to the unit circle $\mathbb{T}$ in $\mathbb{C}$ is similar except that the $\theta_i$'s behave as the 1-forms $(\mathrm{i}x_j)^{-1}dx_j$.   Thus, the operator $q$ (resp., $\tilde q$) is equivalent to the exterior derivative 
$d:\, \Omega^m(M)\to \Omega^{m+1}(M)$ when  $M$ is equal to $\mathbb{R}^N$ (resp., $\mathbb{T}^N$).  

Now let us decompose the ring of symmetric superpolynomials as ${\bf \Lambda}_N^{\theta}=\oplus_{m=0}^N  {\bf \Lambda}_{N,m}^{\theta}$, where ${\bf \Lambda}_{N,m}^{\theta}$ denotes the sets of homogeneous elements of ${\bf \Lambda}_N^{\theta}$  whose degree in $\theta$ is equal to $m$, so each element   ${\bf \Lambda}_{N,m}^{\theta}$ can be thought as a {\it symmetric $m$-form}.  Note that ${\bf \Lambda}_{N,0}$ is equal to the usual ring ${\bf \Lambda}_N$ of symmetric polynomials.  All these observations can be summarized by the following  exact cochain complex:    
\begin{equation}
0 \xrightarrow{\quad \quad}
 { \mathbb C(\alpha)}\xrightarrow{\quad i\quad} {\bf \Lambda}_{N} \xrightarrow{\quad d \quad}  {\bf \Lambda}_{N,1}^{\theta}\xrightarrow{\quad d \quad } \cdots \xrightarrow{\quad d \quad } {\bf \Lambda}_{N,N}^{\theta}\xrightarrow{\quad d \quad}0,    
\end{equation}   
where $d$ is either equal to $q$ or $\tilde{q}$ {and $i$ refers to the inclusion}.  In both cases, the
exactness follows easily from the anticommutation relations between $d$ and
$d^\perp$.

Finally, given that the ideal $\mathcal{ {I}}_{N}^{(k,r)}$ is also naturally graded with respect to the degree of its elements in $\theta$,  we can write  
$\mathcal{ {I}}_{N}^{(k,r)}=\sum_{m= 0}^N {I}_{N,m}^{(k,r)}$, 
 where 
\begin{equation}
 {I}_{N,m}^{(k,r)} = {\rm Span}_{\mathbb C} \bigl\{ 
P^{(\ad)}_{\Lambda} \, \big| \, 
\Lambda~{\rm is}~(k,r,N){\text-}{\rm admissible}\text{ and contains exactly $m$ circles} \bigl\}.
\end{equation}
We stress that ${I}_{N,0}^{(k,r)}$  is nothing but the ideal ${I}_{N}^{(k,r)}$ introduced in \cite{FJMM}.
Then, as consequence of the stability of the ideal ${\mathcal I}_N^{(k,r)}$ under the action of the super-Virasoro generators $G_{1/2}$ and $G_{-1/2}$, we obtain 
the following exact cochain {complex}:   
\begin{equation}
0 \xrightarrow{\quad \quad}
 { \mathbb C} \xrightarrow{\quad i\quad} {I}_{N}^{(k,r)} \xrightarrow{\quad d\quad}  {I}_{N,1}^{(k,r)} \xrightarrow{\quad d \quad} \cdots \xrightarrow{\quad d\quad } {I}_{N,N}^{(k,r)} \xrightarrow{\quad d \quad}0 \, ,
\end{equation}
where $d$ is again either equal to $q$ or $\tilde{q}$.

\subsection{Consequences for the Jack polynomials with prescribed symmetry}
Let $\La$ be a superpartition with $m$ circles.  Take all the parts of $\La$ that are circled and order them in decreasing order. This gives a partition  $\La^a=(\La_1,\ldots,\Lambda_m)$ with strictly decreasing parts.  
Now take the parts of $\La$ that are not circled and form a partition $\La^s=(\La_{m+1},\ldots,\La_{N})$.     This allows us to write the Jack superpolynomial labeled by $\La$ as follows: 
\begin{equation} \label{prescribed}
P^{(\a)}_\La(x,\ta)= \ta_1\cdots \ta_m \S^{(\alpha)}_{\La^a,\La^s}(x) +\text{permutations},
\end{equation}
where $\S^{(\alpha)}_{\La^a,\La^s}(x)$ is known as a Jack polynomial with prescribed (or mixed) symmetry \cite{BDF,Baker}, since it is antisymmetric in the variables $x_1,\cdots x_m$ and symmetric in the remaining ones.  

  The general {conjectured} clustering property given in \eqref{newcluster} 
readily implies that the Jack polynomials with prescribed symmetry satisfy a similar property.  To be more precise, let $\La$ be a $(k,r,N)$-admissible superpartition with $m$ circles and let $\La^a,\La^s$ be its associated pair of  partitions as described above.  Then  for all $N\geq k+m+1$    \begin{equation}
  \label{clusterprescribed} \S^{(\ad)}_{\La^a,\La^s}
(x_1,\dots,x_{N-k-1},x',\overbrace{x,\ldots,x}^{k \text{ times}}) 
\quad \text{vanishes as}\quad (x-x')^{r} 
\,  \quad \text{when} \quad x\to x'
.
\end{equation}

Now consider the Jack polynomial with prescribed symmetry
{divided by the Vandermonde determinant $\Delta(x_1,\cdots,x_m)$, that is:}
\begin{equation}
\P_{\La,m}^{(\ad)}=\S^{(\ad)}_{\La^a,\La^s}/\Delta(x_1,\cdots,x_m) \, .
\end{equation}
{Because $\S^{(\ad)}_{\La^a,\La^s}$ is antisymmetric in the first $m$ variables,}
$\P_{\La,m}^{(\ad)}$ is  symmetric
in the variables $x_1,\dots,x_m$ and in the variables $x_{m+1},\dots,x_N$.
When the fermionic degree $m$ is smaller than $r$, we actually obtain 
a stronger clustering property.   We show that if
$r>m$ and $\Lambda$ is $(k,r,N)$-admissible then
$\P_{\La,m}^{(\ad)}$ vanishes whenever any $k+1$ of the variables 
$x_1,\dots,x_N$ are equal.
Here again the approach to zero of the vanishing Jack polynomial
with prescribed symmetry can 
be made precise.
Let $x_{i_1}=\cdots=x_{i_k}=x$ and let $x'$ be a variable that does 
not belong to $\{x_{i_1},\dots,x_{i_k} \}$. Let also 
$a$ be the number of elements in 
$\{x_{i_1},\dots,x_{i_k},x' \} \cap \{x_1,\cdots, x_m\}$.  
We conjecture that
\begin{equation}
 (x-x')^{r-a} \quad \text{divides} \quad
\P_{\La,m}^{(\ad)}=\frac{\S^{(\ad)}_{\La^a,\La^s}}{\Delta(x_1,\cdots,x_m)} \, . \end{equation} 
Note that the conjecture still seems to be valid if $r\leq m$, {although then}
the clustering property can be lost since $\P_{\La,m}^{(\ad)}$ 
does not necessarily vanish when $k+1$ variables are equal ($a$ can be equal to $r$).  Note also that for almost all admissible superpartitions, the multiplicity of the factor $(x-x')$ is exactly equal to {$r-a$}, but as for the non-superspace case, there is no general rule for determining which superpartitions lead to multiplicity strictly greater than {$r-a$}.   However, it seems possible to predict the exact multiplicity of the factor $(x-x')$  by further restricting the set of admissible superpartitions.    As before, let $\La$ be a $(k,r,N)$-admissible superpartition with $k+1$ and $r-1$ being coprime.  Assume moreover that $r>m>0$  and $N\geq k+m+1$.  Then we conjecture that the multiplicity of the factor $(x-x')$ in $\P_{\La,m}^{(\ad)}$ is exactly equal to $r-a$.\footnote{Again this bound on $N$ does not provide a fine demarcation of the exceptional cases. As indicated after Conjecture \ref{conjecrma},  with $ n\leq10,\,N\leq 8,\,k,r\leq 6$ and $m>0$, there are 17914 admissible superpartitions and 489 exceptions. However,  only 2189 admissible superpartitions do satisfy the bounds $N\geq k+m+1 $ and $ r>m$.}

\subsection{Relation with conformal field theory}

{Although direct applications of Jack superpolynomials have not been nailed down yet, $(k,2,N)$-admissible superpartitions have appeared in disguised form
in the description of the basis of states in the superconformal minimal models $\mathcal{SM}(2,4k+4)$ \cite{Mel,FJM}. (Analogously,  $(k,2,N)$-admissible partitions describe the states of the minimal models $\mathcal{M}(2,2k+3)$ \cite{FNO}.) 

Recall that the states in superconformal highest-weight modules are generated by the super-Virasoro modes $L_n$ and $G_r$, with $n,r<0$, where $r\in\Z +\frac12$ in the Neveu-Schwarz sector, while it  is integer in the Ramond sector. The basis of states in the Verma module (free of singular vectors) is
\begin{equation}L_{-n_1}\cdots L_{-n_p}G_{-r_1}\cdots G_{-r_m}|\text{hws}\rangle,\qquad n_i\geq n_{i+1}\geq 1, \quad r_i>r_{i+1}>0,
\end{equation}
for all values of $p$ and $m$.
Such operator strings can be mixed and reordered in decreasing values of the mode indices $n_i,\,r_j$. The resulting sequence of indices is related to a superpartition. In the Ramond sector, we take the convention that if a $G$ mode and some $L$ modes have equal indices, the $G$ is placed at the left. Circling the entries $r_i$, the result  is a superpartition $\La$ where the $m$ parts  
for which $\La_i^\cd-\La_i^*=1$ have been circled (and, by construction, with no vanishing part).
In the Neveu-Schwarz sector, the $r_i$ entries are first reduced by $\frac12$ and then circled. This again leads to a  superpartition, this one allowing a 0 circled-entry.

Now, the basis of states for the irreducible modules in the   $\mathcal{SM}(2,4k+4)$ model (up to a boundary condition on the maximal number of parts $\leq 1$ that characterizes the highest-weight state) is precisely given by admissible superpartitions with $r=2$ (cf. Section 4.2 in \cite{Mel} for the Neveu-Schwarz sector and the appendix A of \cite{FJM} for a sketch  of the proof that applies to both sectors).}

 \subsection{Organization of the article}
 Basic definitions and relevant properties of the  Jack superpolynomials are reviewed in Section \ref{SsJ}.  Jack superpolynomials are shown to be eigenfunctions of a pair of Sekiguchi-type operators in Section~\ref{Sseki}.
The admissibility conditions for superpartitions are introduced in  
Section~\ref{Sregu}, along with 
the proofs of the
regularity of the $P_\La^{(\ad)}$'s when 
$\La$ is admissible or almost admissible (to be defined later on).
We introduce certain super Lie algebras (including the negative half of 
the super-Virasoro algebra) in 
Section \ref{Lie}.  In Section~\ref{ideal}
we obtain the explicit 
action of some elements of the super Lie algebras
on the Jack superpolynomials,
and introduce the
differential ideals $\mathcal I_N^{(k,r)}$. 
The vanishing of the admissible
Jack polynomials when $k+1$ commuting variables are identified is
shown in Section \ref{les0}, along with a conjecture on the
clustering properties of Jack superpolynomials. Finally, the appendix
contains a few technical proofs which we felt were not suited for the
main body of the article.

We should stress that our proofs are in general modeled on those of \cite{FJMM}.
Nevertheless, as was to be expected, 
the presence of anticommuting variables makes  
most of the proofs much more involved.

\section{Jack superpolynomials:  definitions and basic properties}
\label{SsJ}
\subsection{Polynomials in superspace}
Polynomials in superspace (or  superpolynomials) are 
 polynomials in the usual commuting $N$ variables $x_1,\cdots ,x_N$  and the $N$ anticommuting variables $\ta_1,\cdots,\ta_N$. They are said to be
  symmetric if they are invariant with respect to the interchange of $(x_i,\ta_i)\lrw (x_j,\ta_j)$ for any $i,j$ 
\cite{DLM1}.

The symmetry requirement can be phrased in terms of exchange operations.
For any $\sigma \in S_N$, we define
\begin{equation}\label{defKk}
\mathcal{K}_{\sigma}=\kappa_{\sigma}K_{\sigma},
\qquad\text{where}\quad\begin{cases}
&K_{\sigma}\,:\, (x_1,\dots,x_N) \mapsto (x_{\sigma(1)}, \dots,
x_{\sigma(N)})
\\
&\kappa_{\sigma}\,\;:\, (\theta_1,\dots,\theta_N) \mapsto (\theta_{\sigma(1)}, \dots,
\theta_{\sigma(N)})
.\end{cases}\end{equation}
Then a polynomial in superspace $P(x;\theta)$, with $x=(x_1,\ldots,x_N)$ and $\theta=(\theta_1,\ldots,\theta_N)$, is symmetric when
\begin{equation}
\mathcal{K}_{\sigma}P(x;\theta)=P(x;\theta) \qquad {\rm for~all~} \sigma \in S_N 
\, .
\end{equation}

\subsection{Superpartitions: diagrammatic representation and the dominance ordering} \label{sect22}
Let us first recall some definitions 
related to partitions \cite{Mac}.
A partition $\lambda=(\lambda_1,\lambda_2,\dots)$ of degree $|\lambda|=d$
is a vector of non-negative integers such that
$\lambda_i \geq \lambda_{i+1}$ for $i=1,2,\dots$ and such that
$\sum_i \lambda_i=d$.  The length $\ell(\lambda)$
of $\lambda$ is the number of non-zero entries of $\lambda$.
Each partition $\lambda$ has an associated Ferrers diagram
with $\lambda_i$ lattice squares in the $i^{th}$ row,
from the top to bottom. Any lattice square in the Ferrers diagram
is called a cell, where the cell $(i,j)$ is in the $i$th row and $j$th
column of the diagram.  Given  a partition $\lambda$, its
conjugate $\lambda'$ is the diagram
obtained by reflecting  $\lambda$ about the main diagonal.
Given a cell $s=(i,j)$ in $\lambda$, we let
\begin{equation} \label{arml}
a_{\lambda}(s)=\lambda_i-j\, , \qquad   a'_{\lambda}(s)=j-1 \, , \qquad
l_{\lambda}(s)=\lambda_j'-i \, ,   \quad  {\rm and} \quad l_{\lambda}'(s)=i-1  \, .
\end{equation}
The quantities $a_{\lambda}(s),a_{\lambda}'(s),l_{\lambda}(s)$ and $l_{\lambda}'(s)$
are respectively called the arm-length, arm-colength, leg-length and
leg-colength.  For instance, if $\lambda=(8,5,5,3,1)$
\begin{equation}
{\tableau[scY]{&&&&&&&\\&&&&\\&&&&\\& &  \cr \cr }}
\end{equation}
we have that $a_{\lambda}(3,2)=3,a_{\lambda}'(3,2)=1,
l_{\lambda}(3,2)=1$ and $l_{\lambda}'(3,2)=2$.
We say that the diagram $\mu$ is contained in $\la$, denoted
$\mu\subseteq \la$, if $\mu_i\leq \la_i$ for all $i$.  Finally,
$\la/\mu$ is a horizontal (resp. vertical) $n$-strip if $\mu \subseteq \lambda$, $|\lambda|-|\mu|=n$,
and the skew diagram $\la/\mu$ does not have two cells in the same column
(resp. row).

As mentioned in the introduction, a 
superpartition $\Lambda$
of fermionic degree $m$ is
a pair of partitions $\La=(\La^\cd, \La^*)$ 
such that the skew-diagram 
$\La^\cd/\La^*$ is both a vertical and  a horizontal $m$-strip.  
 Such a superpartition  is said to have degree  $(n|m)$ if $\sum_i \La_i^* =n$.  We refer to $n$ as
the total degree 
of $\La$. The length $\ell(\Lambda)$ of a superpartition $\Lambda$
is the length of the partition $\Lambda^\cd$.

A diagrammatic representation of $\La$ is given by 
the Ferrers diagram of $\La^*$  with
circles added in the cells corresponding to $\Lambda^\cd/\Lambda^*$.
 The  conjugate of a
superpartition $\La=(\Lambda^\cd, \Lambda^*)$ is simply given by
$\La'=({\Lambda^\cd}', {\Lambda^*}')$. Hence, as in the case of partitions,
$\Lambda'$ is
the superpartition whose diagram is obtained by interchanging the 
rows and the columns in the diagram of $\La$.  
For instance, as seen in the diagram that follows, the
conjugate of $\La=(\La^\cd,\La^*)=
\bigl((6,4,4,3,2,1),(5,4,3,3,1)\bigr)$ is
$\La'=\bigl((6,5,4,3,1,1),(5,4,4,2,1)\bigr)$.
\begin{equation} \label{exdia}
\La={\tableau[scY]{&&&&&\bl\tcercle{}\\&&&\\&&&\bl\tcercle{}\\&&\\&\bl\tcercle{}\\ \bl\tcercle{}}}\,  \quad \iff \quad 
\La'={\tableau[scY]{&&&&&\bl\tcercle{}\\&&&&\bl\tcercle{}\\&&&\\&&\bl\tcercle{}\\&\bl\\ \bl\tcercle{}
}}
\end{equation}

We will occasionally need the original definition of a superpartition 
(see \cite{DLM1}): a superpartition $\La$ of length $\ell$ is
a pair of partitions 
\begin{equation}\label{sppa}
\La=(\La^{a};\La^{s})=(\La_1,\ldots,\La_m;\La_{m+1},\ldots,\La_\ell),
\end{equation}
such that
\begin{equation}\label{sppb}
\La_1>\ldots>\La_m\geq0 \quad  \text{ and}
\quad \La_{m+1}\geq \La_{m+2} \geq \cdots \geq
\La_\ell > 0 \, .\end{equation}
\noindent Note that $m$ corresponds in this definition to
the fermionic degree of $\Lambda$. 
The equivalence between the two definitions is quite obvious: {the parts of $\La$ that belong to $\La^a$ are the parts of $\La^*$ such that
$\La^\circledast_k-\La^*_k=1$.}

Finally, the dominance order on
 superpartitions  is defined as follows \cite{DLMeva}:
\begin{equation} \label{eqorder1}
 \Omega\leq\Lambda \quad \text{iff}
 \quad \Omega^* \leq \Lambda^*\quad \text{and}\quad
\Omega^{\circledast} \leq  \Lambda^{\circledast} , ,
\end{equation}
where the order on partitions is the usual dominance ordering:
\begin{equation}
\lambda \geq \mu \quad \iff \quad \sum_{i} \lambda_i =\sum_i \mu_i
\quad \text{and}\quad \lambda_1+ \cdots +\lambda_i \geq
\mu_1+ \cdots +\mu_i\text {~for all~} i
\end{equation}

\subsection{Monomial polynomials in superspace}
The simplest example of a symmetric superpolynomial is the super-version of the monomial polynomials.  Let $\Lambda=(\Lambda^a;\Lambda^s)$ be as in
\eqref{sppa}.  We then define
\begin{equation}
m_\La(x,\theta)={\sum_{\sigma \in S_N} }' \mathcal{K}_\sigma \left(\theta_{1}
\cdots\theta_{m} x_1^{\Lambda_1} \cdots x_N^{\Lambda_N} \right),
\end{equation} 
where the prime indicates that the sum is taken over
distinct permutations of $\theta_{1}
\cdots\theta_{m} x_1^{\Lambda_1} \cdots x_N^{\Lambda_N}$, {with the understanding that $\La_{\ell+1}=\cdots =\La_N=0$. }
This expression illustrates a generic property of  symmetric superpolynomials: because the $\ta_i$ are anticommuting, the polynomial in $x$ whose coefficient
is $\theta_1\cdots\theta_m$ is antisymmetric in the variables $x_1,\cdots, x_m$ and symmetric in the remaining ones.
This mixed symmetry of each component of $m_\La$ is clearly seen in the following example:
\begin{align}
m_{(1,0;1,1)}(x;\theta)=\;&\ta_1\ta_2(x_{1}-x_2)x_3x_4+\ta_1\ta_3(x_{1}-x_3)x_2x_4+\ta_1\ta_4(x_{1}-x_4)x_2x_3\nonumber\\+\,&\ta_2\ta_3(x_{2}-x_3)x_1x_4+\ta_2\ta_4(x_{2}-x_4)x_1x_3+\ta_3\ta_4(x_{3}-x_4)x_1x_2.
\end{align}
This example also illustrates the rationale for qualifying the number of
circles in $\Lambda$ as the fermionic degree: it is the number of 
$\ta$ factors in each constituent of the superpolynomial.

\subsection{Jack superpolynomials: eigenfunction characterization}
\label{charaphys}
The Jack superpolynomials $P_\La^{(\alpha)}$ can be characterized by the following
two conditions: 
\begin{align} \label{Ptriangular}
& (1)\quad P_\La^{(\aa)} =m_\La+\sum_{\Om<\La}
c_{\La\Om}(\alpha)\,m_\Om \, , \qquad \text{where} \quad  
c_{\Lambda \Omega}(\alpha) \in \mathbb Q (\alpha)
\\
\label{Dval}
& (2)\quad D\,P_\La^{(\alpha)}=
e_{\La^*}(\alpha)\,P_\La^{(\aa)} \qquad \text{and}\qquad \Delta\,
P_\La^{(\alpha)}=\tilde e_{\La}(\alpha)\,P_\La^{(\alpha)}\, ,
\end{align}
where the operators
$D$ and $\Delta$ are given by
\begin{align} \label{eqD}
 &D= \frac{1}{2}\sum_{i=1}^N  \alpha x_i^2\partial_{x_i}^2
+\sum_{1 \leq i\neq j \leq N}\frac{x_ix_j}{x_i-x_j}\left(\partial_{x_i}-\frac{\theta_i-\theta_j}{x_i-x_j}\partial_{\theta_i}\right), \\ \label{eqDelta}
& \Delta= \sum_{i=1}^N \alpha x_i\theta_i\partial_{x_i}\partial_{\theta_i}+
\sum_{1 \leq i\neq j \leq N}
\frac{x_i\theta_j+x_j\theta_i}{x_i-x_j}\partial_{\theta_i}, 
\end{align}
and their eigenvalues by
\begin{equation}
e_{\La^*}(\alpha)= \alpha b({\La^*}')-b(\La^*)\,  \quad {\rm and} \quad
\tilde e_\La(\alpha)=\alpha|\La^a|-|{\La'}^a| \, ,
\end{equation}
with $b(\la)=\sum_{i=1}^{\ell(\la)}(i-1)\la_i$.

The Jack superpolynomials of lowest degrees are tabulated in \cite[Appendix A]{DLM3}. In the absence of anti-commuting variables, the two conditions characterize the ordinary Jack polynomials \cite{Stan}.  Note that the operator $D$
is related to the operator $\mathscr{H}$ of \eqref{susyH} through the relation
\begin{equation}
2\alpha \Psi_0^{-1}\,( \mathscr{H}-
\mathscr{E}_0)\,\Psi_0=2D+(2N-2+{\alpha}) \sum_i
x_i\partial_{x_i}, 
\end{equation}
where $\Psi_0=\prod_{i<j}|x_i-x_j|^{1/\alpha}$, and $\mathscr{E}_0=N(2N-1)(N-1)/12\alpha^2$.
The operator $\Delta$ is related in a similar way  to a 
conserved operator of  the supersymmetric version of the trigonometric Calogero-Moser-Sutherland model. 
Actually, for this model there are $4N$ conserved quantities: $2N$ mutually commuting bosonic quantities, $H_n$ and $I_n$ for $n=1,\cdots, N$, and $2N$ non-commuting fermionic quantities
\cite{DLM1,DLM3,DLM7}. Up to conjugation by the ground-state
 wavefunction $\Psi_0$, the Jack superpolynomials are eigenfunctions of the $2N$ bosonic operators. 
The necessity of a double eigenvalue problem is clear from the fact that $D$ has degenerate eigenvalues, the  latter being insensitive to the `fermionic nature of the parts', that is, whether a row of $\La$ ends with a circle or not. 
This degeneracy is lifted by $\Delta$, whose 
eigenvalue only depends upon $\Lambda^a$, that is, upon the rows of 
$\Lambda$ that ends with a circle.

\subsection{Normalization, evaluation formula and duality}
When the number of variables is infinite, 
there is a natural scalar product 
on the space of symmetric superfunctions.  Let
\begin{equation}\label{spower}
p_\La=\tilde{p}_{\La_1}\cdots\tilde{p}_{\La_m}p_{\La_{m+1}}\cdots p_{\La_\ell},\qquad\text{where}\quad \tilde{p}_n=\sum_i\theta_ix_i^n\quad\text{and}\quad p_n=
\sum_ix_i^n \, .
\end{equation}  
We define
\begin{equation} \label{scap} \LL \, 
{p_\La} \, | \, {p_\Om }\, \RR_\alpha=(-1)^{\binom{m}2}\, \alpha^{{\ell}(\La)}\, z_{\La^s}
\delta_{\La,\Om}\,, \end{equation}
where $z_{\La^s} $ is given by
\begin{equation}  \label{zlam}
z_{\La^s}=\prod_{i \geq 1} i^{n_{\La^s}(i)} {n_{\La^s}(i)!}\, ,
\end{equation}
with $n_{\La^s}(i)$ the number of parts in $\La^s$ equal to $i$.

The expression of the norm of a Jack superpolynomial involves basic diagram data. Recall that for each cell $s=(i,j)\in\la$  we defined in \eqref{arml}
the arm-length  $a_\la(s)$ and  the  leg-length $l_\la(s)$ of the cell.
We define two $\a$-deformations of the hook length of a square in a superpartition $\La$,  the upper and  lower-hook lengths respectively given by \cite{DLMeva}:
\begin{align}\label{2hook}
&h^{(\alpha)}_\La(s)={l}_{\Lambda^{\circledast}}(s)+\alpha(a_{\La^*}(s)+1),\nonumber\\
&h^\La_{(\alpha)}(s)=l_{\La^*}(s)+1+\alpha\,{a}_{\Lambda^{\circledast}}(s).
\end{align} Note that these  generalize the two hook-lengths of \cite{Stan}.

Let $\B\La$ (the bosonic content of $\La$) be the set of squares in the diagram of  $\La$ that do not lie
at the {intersection of} a row containing a circle {and} a
column containing a circle. The  expression for the norm of a Jack superpolynomial reads:
\begin{align}  \label{norm}
 \|P_\La\|^2&:=(-1)^{\binom{m}{2}} \,
\LL P_\La^{(\alpha)}|P_\La^{(\alpha)}\RR=\prod_{s\in\B\La} \frac{h^{(\alpha)}_\La(s)}{h^\La_{(\alpha)}(s)}\quad
 \left( =
\prod_{s\in\La} \frac{h^{(\alpha)}_\La(s)}{h^\La_{(\alpha)}(s)} \right).
 \end{align} 
The expression in parenthesis follows from the equality of the two hook lengths for the squares in $\La/\B\La$; this alternative form will be useful below.

We now introduce the so-called evaluation formula for Jack superpolynomials
\cite{DLMeva}.
Let
\begin{equation}\label{defPs}
\P_{\Lambda,m}(x):=\frac{P_\Lambda^{(\alpha)}\big |_{\ta_1\cdots\ta_m}}{\Delta(x_1,\cdots,x_m)}.\end{equation}
The label $m$ reminds that $\P$ is symmetric with respect to to the first $m$ variables $(x_1,\cdots, x_m)$ and also symmetric with respect to the $N-m$ remaining ones.  Note that $\P_{\Lambda,m}(x)=
\S_{\Lambda^a,\Lambda^s}/\Delta(x_1,\cdots,x_m)$,  where $\S_{\Lambda^a,\Lambda^s}$
is the Jack polynomial with prescribed symmetry introduced 
in \eqref{prescribed}.
 The evaluation formula is most simply described in terms of the
skew diagram $\S\La=\La^\cd/(m,m-1,\dots,1)$, where as usual $m$ is the
fermionic sector of the superpartition $\Lambda$. We have
\begin{equation}\label{spe}
\P_{\Lambda,m}(x_1=\cdots=x_N=1)=\frac1{v_\La(\a)}  \prod_{s \in \S\La} 
b^{(\a,N)}_\La(s),
\end{equation}  
where  
\begin{equation}
b_\La^{(\alpha,N)}(s) =N-(i-1)+\alpha(j-1) \qquad {\rm and} \qquad
v_{\La}(\a) =  \prod_{s\in\B\La} {h^\La_{(\alpha)}(s)}\, .
\end{equation}
Note that  when $m=0$, the evaluation  formula for Jack superpolynomials
reduces to usual one (cf. \cite{Stan})
\begin{equation}\label{spec}
P_\la^{(\alpha)}(1,\cdots,1)= \prod_{(i,j) \in\la}\frac{N-(i-1) +\a(j-1)}{\la_j'-(i-1)+\a(\la_i-j)}\,.
\end{equation}

We conclude this section by mentioning  a
useful duality property of $P_\La^{(\alpha)}$.
Let $\hat \omega_{\alpha}$ stand for the endomorphism of the space
of symmetric polynomials in superspace
defined on the power
sums as
\begin{equation}
\hat \omega_{\alpha} (p_n) = (-1)^{n-1} \alpha \, p_n \qquad {\rm and}
\qquad \hat \omega_{\alpha} (\tilde p_n) = (-1)^{n} \alpha \, \tilde p_n.
\end{equation}
We have
\begin{equation}\label{dual}
\hat \omega_{\alpha} (P_{\La}^{(\alpha)})
= (-1)^{\binom{m}{2}}
\|P_\Lambda \|^2\, P_{\La'}^{(1/\alpha)} .
\end{equation}

\subsection{Non-symmetric Jack polynomials}

The characterization of the Jack superpolynomials that  will be most useful
in this article involves non-symmetric Jack polynomials.
 We collect here
the most relevant properties of the non-symmetric Jack polynomials.
Their relation to the Jack superpolynomials will be presented in the following subsection.

We consider the Dunkl-Cherednik operators (see \cite{Ber,Knop,Op})
\begin{equation}
{\mathcal D}_i = \alpha x_i {\partial_{x_i}} + \sum_{1\leq j<i} \frac{x_i}{x_i-x_j} (1-K_{i,j}) + \sum_{i<j \leq N} \frac{x_j}{x_i-x_j} (1-K_{i,j}) + 1-i \, ,
\end{equation}
where $K_{i,j}$ is the operator that exchanges the variables $x_i$ and $x_j$.
The $\mathcal D_i$'s are mutually commuting  $[\mathcal D_i,\mathcal D_j]=0$
and obey the relations
\begin{equation} \label{relhecke}
\mathcal D_i K_{i,i+1} - K_{i,i+1} \mathcal D_{i+1} = 1 \qquad \text{and} \qquad
 \mathcal D_i K_{j,j+1} = K_{j,j+1} \mathcal D_{i} \quad {\text{if} } \quad
 i \neq j,j+1 \, .
\end{equation}

The non-symmetric Jack polynomial, $E_{\eta}(x;\alpha)$, indexed
by a composition with $N$ parts (some of them possibly equal to zero),
can be characterized as the unique
polynomial, whose coefficient of
$x^{\eta}$ is equal to 1, such that
\begin{equation}
{\mathcal D}_i \, E_{\eta}(x;\alpha) =  \bar \eta_i \, E_{\eta}(x;\alpha) \qquad
 \forall i=1,\dots,N,
\end{equation}
where the eigenvalue $\bar \eta_i$
is given by
\begin{equation}\label{etab}
\bar \eta_i = \alpha \eta_i - \#\{ k < i \, | \, \eta_k \geq \eta_i \} -
 \#\{ k > i \, | \, \eta_k > \eta_i \}.
\end{equation}

As is the case for partitions, the diagram of
 a composition $\eta$ with $N$ parts
is the set of cells $(i,j)$
such that
$1 \leq i \leq N$ and $1 \leq j \leq \eta_i$.  For instance,
if $\eta=(0,1,3,0,0,6,2,5)$, the diagram of $\eta$ is
\begin{equation}
{\tableau[scY]
{\bl \bullet \\  \\ && \\ \bl \bullet \\ \bl \bullet \\  & & & & &  \\
 & \\
& & & &  \\ }}
\end{equation}
where a $\bullet$ represents an entry of length zero.   For each cell
$s=(i,j) \in \eta$, we define the following 
hook-length $d_{\eta}(s)$ \cite{Knop}:
\begin{equation}\label{ald}
d_{\eta}(s)  =  \alpha(  \eta_i -j +1) + \# \{ k<i \, | \, j \leq \eta_k+1 \leq \eta_i  \} + \# \{ k>i \, | \, j \leq \eta_k \leq \eta_i  \} +1.
\end{equation} 

\subsection{Jack superpolynomials: symmetrization construction}

 Given a superpartition $\La=(\La^a;\La^s)$ of the form \eqref{sppa},
we define the composition $\tilde \Lambda$ as
\begin{equation}
\tilde \Lambda := (\Lambda_m,\dots,\Lambda_1,\Lambda_N,\dots,\Lambda_{m+1}) \, .
\end{equation}
It was shown  in \cite[Section 9]{DLM1} that the Jack superpolynomials
defined in Section~\ref{charaphys} can
be obtained from the
non-symmetric
Jack polynomials through the following relation:
\begin{equation} \label{jackinnonsym}
{P}_{\Lambda}^{(\alpha)} = \frac{(-1)^{\binom{m}{2}}}{f_{\Lambda^s}}
\sum_{w \in S_N} {\mathcal K}_{w} \, \theta_1 \cdots \theta_m \,
E_{\tilde \Lambda}(x;\alpha) \, ,
\end{equation}
where $f_{\Lambda^s}$ stands for
\begin{equation}  f_{\La^s}=\prod_i n_{\La^s}(i)!, \end{equation}
with $n_{\La^s}(i)$ being the multiplicity of $i$ in $\La^s$ defined in 
\eqref{zlam}
and $\mathcal K_w$ is defined in \eqref{defKk}.

Note that the composition $\tilde \Lambda$ is of a very special form.  Its
first $m$ rows (resp. last $N-m$ rows)
are strictly increasing (resp. weakly increasing).
Diagrammatically, it is made of two partitions (the first one of
which without repeated parts)
drawn in the French
notation (largest row in the bottom).
For instance if $\Lambda=(3,1,0;5,3,3,0,0)$, we have
$\tilde \Lambda=(0,1,3,0,0,3,3,5)$ whose diagram is given by
\begin{equation}
{\tableau[scY]
{\bl \bullet \\  \\ && \\ \bl \bullet \\ \bl \bullet \\  & &  \\
 & & \\
& & & &  \\ }}
\end{equation}

\section{The Sekiguchi operator and its superspace extension}
\label{Sseki}
A key technical tool in the study of the vanishing conditions of the ordinary Jack polynomials 
in \cite{FJMM} is their characterization  as eigenfunctions of the Sekiguchi operators. 

This construction is closely related to the description of $P_\la^{(\alpha)}$ 
as the symmetrization of a non-symmetric Jack polynomial, which can be seen as the special case  $m=0$ of \eqref{jackinnonsym}. This construction is also rooted in the integrability of the underlying eigenvalue problem, namely, the Calogero-Moser-Sutherland model: the Sekiguchi operator is a generating function of 
$N$ independent conserved quantities in involution.

Since the Jack superpolynomials are eigenfunctions of two basic operators, or more generally, two independent mutually commuting sets of $N$ operators, we need here two  Sekiguchi-type operators. Their introduction (which is new) is the subject of this section. The results that follow rely heavily on the description of $P_\La^{(\alpha)}$ given in  \eqref{jackinnonsym}.

Our pair of Sekiguchi operators is composed of the usual Sekiguchi operator
\begin{equation}
S(u,\alpha) = \prod_{i=1}^N (\mathcal D_i +u),
\end{equation}
together with its supersymmetric counterpart
\begin{equation}
\tilde S(u,\alpha) = \sum_{m=0}^N  \frac{1}{m! (N-m)!}
\sum_{\sigma \in S_N} \mathcal K_{\sigma}
\left(\prod_{i=1}^m (\mathcal D_i+\alpha +u) 
 \prod_{j=m+1}^N
(\mathcal D_j +u)\right) \pi_{1,\dots,m}\, ,
\end{equation}
where
\begin{equation}
\pi_{1,\dots,m}  = \prod_{i=1}^m \theta_i \partial_{\theta_i}  \prod_{j=m+1}^N (1-\theta_j \partial_{\theta_j}) 
\end{equation}

The correctness of these choices is justified by the {proposition} that follows.
But let us observe at once that it is not surprising to find 
the usual Sekiguchi operator $S(u,\a)$ among our pair of operators. 
The operators $S(u,\a)$ and $\tilde S(u,\a)$ are the generating functions of the bosonic conservation laws for the supersymmetric trigonometric Calogero-Moser-Sutherland model.
In particular, $S(u,\a)$ generates the elementary symmetric functions in the
quantities $\mathcal D_i$'s, which are functionnaly equivalent to the
quantities $\mathcal H_n$ of \cite{DLM1,DLM3}.  The connection between
$\tilde S(u,\a)$ and the quantities $\mathcal I_n$ of \cite{DLM1,DLM3}
is less obvious.
\begin{proposition} \label{LemmaSeki}
We have 
\begin{equation}
S(u,\alpha) \, P_{\Lambda}^{(\aa)} = \varepsilon_{\Lambda^*} (u,\alpha) \, 
P_{\Lambda}^{(\aa)}
\qquad {\rm and} \qquad \tilde S(u,\alpha) \, 
P_{\Lambda}^{(\aa)} = \varepsilon_{\Lambda^\circledast} (u,\alpha) \, P_{\Lambda}^{(\aa)} \, ,
\end{equation}
where $\varepsilon_{\lambda}(u,\alpha)$ is given, for
any partition $\lambda$, by
\begin{equation}
\varepsilon_{\lambda}(u,\alpha) = \prod_{i=1}^N \bigl(\alpha \lambda_i + 1-i+u \bigr)\,.
\end{equation}
\end{proposition}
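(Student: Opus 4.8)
The plan is to reduce both eigenvalue statements to the non-symmetric construction \eqref{jackinnonsym} together with the relation $\mathcal{D}_i E_\eta = \bar\eta_i E_\eta$, after which each claim becomes a purely combinatorial identity about the spectral vector $\{\bar{\tilde\Lambda}_i\}$. The underlying combinatorial fact I would establish first is this: for an arbitrary composition $\eta$, write $\bar\eta_i = \alpha\eta_i - \rho_i$ with $\rho_i = \#\{k<i : \eta_k\geq\eta_i\}+\#\{k>i:\eta_k>\eta_i\}$. Grouping the positions by value, if a value $v$ occurs at positions $p_1<\dots<p_t$ and $c=\#\{k:\eta_k>v\}$, then a short count gives $\rho_{p_s}=c+(s-1)$. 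Hence the $\rho_i$ form a permutation of $\{0,\dots,N-1\}$ that matches, value by value, the leg-colengths of the rows of the sorted partition $\eta^+$, so that as multisets $\{\bar\eta_i\}=\{\alpha\eta^+_i-(i-1)\}$. Applying this to $\eta=\tilde\Lambda$, and noting that $\tilde\Lambda$ is a rearrangement of all parts of $\Lambda$ so that $(\tilde\Lambda)^+=\Lambda^*$, yields $\prod_i(\bar{\tilde\Lambda}_i+u)=\varepsilon_{\Lambda^*}(u,\alpha)$.

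For the scalar operator $S(u,\alpha)$ I would argue as follows. Since the $\mathcal{D}_i$ mutually commute, $S(u,\alpha)=\prod_i(\mathcal{D}_i+u)$ is a symmetric polynomial in $\mathcal{D}_1,\dots,\mathcal{D}_N$; from the relations \eqref{relhecke} one checks (it suffices to verify it for $\mathcal{D}_i+\mathcal{D}_{i+1}$ and $\mathcal{D}_i\mathcal{D}_{i+1}$ against $K_{i,i+1}$) that such symmetric polynomials commute with every $K_\sigma$. As $S$ involves no anticommuting variable, it also commutes with every $\kappa_\sigma$ and with left multiplication by the $\theta_i$. Pushing $S$ through the symmetrizer in \eqref{jackinnonsym} then gives
\begin{equation}
S(u,\alpha)\,P_\Lambda^{(\alpha)} = \frac{(-1)^{\binom{m}{2}}}{f_{\Lambda^s}}\sum_{w\in S_N}\mathcal{K}_w\,\theta_1\cdots\theta_m\,S(u,\alpha)E_{\tilde\Lambda} = \varepsilon_{\Lambda^*}(u,\alpha)\,P_\Lambda^{(\alpha)},
\end{equation}
using $S(u,\alpha)E_{\tilde\Lambda}=\varepsilon_{\Lambda^*}(u,\alpha)E_{\tilde\Lambda}$ from the identity above.

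For $\tilde S(u,\alpha)$ the extra ingredients are the projector $\pi_{1,\dots,m'}$ and the shift by $\alpha$ on the first block. The projector annihilates $P_\Lambda^{(\alpha)}$ unless $m'$ equals the fermionic degree $m$, so only that term survives. Writing $A_m=\prod_{i\le m}(\mathcal{D}_i+\alpha+u)\prod_{j>m}(\mathcal{D}_j+u)$, I note that $A_m$ is separately symmetric in $\{\mathcal{D}_1,\dots,\mathcal{D}_m\}$ and in $\{\mathcal{D}_{m+1},\dots,\mathcal{D}_N\}$, hence commutes with $\mathcal{K}_w$ for every $w$ in the Young subgroup $S_m\times S_{N-m}$. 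Extracting the coefficient of $\theta_1\cdots\theta_m$ in \eqref{jackinnonsym} and using that $A_m$ acts on $E_{\tilde\Lambda}$ as the scalar $\prod_{i\le m}(\bar{\tilde\Lambda}_i+\alpha+u)\prod_{j>m}(\bar{\tilde\Lambda}_j+u)$, I obtain $A_m\,\pi_{1,\dots,m}P_\Lambda^{(\alpha)} = \varepsilon_{\Lambda^{\circledast}}(u,\alpha)\,\pi_{1,\dots,m}P_\Lambda^{(\alpha)}$, where the evaluation of the bracket rests on the refined combinatorial claim below. Finally $\tfrac{1}{m!(N-m)!}\sum_{\sigma}\mathcal{K}_\sigma\pi_{1,\dots,m}P_\Lambda^{(\alpha)}=P_\Lambda^{(\alpha)}$, because $\pi_{1,\dots,m}P_\Lambda^{(\alpha)}=\theta_1\cdots\theta_m(P_\Lambda^{(\alpha)}|_{\theta_1\cdots\theta_m})$ is invariant under the Young subgroup of order $m!(N-m)!$ (the $\theta$-antisymmetry cancels the $x$-antisymmetry of the prescribed-symmetry component), so re-symmetrizing recovers the whole superpolynomial; assembling the pieces gives $\tilde S(u,\alpha)P_\Lambda^{(\alpha)}=\varepsilon_{\Lambda^{\circledast}}(u,\alpha)P_\Lambda^{(\alpha)}$.

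The main obstacle, and the step I would treat most carefully, is the refined matching needed to identify the bracket with $\varepsilon_{\Lambda^{\circledast}}$ rather than $\varepsilon_{\Lambda^*}$: I must show that the fermionic positions $1,\dots,m$ of $\tilde\Lambda$ correspond, under the value-by-value bijection above, exactly to the \emph{circled} rows of $\Lambda^*$. This follows from two observations that I would verify explicitly. First, because the fermionic block precedes the bosonic block in $\tilde\Lambda$ and the fermionic parts $\Lambda_1>\cdots>\Lambda_m\ge 0$ are distinct, a fermionic entry of value $v$ is the \emph{first} occurrence of $v$ among all positions, whence $\rho=c$ for it. Second, by the convention that equal parts are sorted with the circled one topmost, the circled row of value $v$ in $\Lambda^*$ is precisely the row of minimal colength $c$ among rows of value $v$. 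Thus adding $\alpha$ to the first $m$ spectral values shifts by $\alpha$ exactly those eigenvalues attached to circled rows, i.e. replaces $\alpha\Lambda^*_i-(i-1)$ by $\alpha\Lambda^{\circledast}_i-(i-1)$ on those rows, which is the desired passage from $\Lambda^*$ to $\Lambda^{\circledast}$. Verifying this alignment (and the tie-breaking conventions relating the $(\Lambda^a;\Lambda^s)$ and $(\Lambda^{\circledast},\Lambda^*)$ descriptions) is where the anticommuting structure genuinely enters, and is the only place where more than a routine check is required.
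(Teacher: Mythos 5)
Your proposal is correct and follows essentially the same route as the paper's proof: commuting $S(u,\alpha)$ (resp.\ the Young-subgroup-symmetric product $A_m$) past the symmetrizer in \eqref{jackinnonsym} via \eqref{relhecke}, using the projector $\pi_{1,\dots,m}$ to isolate the fermionic degree, and then identifying the spectral vector of $E_{\tilde\Lambda}$ with $\{\alpha\Lambda^*_i+1-i\}$ through the same value-by-value bijection (your $\rho_{p_s}=c+(s-1)$ count is the paper's $j_i$), with the key refinement that the fermionic positions of $\tilde\Lambda$ land on the topmost, i.e.\ circled, rows of each value in $\Lambda^*$. The only difference is cosmetic (you extract the $\theta_1\cdots\theta_m$ component and re-symmetrize, while the paper pushes the operator through the full sum over $S_N$), so no further comment is needed.
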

\begin{proof}
It is easy to verify, using \eqref{relhecke}, 
that for all $i=1,\dots,N-1$ we have 
\begin{equation}\label{eqcommu}
K_{i,i+1}(\mathcal D_i+c)(\mathcal D_{i+1}+c)=(\mathcal D_i+c)(\mathcal D_{i+1}+c)K_{i,i+1}\, ,
\end{equation}
where $c$ is an arbitrary constant.
Hence $K_{w} S(u,\a)=S(u,\a) K_{w}$ 
for all $w \in S_N$.   And since $S(u,\a)$ does not act on the variables $\theta$, we also have ${\mathcal K}_{w} \theta_1 \cdots \theta_m
S(u,\a)=S(u,\a) {\mathcal K}_{w} \theta_1 \cdots \theta_m$ 
for all $w \in S_N$.  Therefore, using \eqref{jackinnonsym},
to prove the first statement of the proposition
we simply need to show that
\begin{equation} \label{eqseki1}
 \left(\prod_{i=1}^N (\mathcal D_i+u) \right) E_{\tilde \Lambda}(x;\alpha)
 = \varepsilon_{\Lambda^*}(u,\alpha) \, 
E_{\tilde \Lambda}(x;\alpha)  .
\end{equation}

Similarly, we will now show that to prove the second statement, it suffices
to prove that
\begin{equation} \label{eqseki2}
\left( \prod_{i=1}^m (\mathcal D_i+\alpha+u) \prod_{j=m+1}^N
(\mathcal D_j +u) \right) 
E_{\tilde \Lambda}(x;\alpha) = \varepsilon_{\Lambda^{\circledast}}(u,\alpha) \, 
E_{\tilde \Lambda}(x;\alpha)  \, .
\end{equation}
First observe that 
\begin{equation}
\pi_{1,\dots,l}\, {\mathcal K}_w \theta_1 \cdots \theta_m =
\left\{
\begin{array}{ll}
 {\mathcal K}_w \theta_1 \cdots \theta_m & {\rm if~} l=m {\rm ~and~} w 
\in S_m \times S_{N-m} \\
0 & {\rm otherwise}
\end{array}
\right.
\end{equation}
where $S_m \times S_{N-m}$ stands for the subgroup of $S_N$ made out of permutations of the
first $m$ elements and the last $N-m$ elements respectively.  Using
\eqref{jackinnonsym} again (forgetting the multiplicative factor),
we get
\begin{align}
& \tilde S(u,\alpha)  \,  \sum_{w \in S_N} {\mathcal K}_w 
\theta_1\dots \theta_m E_{\tilde \Lambda}(x;\alpha) \nonumber \\
& \qquad =\frac{1}{m! (N-m)!}
\sum_{\sigma \in S_N} {\mathcal K}_{\sigma} \left( \prod_{i=1}^m (\mathcal D_i+\alpha+u) \prod_{j=m+1}^N
(\mathcal D_j +u) \right) \sum_{w \in S_m \times S_{N-m}} {\mathcal K}_w 
\theta_1\dots \theta_m E_{\tilde \Lambda}(x;\alpha)
\end{align}
{From} \eqref{eqcommu}, we can deduce that 
$\prod_{i=1}^m (\mathcal D_i+\alpha+u) \prod_{j=m+1}^N
(\mathcal D_j +u)$ commutes with $\mathcal K_w \theta_1 \cdots \theta_m$ 
for every
$w \in S_m \times S_{N-m}$.  Hence
\begin{equation}
\tilde S(u,\alpha)
\,  \sum_{w \in S_N} {\mathcal K}_w 
\theta_1\dots \theta_m E_{\tilde \Lambda}(x;\alpha)
= 
\sum_{\sigma \in S_N} {\mathcal K}_{\sigma} \theta_1 \cdots \theta_m
\left( \prod_{i=1}^m (\mathcal D_i+\alpha+u) \prod_{j=m+1}^N
(\mathcal D_j +u) \right) 
E_{\tilde \Lambda}(x;\alpha)
\end{equation}
and, as claimed, \eqref{eqseki2} implies the second statement of the
proposition.

We have left to prove expressions \eqref{eqseki1} and \eqref{eqseki2}.
Let $\eta= \tilde \La$ and suppose that $\eta_i=r$.
It is easy to get from \eqref{etab} that the eigenvalue $\bar \eta_i$
of $\mathcal D_i$ is 
\begin{equation}
\bar \eta_i = \alpha r - \#\{{\rm rows~of~} \Lambda^* 
{\rm~of~size~larger~than~}r\} 
 - \#\{{\rm rows~of~} \tilde \La {\rm~of~size~}r {\rm ~above~row~}i\} \, .
\end{equation}
Therefore, letting 
\begin{equation} \label{eqji}
j_i=\#\{{\rm rows~of~} \Lambda^* {\rm~of~size~larger~than~}r\} 
 + \#\{{\rm rows~of~} \tilde \La {\rm~of~size~}r {\rm ~above~row~}i\} +1
\end{equation}
we have $\{j_1,\dots,j_N \}=\{1,\dots,N\}$, $\Lambda_{j_i}^*=r$,
and $\bar \eta_i = \alpha \Lambda_{j_i}^* +1-{j_i}$, which gives 
\eqref{eqseki1}.  

Continuing with the same notation, we have that if
$i$ belongs to $\{1,\dots,m\}$ then
$\eta_i=r$ is the highest row of size $r$ in $\eta$,
and thus by \eqref{eqji} $\Lambda_{j_i}^*$ is also the highest row of size
$r$ in $\Lambda^*$.
Hence, in this case
\begin{equation}
\bar \eta_i +\alpha= \alpha \Lambda_{j_i}^{\circledast} +1-{j_i} \, .
\end{equation}
If $i$ does not belong to $\{1,\dots,m\}$, then we have as before
\begin{equation}
\bar \eta_i = \alpha \Lambda_{j_i}^* +1-{j_i}  = 
\alpha \Lambda_{j_i}^{\circledast} +1-{j_i} 
\end{equation}
and \eqref{eqseki2} follows. 
\end{proof}

Proposition~\ref{LemmaSeki} has the following important corollary.
\begin{corollary} \label{corotri} We have
\begin{align}
S(u,\alpha) m_{\Lambda} &= \varepsilon_{\Lambda^*}(u,\alpha) m_{\Lambda}
+ \sum_{\Omega < \Lambda} b_{\Lambda \Omega}(u,\alpha) \, m_{\Omega}, \\
\tilde S(u,\alpha) m_{\Lambda}& = \varepsilon_{\Lambda^{\circledast}}(u,\alpha) m_{\Lambda}
+ \sum_{\Omega < \Lambda} \tilde b_{\Lambda \Omega}(u,\alpha) \, m_{\Omega}, 
\end{align}
 for some $b_{\Lambda \Omega}(u,\alpha),\,\tilde b_{\Lambda \Omega}(u,\alpha) \in \mathbb Q(u,\alpha)$.
\end{corollary}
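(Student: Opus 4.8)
The plan is to deduce the corollary from Proposition~\ref{LemmaSeki} by passing through the Jack superpolynomial basis and exploiting the unitriangularity already recorded in \eqref{Ptriangular}. First I would note that \eqref{Ptriangular} expresses the family $\{P_\Lambda^{(\alpha)}\}$ in terms of $\{m_\Lambda\}$ through a matrix that is unitriangular with respect to the dominance order on superpartitions. Since that order is a partial order and the transition matrix carries $1$'s on the diagonal, it is invertible and its inverse is again unitriangular over the same poset; hence there exist coefficients $d_{\Lambda\Omega}(\alpha)\in\mathbb Q(\alpha)$ such that
\[
m_\Lambda = P_\Lambda^{(\alpha)} + \sum_{\Omega<\Lambda} d_{\Lambda\Omega}(\alpha)\,P_\Omega^{(\alpha)}.
\]

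Next I would apply $S(u,\alpha)$ to this expression and invoke the eigenvalue relation of Proposition~\ref{LemmaSeki}, which yields
\[
S(u,\alpha)\,m_\Lambda = \varepsilon_{\Lambda^*}(u,\alpha)\,P_\Lambda^{(\alpha)} + \sum_{\Omega<\Lambda} d_{\Lambda\Omega}(\alpha)\,\varepsilon_{\Omega^*}(u,\alpha)\,P_\Omega^{(\alpha)}.
\]
Re-expanding each $P_\Omega^{(\alpha)}$ back into monomials via \eqref{Ptriangular}, I would then read off the coefficient of $m_\Lambda$. Because $P_\Lambda^{(\alpha)}=m_\Lambda+(\text{terms }m_\Sigma\text{ with }\Sigma<\Lambda)$, while each $P_\Omega^{(\alpha)}$ with $\Omega<\Lambda$ contributes only monomials $m_\Sigma$ with $\Sigma\leq\Omega<\Lambda$, the sole contribution to $m_\Lambda$ comes from the diagonal term and equals $\varepsilon_{\Lambda^*}(u,\alpha)$. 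Every other monomial that appears is indexed by some $\Omega<\Lambda$, and its coefficient $b_{\Lambda\Omega}(u,\alpha)$ is a finite combination of the $\varepsilon$-values (polynomials in $u$ and $\alpha$) with the rational functions $c_{\Lambda\Omega}(\alpha)$ and $d_{\Lambda\Omega}(\alpha)\in\mathbb Q(\alpha)$, hence lies in $\mathbb Q(u,\alpha)$. This establishes the first assertion, and the second follows verbatim upon replacing $S(u,\alpha)$ and $\varepsilon_{\Lambda^*}$ by $\tilde S(u,\alpha)$ and $\varepsilon_{\Lambda^{\circledast}}$ and using the second eigenvalue relation of the proposition.

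The one point requiring genuine care, which I would regard as the crux, is the unitriangularity of the inverse transition over the poset of superpartitions: one must confirm that the dominance order \eqref{eqorder1} makes the set of superpartitions dominated by a fixed $\Lambda$ finite, so that the sums above are finite, and that the standard inversion of a unitriangular matrix indexed by a partially ordered set applies without modification. Both are routine, but they are exactly what guarantees that no monomial $m_\Omega$ with $\Omega\not<\Lambda$ and $\Omega\neq\Lambda$ can contaminate the expansion, and hence that the action is genuinely triangular with the stated diagonal eigenvalue.
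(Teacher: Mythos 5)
Your argument is correct and rests on exactly the same two ingredients as the paper's proof — Proposition~\ref{LemmaSeki} and the unitriangularity \eqref{Ptriangular} — but organizes them differently. You explicitly invert the unitriangular transition matrix to write $m_\Lambda = P_\Lambda^{(\alpha)} + \sum_{\Omega<\Lambda} d_{\Lambda\Omega}(\alpha)\, P_\Omega^{(\alpha)}$, apply the operator, and re-expand; the paper instead argues by contradiction, taking $\Lambda$ minimal in dominance order among the superpartitions for which triangularity fails, writing $S(u,\alpha)P_\Lambda^{(\alpha)} = S(u,\alpha)m_\Lambda + \sum_{\Omega<\Lambda} c_{\Lambda\Omega}(\alpha)\, S(u,\alpha)m_\Omega$, and observing that a monomial $m_\Upsilon$ with $\Upsilon\not\leq\Lambda$ appearing in $S(u,\alpha)m_\Lambda$ would survive into $S(u,\alpha)P_\Lambda^{(\alpha)}$, contradicting the eigenvalue relation. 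The two routes are equivalent in substance; yours delivers the diagonal coefficient $\varepsilon_{\Lambda^*}(u,\alpha)$ a little more explicitly at the cost of invoking (routine) inversion of a unitriangular matrix indexed by a poset, while the paper's minimal-counterexample version sidesteps the inversion. The caveats you flag at the end — finiteness of $\{\Omega:\Omega\leq\Lambda\}$, which holds because dominance preserves the bidegree, together with the implicit fact that $S(u,\alpha)$ and $\tilde S(u,\alpha)$ preserve the space of symmetric superpolynomials of fixed bidegree so that the expansions in the monomial basis make sense — are needed equally in both versions and are indeed routine.
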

\begin{proof}
Suppose there exists a $\Gamma$ such that  $S(u,\alpha)$ on  $m_{\Gamma}$
is not triangular, and take $\Lambda$ to be minimal among those $\Gamma$'s.
Then, by  \eqref{Ptriangular}, 
\begin{equation}
S(u,\alpha) P_{\Lambda}^{(\aa)} = S(u,\alpha) m_{\Lambda} + \sum_{\Omega < \Lambda} c_{\Lambda \Omega}(\alpha) S(u,\alpha) m_{\Omega}
\end{equation}
By our hypothesis on the minimality of $\Lambda$, 
all the monomials $m_{\Delta}$ that occur in 
$S(u,\alpha) m_{\Omega}$ are such that $\Lambda > \Omega \geq \Delta$.
Therefore if $m_{\Upsilon}$, with $\Upsilon \not \leq \Lambda$, appears in
$S(u,\alpha) m_{\Lambda}$ then it also appears in $S(u,\alpha) P_{\Lambda}^{(\aa)}$.
This contradicts Proposition~\ref{LemmaSeki}.
\end{proof}

\section{Admissible superpartitions  and regularity of the Jack superpolynomials }
\label{Sregu}
\subsection{Admissibility conditions for superpartitions}\label{admS}
Let $k\geq 1$ and $r \geq 2$ be integers such that $k+1$ and $r-1$ are
coprime.
As stated in the introduction, we say
that a superpartition $\Lambda$ is $(k,r,N)$-admissible
 (allowing zeroes as entries in $\Lambda^{\cd}$ and $\Lambda^*$) if
\begin{equation}\label{admi}
\La^\cd_i-\La^*_{i+k}\geq r \qquad (1 \leq i \leq N-k) \, .
\end{equation}
For $m=0$, in which case $\La=(\Lambda^*,\Lambda^*)$ 
is essentially an ordinary partition, this reduces to the usual 
admissibility criterion \cite{FJMM}.

As a short digression, let us make a remark on the enumeration of admissible (super)partitions when $r=2$. The $(k,2,N)$-admissible partitions  are precisely those that describe the combinatorics of the sum-side of the Andrews-Gordon generalization of the Rogers-Ramanujan identity \cite{An,Andr}. 
Similarly, the enumeration of the  $(k,2,N)$-admissible superpartitions are captured by the sum-side of a generalization of the Andrews-Gordon identity. The simplest way of obtaining this connection is to  transform a superpartition $\La$ into an overpartition $\Om$ by considering $\La^\cd$ and putting  an over-bar above each entry $\La_i^\cd$ for which $\La^\cd_i-\La^*_i=1$. As we just mentioned, this produces an overpartition \cite{CL}, namely, a partition where  the final occurrence of a part can be overlined.  For instance, we have
\begin{equation}
{\tableau[scY]{&&&\bl\tcercle{}\\&\\&\bl\tcercle{}\\ \\\bl\tcercle{}}}
\longleftrightarrow \quad (\bar{4},2,\bar{2},1,\bar{1}) .
\end{equation}
With this map, a $(k,2)$-admissible superpartition is transformed into an overpartition that satisfies precisely the restriction condition introduced in \cite{Lo,CM}: $\Omega_i - \Omega_{i+k}\geq 1 $ if $\Omega_{i+k}$ is overlined and $\geq 2$ otherwise. The generating function for these restricted overpartitions that generalizes the Andrews multiple-sum is presented in Theorems 1.4, 1.5 and eq. (6.1) of \cite{CM}. Recall that in superconformal highest-weight modules,  the states are generated by the action  of the Virasoro modes $L_n$ and its supersymmetric partner $G_r$, for $n,r<0$. In the Neveu-Schwarz sector $r\in\Z+\frac12$, while it  is integer in the Ramond sector.

\subsection{Regularity of $P_{\La}^{(\ad)}$ when $\La$ is admissible}
As in the non-supersymmetric case, singularities can occur in a Jack superpolynomial $P_{\Lambda}^{(\alpha)}$ at the special value $\a=\ad=-(k+1)/(r-1)$.
Given this, our first task is 
to verify that 
the Jack superpolynomials $P^{(\ad)}_{\La}$ are regular at $\ad$ 
(i.e., do not have poles) when $\La$ is $(k,r,N)$-admissible.  
 As already pointed out in the introduction, following relation
\eqref{jackinnonsym}
this can be deduced from the results of \cite{Kas}.  We choose nevertheless
to present our own proof of the regularity:  the methods it uses
will then be used again and again in the much harder proof of
Proposition~\ref{propquasi}.

It is known that $\left(\prod_{s \in \tilde \Lambda}
d_{\tilde \Lambda}(s)\right)\, E_{\tilde \Lambda}$, 
with $d_\eta$ defined in \eqref{ald},
belongs to $\mathbb N[\alpha,x_1,\dots,x_N]$ (see \cite[Theorem 4.11]{Knop}).
Therefore, using \eqref{jackinnonsym}, we have that 
$\left( \prod_{s \in \tilde \Lambda} d_{\tilde \Lambda}(s) \right) \, P_{\Lambda}^{(\alpha)}$ cannot have singularities at $\alpha=\ad$.  In other words, checking the
regularity of $P^{(\ad)}_{\Lad}$ amounts to
checking that
$
\prod_{s \in \tilde \Lambda}
d_{\tilde \Lambda}(s)
$
does not have zeroes when $\alpha =\ad$ if $\La = \Lad$.  

The following lemma gives a simpler expression for $\prod_{s \in \tilde \Lambda}
d_{\tilde \Lambda}(s)$.
\begin{lemma} We have
\begin{equation}
\prod_{s \in \tilde \Lambda}
d_{\tilde \Lambda}(s) = f_{\La^s}^{-1}\prod_{s\in \La} {h^\La_{(\alpha)}(s)}  
\prod_{i=N-\ell(\Lambda^s)}^{N} d_{\tilde \Lambda}\bigl((i,1)\bigr) 
,\end{equation}
where $f_{\La^s}$, $h^\La_{(\alpha)}(s)$ and $d_{\tilde \Lambda}(s)$ are defined respectively in \eqref{zlam}, \eqref{2hook}, and \eqref{ald}. 
\end{lemma}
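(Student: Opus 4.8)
The plan is to prove the identity by a cell-by-cell comparison of the Knop--Sahi hook $d_{\tilde\Lambda}(s)$ from \eqref{ald} with the lower-hook length $h^{\Lambda}_{(\alpha)}$ from \eqref{2hook}, exploiting the rigid shape of $\tilde\Lambda$: drawn in the French convention, its top $m$ rows form a strictly increasing block (the fermionic rows) and its bottom $N-m$ rows a weakly increasing block (the bosonic rows). Writing $d_{\tilde\Lambda}(i,j)=\alpha(\tilde\Lambda_i-j+1)+c_{ij}+1$, where $c_{ij}$ abbreviates the two counting sets of \eqref{ald}, I would first split the product over the cells of $\tilde\Lambda$ into the first-column cells of the bosonic block and all the remaining cells, and aim to show that the remaining cells reproduce $f_{\Lambda^s}^{-1}\prod_{s\in\Lambda}h^{\Lambda}_{(\alpha)}(s)$ while the bosonic first-column cells reproduce the correction product.

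Next I would set up the matching. A cell of $\tilde\Lambda$ in the bosonic block is sent to the cell of the superdiagram of $\Lambda$ lying in the same row but one column to the left ($j\mapsto j-1$), whereas a cell in the fermionic block is sent to the cell in the same row and column ($j\mapsto j$); the asymmetry is precisely the extra rightmost box that a circle attaches to its row in $\Lambda^{\circledast}$. Under this prescription the $\alpha$-coefficient $\tilde\Lambda_i-j+1$ becomes exactly $a_{\Lambda^{\circledast}}(s')$ for the image cell $s'$, which is immediate from the two shift rules. The images are precisely the cells $s'$ of $\Lambda$ with $a_{\Lambda^{\circledast}}(s')>0$, and the matching is a bijection onto them.

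It then remains to account for the cells that are not images. On the side of $\Lambda$ these are the arm-zero (rightmost) cells: each circle is such a cell, and since $\Lambda^*$ is a partition and a circle sits exactly one box beyond the end of its $\Lambda^*$-row, no cell of $\Lambda^*$ lies below it, so its leg $l_{\Lambda^*}$ vanishes and its lower hook equals $1$; meanwhile the rightmost cells of the bosonic rows of equal value $v$ have legs $0,1,\dots,n_{\Lambda^s}(v)-1$, so their hooks multiply to $n_{\Lambda^s}(v)!$, and altogether the rightmost cells contribute $\prod_v n_{\Lambda^s}(v)!=f_{\Lambda^s}$. On the side of $\tilde\Lambda$ the unmatched cells are the first-column cells of the bosonic block, whose product---after adjoining the formal value $d_{\tilde\Lambda}((N-\ell(\Lambda^s),1))=1$ at the topmost empty bosonic row---is exactly $\prod_{i=N-\ell(\Lambda^s)}^{N}d_{\tilde\Lambda}((i,1))$. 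Combining, $\prod_{s\in\tilde\Lambda}d_{\tilde\Lambda}(s)=\big(f_{\Lambda^s}^{-1}\prod_{s\in\Lambda}h^{\Lambda}_{(\alpha)}(s)\big)\cdot\prod_{i=N-\ell(\Lambda^s)}^{N}d_{\tilde\Lambda}((i,1))$, as claimed.

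The main obstacle is the leg bookkeeping that shows $c_{ij}=l_{\Lambda^*}(s')$ for every matched cell, i.e.\ that the two counting sets $\#\{k<i\mid j\le\tilde\Lambda_k+1\le\tilde\Lambda_i\}$ and $\#\{k>i\mid j\le\tilde\Lambda_k\le\tilde\Lambda_i\}$ of \eqref{ald} together enumerate exactly the cells of $\Lambda^*$ lying below $s'$ in its column. Getting this right requires tracking the interplay between the strict inequalities that govern the fermionic block and the weak inequalities that govern the bosonic block, as well as the off-by-one shifts introduced at the block boundary and at each circled row. Once these per-cell identities are tabulated, the assembly above is routine.
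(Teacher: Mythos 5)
The paper's own proof is a two-line citation: it quotes Lemma~6 of \cite{LLN}, which already expresses $\prod_{s\in\tilde\Lambda}d_{\tilde\Lambda}(s)$ as $f_{\Lambda^s}^{-1}\prod_{s\in\mathcal B\Lambda}h^{\Lambda}_{(\alpha)}(s)$ times two correction products, and then absorbs the fermionic correction product $\prod_{1\leq j<i\leq m}d_{\tilde\Lambda}((i,\tilde\Lambda_j+1))$ into the hook product over $\Lambda/\mathcal B\Lambda$. Your self-contained bijective argument is therefore a genuinely different route, and it would be more informative if it worked; unfortunately its central step is not just unproven but false as stated.

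The step that fails is the per-cell identity $c_{ij}=l_{\Lambda^*}(s')$, equivalently $d_{\tilde\Lambda}(s)=h^{\Lambda}_{(\alpha)}(s')$ for each matched pair. Already in the purely bosonic case $m=0$ with $\lambda=(2,2)$, $N=2$, one has $\tilde\lambda=(2,2)$, and your prescription sends the cell $(1,2)$ of $\tilde\lambda$ to the cell $(2,1)$ of $\lambda$; but $d_{\tilde\lambda}(1,2)=\alpha+2$ while $h^{\lambda}_{(\alpha)}(2,1)=\alpha+1$ (and symmetrically $d_{\tilde\lambda}(2,2)=\alpha+1$ versus $h^{\lambda}_{(\alpha)}(1,1)=\alpha+2$). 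The $\alpha$-coefficients do match cell by cell, as you argue, but the constant parts agree only as a multiset after a nontrivial permutation among cells lying in the same column of $\Lambda$ whose rows have equal length: the two counting sets in \eqref{ald} distribute the leg contributions of a block of equal rows differently from $l_{\Lambda^*}$. So the ``routine assembly'' you defer cannot be completed as described; the argument has to be reorganized into a product (or multiset) identity over each such block, which is precisely the nontrivial content of Lemma~6 of \cite{LLN}. A secondary problem is the boundary factor: you declare $d_{\tilde\Lambda}((N-\ell(\Lambda^s),1))$ to be a formal $1$ ``at the topmost empty bosonic row,'' but when every bosonic part of $\Lambda$ is nonzero one has $N-\ell(\Lambda^s)=m$, and row $m$ of $\tilde\Lambda$ is the largest fermionic row $\Lambda_1$, which is generically nonempty; your identification of the unmatched $\tilde\Lambda$-cells with the product $\prod_{i=N-\ell(\Lambda^s)}^{N}d_{\tilde\Lambda}((i,1))$ therefore does not go through in that case and needs a separate treatment of the fermionic first column.
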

\begin{proof}
\cite[Lemma 6]{LLN} states that
\begin{equation}
\prod_{s \in \tilde \Lambda}
d_{\tilde \Lambda}(s) = f_{\La^s}^{-1}\prod_{s\in {\mathcal B\La}} {h^\La_{(\alpha)}(s)}  
\prod_{i=N-\ell(\Lambda^s)}^{N} d_{\tilde \Lambda}\bigl((i,1)\bigr)
\prod_{1 \leq j < i \leq m} d_{\tilde \Lambda}\bigl((i,\tilde \Lambda_j+1)\bigr)
\end{equation}
The lemma then follows from the relation
\begin{equation}
\prod_{\Lambda/{\mathcal B \Lambda}} h^\La_{(\alpha)}(s)=
\prod_{1 \leq j < i \leq m} d_{\tilde \Lambda}\bigl((i,\tilde \Lambda_j+1)\bigr) \, .
\end{equation}
\end{proof}
Note that it is conjectured in \cite[Conjecture 33]{DLM7}
that the integral form (see \eqref{JvsP})
of the Jack superpolynomials is given by
\begin{equation}
J_{\Lambda}^{(\alpha)} = v_{\Lambda} (\alpha) \, P_{\Lambda}^{(\alpha)} 
\end{equation}
where $v_{\Lambda} (\alpha)=\prod_{s\in {\mathcal B\La}} {h^\La_{(\alpha)}(s)}$.
If this were proven, it 
would suffice (as is the case for Jack polynomials)
to show that the simpler expression 
$v_{\Lad} (\ad)$ does not have zeroes to demonstrate 
the regularity of the admissible
Jack superpolynomials.

We now show that $P^{(\ad)}_{\La}$ is regular.
\begin{proposition} \label{lemmapoles}
The Jack superpolynomial $P_{\La}^{(\alpha)}$ has no pole at
$\alpha=\ad$ when $\La$ is $(k,r,N)$-admissible.
\end{proposition}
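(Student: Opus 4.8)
The plan is to exploit the reduction already in place: by \cite[Theorem~4.11]{Knop} the polynomial $\bigl(\prod_{s\in\tilde{\Lambda}}d_{\tilde{\Lambda}}(s)\bigr)E_{\tilde{\Lambda}}$ lies in $\mathbb{N}[\alpha,x_1,\dots,x_N]$, so by \eqref{jackinnonsym} the product $\bigl(\prod_{s\in\tilde{\Lambda}}d_{\tilde{\Lambda}}(s)\bigr)P^{(\alpha)}_{\Lambda}$ is regular at $\alpha=\ad$; it therefore suffices to show that the scalar $\prod_{s\in\tilde{\Lambda}}d_{\tilde{\Lambda}}(s)$ does not vanish at $\alpha=\ad$ when $\Lambda$ is $(k,r,N)$-admissible. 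Using the preceding lemma and discarding the nonzero constant $f_{\Lambda^s}^{-1}$, I would split this into two independent families of affine (degree $\le 1$) factors in $\alpha$: the lower hooks $h^{\Lambda}_{(\alpha)}(s)=l_{\Lambda^*}(s)+1+\alpha\,a_{\Lambda^\cd}(s)$ over the cells $s$ of $\Lambda^*$, and the first-column factors $d_{\tilde{\Lambda}}\bigl((i,1)\bigr)$ for $i=N-\ell(\Lambda^s),\dots,N$. Since a product of affine functions of $\alpha$ vanishes at $\ad$ iff one of its factors does, the whole argument reduces to ruling out a single vanishing factor in each family.

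The common tool is a telescoped form of the admissibility condition \eqref{admi}. Combining $\Lambda^\cd_i-\Lambda^*_{i+k}\ge r$ with the strip bound $\Lambda^\cd_p\le\Lambda^*_p+1$ gives $\Lambda^*_p-\Lambda^*_{p+k}\ge r-1$, and iterating yields
\begin{equation*}
\Lambda^\cd_i-\Lambda^*_{i+tk}\ge t(r-1)+1\qquad\text{whenever } i+tk\le N .
\end{equation*}
For the lower hooks, suppose $h^{\Lambda}_{(\ad)}\bigl((i,j)\bigr)=0$. Writing $a=\Lambda^\cd_i-j$ and $l=(\Lambda^*)'_j-i$, this reads $(r-1)(l+1)=(k+1)\,a$; since $\gcd(k+1,r-1)=1$ and the case $a=0$ gives $h=l+1\ge 1$, there is an integer $t\ge 1$ with $a=(r-1)t$ and $l+1=(k+1)t$. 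Then $\Lambda^\cd_i=j+(r-1)t$ forces $\Lambda^*_{i+tk}\le j-1$, whereas $(\Lambda^*)'_j=i+(k+1)t-1$ gives $\Lambda^*_{i+(k+1)t-1}\ge j$; as $(k+1)t-1\ge tk$ and $\Lambda^*$ is weakly decreasing, $\Lambda^*_{i+tk}\ge\Lambda^*_{i+(k+1)t-1}\ge j$, a contradiction. Hence no lower-hook factor vanishes at $\ad$.

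For the first-column family, a direct computation shows that when $v=\tilde{\Lambda}_i$ is a positive part of $\Lambda^s$ the factor takes the form $d_{\tilde{\Lambda}}\bigl((i,1)\bigr)=\alpha v+(N-q_1+t)$, where $q_1$ is the last position of the value $v$ in $\Lambda^*$ and $t$ runs from $1$ to the multiplicity $n_v$ of $v$ in $\Lambda^s$ (the single row with $\tilde{\Lambda}_i=0$ contributes the harmless factor $1$). A zero at $\ad$ again forces $v=(r-1)u$ and $N-q_1+t=(k+1)u$ for some $u\ge 1$. The point I would use is that the $n_v$ symmetric copies of $v$ sit at the bottom of its block in $\Lambda^*$, namely in rows $q_1-n_v+1,\dots,q_1$, and are uncircled, so $\Lambda^\cd_p=\Lambda^*_p=v$ there; the $u$-step case of the telescoped inequality applied at such a row $p$ gives $\Lambda^*_{p+uk}\le v-u(r-1)-1=-1$ unless $p+uk>N$. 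Imposing $p+uk>N$ for the smallest such row $p=q_1-n_v+1$, together with $q_1=N+t-(k+1)u$, forces $t\ge u+n_v$, which contradicts $t\le n_v$. Thus no first-column factor vanishes either, and $P^{(\ad)}_{\Lambda}$ is regular.

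I expect the first-column family to be the main obstacle. Unlike the lower hooks, whose non-vanishing is the direct superspace analogue of the classical \cite{FJMM} computation, these factors mix the antisymmetric and symmetric rows of $\tilde{\Lambda}$ and their constant term $N-q_1$ is sensitive to the trailing zeros of $\Lambda^*$; pinning down their explicit form and recognizing that the same telescoped admissibility can be applied \emph{at the uncircled rows of a repeated value} --- which rests on the structural fact that the circled copy of a part sits atop its block --- is the step most likely to require care.
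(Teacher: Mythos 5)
Your proof is correct, and its two halves relate to the paper's argument in different ways. For the lower-hook factors $h^{\Lambda}_{(\ad)}(s)$ you follow essentially the paper's route: assume a vanishing factor, use coprimality of $k+1$ and $r-1$ to write the arm and co-leg as multiples of $r-1$ and $k+1$, and contradict admissibility; your telescoped inequality $\Lambda^\cd_i-\Lambda^*_{i+tk}\ge t(r-1)+1$ is a slightly weaker but sufficient variant of the paper's \eqref{ine1}, and your use of $(\Lambda^*)'_j$ replaces the paper's inequality $\lambda_{\lambda'_j}\ge j$. For the first-column factors $d_{\tilde\Lambda}\bigl((i,1)\bigr)$ the routes genuinely diverge. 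The paper never computes these factors: it sets $\Omega=(\Lambda^*+1^N,\Lambda^\cd+1^N)$, observes that $\Omega$ is still $(k,r,N)$-admissible and that each $d_{\tilde\Lambda}\bigl((i,1)\bigr)$ equals $h^{\Omega}_{(\alpha)}\bigl((j,1)\bigr)$ for some $j$, and thereby reduces the second family to the first. You instead compute the factors explicitly as $\alpha v+(N-q_1+t)$ for $1\le t\le n_v$ --- a formula I checked against the definition \eqref{ald} --- and rule out zeros using the structural fact that the circled copy of a repeated part sits atop its block, so the $n_v$ bosonic rows carrying $v$ satisfy $\Lambda^\cd_p=v$ and the telescoped admissibility applies to them; the dichotomy $p+uk\le N$ versus $p+uk>N$ is handled correctly in both branches. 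Your version has the merit of making the first-column factors explicit, while the paper's column-adding trick is shorter and sidesteps the bookkeeping. One detail to pin down if you write this up: your ``harmless factor $1$'' claim for the boundary index $i=N-\ell(\Lambda^s)$ is automatic only when that row of $\tilde\Lambda$ carries a zero entry of the symmetric block; in the degenerate case $\ell(\Lambda^s)=N-m$ with $m\ge 1$ this index lands in the fermionic block, and you should verify that no genuine factor is being overlooked there (the paper's reduction treats all first-column cells of $\tilde\Lambda$ uniformly and does not face this issue).
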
 
\begin{proof}
As mentioned before, we need to prove that 
$\prod_{s \in {\tilde \Lambda}} d_{\tilde \Lambda}(s)$ does not vanish
at $\alpha=\ad$. 
 In view of the previous lemma, it is sufficient to show
that
$h^\Lambda_{(\ad)}(s)\ne0$  
for any $s \in \Lambda$ and that $d_{\tilde \Lambda}((i,1)) \neq 0$ when 
$\alpha=\ad$
for any $i$ such that $(i,1)$ belongs to $\tilde \Lambda$.

We first show that $h^\Lambda_{(\ad)}(s)\ne0$  at $\alpha=\ad$
for any $s \in \Lambda$.  This amounts to showing that
\begin{equation} \label{coco}
{\La^*_j}'-i+1+\ad(\La^\cd_i -j)\not=0 \qquad \forall\, (i,j)\in \Lambda.
\end{equation}
This is proved as in Lemma 2.1 and 2.2 of \cite{FJMM}  by deriving a contradiction (namely $t(r-1)\geq tr$ for a positive integer $t$).
The argument relies crucially on the fact that $r-1$ and $k+1$ are coprime and also on the following simple consequence of the admissibility criterion:
\begin{equation}\label{ine1}
\La^\cd_i-\La^*_j\geq \left\lfloor \frac{j-i}{k}\right \rfloor r .
\end{equation}
We also need  the following (obvious) inequality: for any partition $\la$, we have 
$\la_{\la'_j}\geq j
$.
 Indeed, recall that 
\begin{equation}
\la'_i = {\rm Card}\; \{j: \la_j\geq i\}\quad \Rightarrow \quad \la_i = {\rm Card}\; \{j: \la'_j\geq i\}, \end{equation}
so that with $i = \la_j'$ 
\begin{equation} \label{ine2}
\la_{\la'_j}= {\rm Card}\; \{\ell: \la'_\ell\geq \la'_j\}\geq j , \end{equation}
that is, the number of columns that are $\geq \la'_j$  is certainly at least $j$. It is also clear that
\begin{equation} \label{ine3}
\la_{\la'_j}= j \qquad\text{if}\qquad \la'_{j+1}<\la'_j. \end{equation}

We are now in position to establish our result. Suppose that (\ref{coco})  is violated, that is that for some square $s=(i,j)\in \La$ we have
\begin{equation} \label{cocof}
{\La^*_j}'-i+1=\frac{k+1}{r-1}(\La^\cd_i -j). 
\end{equation}
 This requires 
\begin{equation}
\La_i^\cd-j= t(r-1)\quad \text{and} \quad {\La_j^*}'-i+1= t(k+1)\quad\text{ for some integer $t$}. 
\end{equation}
We have that $\La^\cd_i-j\geq 0$ implies $t\geq 0$.   We first rule out
the case $t=0$. If $\La^\cd_i-j=0$ then  $\La^\cd_i=j$.  
But for $(i,j)\in\La$, this requires $\La^\cd_i=\La^*_i$ and the equality $\La^*_i=j$ would then imply that ${\La^*_j}'\geq i$ contradicting the other resulting  relation ${\La^*_j}'=i-1$. Using the inequalities (\ref{ine1}) and (\ref{ine2}), we then have the contradiction  
\begin{equation}
t(r-1)= \La^\cd_i-j \geq \La^\cd_i-\La^*_{{\La^*_j}'} \geq \left\lfloor \frac{{\La^*_j}'-i}{k}\right\rfloor r\geq \left\lfloor \frac{t(k+1)-1}{k}\right\rfloor 
r\geq tr ,\quad\text{with}\; t\geq1 \, . \end{equation}
 This demonstrates that $h^{\La}_{(\ad)}(s)\ne 0$ for any $s\in \La$. 

We now have left to show that
$d_{\tilde \Lambda}\bigl((i,1)\bigr) \neq 0$ at $\alpha=\ad$
for any $i$ such that $(i,1)$ belongs to $\tilde \Lambda$.
Let $\Omega=(\La^* + 1^N, \La^{\cd}+1^N)$ (the superpartition obtained
by adding a column of length $N$ to  $\Lambda$).  It is easy to see
that $\Omega$ is still $(k,r,N)$-admissible and that 
$d_{\tilde \Lambda}\bigl((i,1)\bigr)=h_{(\alpha)}^{\Omega}(j,1)$ for some $j$.  
The result then follows from 
what we showed above: $h^{\La}_{(\ad)}(s)\ne 0$ for any $s\in \La$. 
\end{proof}

\subsection{Regularity of $P_\La^{(\ad)}$ when $\La$ is almost admissible}

We will now show  that the Jack superpolynomials indexed by
superpartitions obtained from a 
$(k,r,N)$-admissible superpartition by removing (resp. adding) a circle or
by changing a circle (resp. square) into a square (resp. circle) are also
regular
at $\alpha= \ad$ (these partitions will be called
 {\it almost} $(k,r,N)$-admissible).
This is somewhat more subtle than
showing the regularity of $P_{\La}^{(\ad)}$ when $\La$ is $(k,r,N)$-admissible. 
 The proof generalizes the methods developed in \cite{FJMM} for a similar purpose.
\begin{lemma} If $P_{\La}^{(\alpha)}$ has a pole at $\alpha =\alpha_0$ then
there exists a partition $\Omega < \Lambda$ such that
\begin{equation}
\varepsilon_{\Omega^*}(u,\alpha) 
= \varepsilon_{\Lambda^*}(u,\alpha) \qquad {\rm and} \qquad
\varepsilon_{\Omega^{\cd}}(u,\alpha) 
= \varepsilon_{\Lambda^{\cd}}(u,\alpha).
\end{equation}
\end{lemma}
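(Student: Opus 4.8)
The plan is to use that $P_\La^{(\alpha)}$ is a joint eigenfunction of the two commuting Sekiguchi operators $S(u,\alpha)$ and $\tilde S(u,\alpha)$ (Proposition~\ref{LemmaSeki}), each of which acts triangularly on the monomial basis (Corollary~\ref{corotri}), in order to pin down a \emph{single} dominated superpartition $\Om$ matching both eigenvalue polynomials at $\alpha_0$. Since $\{m_\Om\}$ is an $\alpha$-independent basis and the coefficient of $m_\La$ in \eqref{Ptriangular} is $1$, a pole of $P_\La^{(\alpha)}$ at $\alpha_0$ forces a pole at $\alpha_0$ of some coefficient $c_{\La\Om}(\alpha)$ with $\Om<\La$. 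First I would choose $\Om$ \emph{maximal} in the dominance order among those $\Om<\La$ whose coefficient is singular at $\alpha_0$; such a maximal element exists because the relevant superpartitions form a finite poset.

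Next I would extract a recursion from the first eigenvalue equation: substituting \eqref{Ptriangular} into $S(u,\alpha)P_\La^{(\alpha)}=\varepsilon_{\La^*}(u,\alpha)P_\La^{(\alpha)}$ and reading off the coefficient of $m_\Om$ with the help of Corollary~\ref{corotri} gives
\[
c_{\La\Om}(\alpha)\bigl(\varepsilon_{\La^*}(u,\alpha)-\varepsilon_{\Om^*}(u,\alpha)\bigr)
= b_{\La\Om}(u,\alpha)+\sum_{\Om<\Gamma<\La}c_{\La\Gamma}(\alpha)\,b_{\Gamma\Om}(u,\alpha).
\]
Only superpartitions $\Gamma$ strictly above $\Om$ enter the right-hand side, so by the maximality of $\Om$ every $c_{\La\Gamma}$ occurring there is regular at $\alpha_0$. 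The coefficients $b_{\Gamma\Om}(u,\alpha)$ are regular at $\alpha_0$ as well, since $S(u,\alpha)=\prod_i(\mathcal D_i+u)$ is polynomial in $\alpha$ as an operator: the only $\alpha$-dependence in $\mathcal D_i$ is the term $\alpha x_i\partial_{x_i}$, and the remaining Dunkl terms send polynomials to polynomials, so the monomial matrix entries lie in $\mathbb Q[u,\alpha]$. Hence the entire right-hand side is regular at $\alpha_0$.

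The analytic heart is then immediate. Both $\varepsilon_{\La^*}$ and $\varepsilon_{\Om^*}$ are monic of degree $N$ in $u$, so their difference is $\sum_j d_j(\alpha)u^j$ with $d_j\in\mathbb Q[\alpha]$ and degree at most $N-1$ in $u$; writing the right-hand side as $\sum_j\rho_j(\alpha)u^j$ with each $\rho_j$ regular at $\alpha_0$, the identity in $u$ yields $c_{\La\Om}(\alpha)\,d_j(\alpha)=\rho_j(\alpha)$ for every $j$. If some $d_j(\alpha_0)\neq0$, then $c_{\La\Om}=\rho_j/d_j$ would be regular at $\alpha_0$, contradicting the choice of $\Om$; therefore $d_j(\alpha_0)=0$ for all $j$, i.e. $\varepsilon_{\La^*}(u,\alpha_0)=\varepsilon_{\Om^*}(u,\alpha_0)$.

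Finally I would run the identical argument with $\tilde S(u,\alpha)$, $\varepsilon_{\La^\cd}$ and $\tilde b_{\Gamma\Om}$ in place of $S(u,\alpha)$, $\varepsilon_{\La^*}$ and $b_{\Gamma\Om}$, crucially keeping the \emph{same} $\Om$; this produces $\varepsilon_{\La^\cd}(u,\alpha_0)=\varepsilon_{\Om^\cd}(u,\alpha_0)$, the second required equality for the same dominated superpartition. I expect the main point to be precisely this coupling: the regularity of the $b$'s and the downward induction are routine adaptations of \cite{FJMM}, but it is the existence of a second commuting Sekiguchi operator sharing the eigenbasis $\{P_\La^{(\alpha)}\}$ that lets one $\Om$ — selected once, from the poles of $c_{\La\Om}$ — satisfy both eigenvalue conditions simultaneously.
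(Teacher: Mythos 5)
Your argument is correct and is exactly the one the paper intends: the paper skips this proof, citing that it is "essentially the same as [FJMM, Lemma 2.4]" and that it relies on Proposition~\ref{LemmaSeki}, Corollary~\ref{corotri} and the triangularity \eqref{Ptriangular}, which are precisely the three ingredients you combine (with the key observation, also implicit in the paper, that the single maximal singular $\Omega$ works simultaneously for both commuting Sekiguchi operators). The only cosmetic remark is that the displayed equalities should be read at $\alpha=\alpha_0$, as your derivation correctly produces and as the subsequent Lemma~\ref{lemsw} requires.
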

\begin{proof}
The proof will be skipped since it is 
essentially the same as the one of \cite[Lemma 2.4]{FJMM}.
The proof relies on Proposition~\ref{LemmaSeki}, Corollary~\ref{corotri}
and the triangularity \eqref{Ptriangular}.
\end{proof}

\begin{lemma}\label{lemsw}
 If $P_{\Lambda}^{(\alpha)}$ has a pole at $\alpha=\alpha_0$, then there exists
a partition $\Omega < \Lambda$ and permutations $w,\sigma \in S_N$ (at least one of them distinct from the identity) such that
\begin{align} \label{3.9}
\Omega^*_i &= \Lambda_{w(i)}^* + (w(i)-i) \frac{r-1}{k+1} ,\\ \label{3.10}
w(i) & \equiv i \mod k+1,
\end{align}
and
\begin{align} \label{3.11}
\Omega^\cd_i &= \Lambda_{\sigma(i)}^\cd + (\sigma(i)-i) \frac{r-1}{k+1}, \\
\label{3.12}
\sigma(i) & \equiv i \mod k+1.
\end{align}
\end{lemma}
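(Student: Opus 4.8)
The plan is to start from the preceding lemma, which—granting a pole of $P_\Lambda^{(\alpha)}$ at $\alpha=\alpha_0=\ad$—supplies a superpartition $\Omega<\Lambda$ satisfying the two eigenvalue identities $\varepsilon_{\Omega^*}(u,\alpha_0)=\varepsilon_{\Lambda^*}(u,\alpha_0)$ and $\varepsilon_{\Omega^\cd}(u,\alpha_0)=\varepsilon_{\Lambda^\cd}(u,\alpha_0)$, understood as identities of polynomials in the single variable $u$ (at $\alpha=\alpha_0$). My whole task is to convert each of these identities into a matching of the parts of $\Omega^*$ with those of $\Lambda^*$, and of $\Omega^\cd$ with those of $\Lambda^\cd$, the two matchings producing the sought permutations $w$ and $\sigma$ respectively.

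For the first identity I would recall from Proposition~\ref{LemmaSeki} that $\varepsilon_\lambda(u,\alpha)=\prod_{i=1}^N(\alpha\lambda_i+1-i+u)$, so that $\varepsilon_{\Omega^*}(u,\alpha_0)=\varepsilon_{\Lambda^*}(u,\alpha_0)$ is an equality between two monic degree-$N$ polynomials in $u$ that are explicitly products of linear factors. Two such polynomials coincide precisely when their multisets of roots coincide, i.e.
\[
\{\,\alpha_0\Omega^*_i+1-i \;:\; 1\le i\le N\,\}=\{\,\alpha_0\Lambda^*_j+1-j \;:\; 1\le j\le N\,\}
\]
as multisets. Choosing a bijection $w$ of $\{1,\dots,N\}$ realizing this matching, so that $\alpha_0\Omega^*_i+1-i=\alpha_0\Lambda^*_{w(i)}+1-w(i)$ for every $i$, and solving for $\Omega^*_i$ using $1/\alpha_0=-(r-1)/(k+1)$, yields exactly \eqref{3.9}. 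The identical argument applied to the $\cd$-identity produces $\sigma$ and \eqref{3.11}.

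It then remains to extract \eqref{3.10}, \eqref{3.12}, and the non-triviality. Since $\Omega^*_i$ and $\Lambda^*_{w(i)}$ are (integer) parts of partitions, \eqref{3.9} forces $(w(i)-i)(r-1)/(k+1)$ to be an integer, whence $(k+1)\mid(w(i)-i)(r-1)$; the coprimality of $k+1$ and $r-1$ then gives $(k+1)\mid(w(i)-i)$, that is \eqref{3.10}, and likewise \eqref{3.12} for $\sigma$. Finally, $\Omega<\Lambda$ means $\Omega\ne\Lambda$, so $\Omega^*\ne\Lambda^*$ or $\Omega^\cd\ne\Lambda^\cd$; were both $w$ and $\sigma$ the identity, the $(w(i)-i)$ and $(\sigma(i)-i)$ terms in \eqref{3.9} and \eqref{3.11} would vanish, giving $\Omega^*=\Lambda^*$ and $\Omega^\cd=\Lambda^\cd$, a contradiction. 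Hence at least one of $w,\sigma$ differs from the identity.

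The substantive input is entirely borrowed from the preceding lemma (the Sekiguchi eigenvalue/residue argument in the spirit of \cite[Lemma 2.4]{FJMM}), so what remains here is essentially bookkeeping. The only point that I expect to deserve care is that when $\Lambda^*$ or $\Lambda^\cd$ has repeated parts the corresponding root multisets have coincidences, so the matching $w$ (resp.\ $\sigma$) is not unique. This causes no genuine difficulty: the integrality-plus-coprimality argument for \eqref{3.10} applies verbatim to \emph{any} bijection realizing the multiset equality, so every admissible choice of $w$ and $\sigma$ automatically satisfies all the required congruences.
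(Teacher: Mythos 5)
Your proposal is correct and is essentially the paper's own argument: the paper simply defers to \cite[Lemma 2.5]{FJMM}, whose proof is exactly your root-multiset matching of the two monic degree-$N$ polynomials $\varepsilon_{\Omega^*}(u,\ad)$ and $\varepsilon_{\Lambda^*}(u,\ad)$ (and likewise for $\cd$), followed by the integrality-plus-coprimality step to get the congruences and the observation that $\Omega\neq\Lambda$ forces at least one of $w,\sigma$ to be non-trivial. Your closing remark about non-uniqueness of the matching when parts repeat is handled correctly, since every realizing bijection satisfies the congruences.
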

\begin{proof}
This follows from the previous lemma as in \cite[Lemma 2.5]{FJMM}.
\end{proof}

\begin{lemma}\label{lemadm}
  Let $\Lambda$ be a $(k,r,N)$-admissible superpartition.
Then $\Lambda^*$ and $\Lambda^\cd$ are $(k+1,r,N)$-admissible partitions.
In particular,
\begin{equation}\label{2ine}
\La^*_i-\La^*_j\geq \left\lfloor \frac{j-i}{k+1}\right\rfloor r \qquad\text{and}\qquad \La^\cd_i-\La^\cd_j\geq \left\lfloor \frac{j-i}{k+1}\right\rfloor r \,.
\end{equation}
\end{lemma}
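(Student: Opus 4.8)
The plan is to reduce both claims to the one structural feature of superpartitions not yet invoked in this section: the interlacing of $\Lambda^\cd$ and $\Lambda^*$. Since $\Lambda^\cd/\Lambda^*$ is a horizontal strip, the definition in Section~\ref{sect22} gives the interlacing chain $\Lambda^\cd_i \geq \Lambda^*_i \geq \Lambda^\cd_{i+1} \geq \Lambda^*_{i+1} \geq \cdots$; in particular $\Lambda^*_i \geq \Lambda^\cd_{i+1}$ and $\Lambda^*_{i+k} \geq \Lambda^\cd_{i+k+1}$ for every admissible index. I would record this interlacing as the first (and essentially only) ingredient.

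Granting the interlacing, each of the two admissibility statements follows by a single two-term sandwich. For $\Lambda^*$, fix $i$ with $1 \leq i \leq N-(k+1)$ and write
\[
\Lambda^*_i - \Lambda^*_{i+k+1} \;\geq\; \Lambda^\cd_{i+1} - \Lambda^*_{(i+1)+k} \;\geq\; r,
\]
the first inequality being the interlacing $\Lambda^*_i \geq \Lambda^\cd_{i+1}$ and the second being the $(k,r,N)$-admissibility \eqref{admi} applied at row $i+1$ (legitimate since $1 \leq i+1 \leq N-k$). Symmetrically, for $\Lambda^\cd$ I would write
\[
\Lambda^\cd_i - \Lambda^\cd_{i+k+1} \;\geq\; \Lambda^\cd_i - \Lambda^*_{i+k} \;\geq\; r,
\]
where the first inequality is the interlacing $\Lambda^*_{i+k} \geq \Lambda^\cd_{i+k+1}$ and the second is \eqref{admi} at row $i$. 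Hence both $\Lambda^*$ and $\Lambda^\cd$ satisfy $\mu_i - \mu_{i+k+1} \geq r$, i.e.\ they are $(k+1,r,N)$-admissible.

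The inequalities \eqref{2ine} are then the exact analogue, for the modulus $k+1$, of the implication \eqref{ine1}, and I would derive them by the same telescoping argument. Writing $j-i = q(k+1)+s$ with $0 \leq s < k+1$, I sum the $q$ consecutive gap inequalities $\mu_{i+t(k+1)} - \mu_{i+(t+1)(k+1)} \geq r$ for $t=0,\dots,q-1$ and discard the nonnegative remainder $\mu_{i+q(k+1)} - \mu_j \geq 0$, obtaining $\mu_i - \mu_j \geq q\,r = \lfloor (j-i)/(k+1)\rfloor\, r$ for each $\mu \in \{\Lambda^*,\Lambda^\cd\}$.

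I do not expect a genuine obstacle here: the entire content is carried by the interlacing property, and the only point needing minor care is the index bookkeeping, namely checking that the shifted indices $i+1$ and $i+k$ remain in the admissible range $1 \leq \cdot \leq N-k$ whenever $1 \leq i \leq N-(k+1)$. The one conceptual subtlety worth flagging is that each single partition inherits only the weaker, $(k+1)$-spaced condition rather than a $k$-spaced one: the cross-row comparison in \eqref{admi} (row $i$ of $\Lambda^\cd$ against row $i+k$ of $\Lambda^*$) must be converted into a within-partition comparison, and the interlacing step that effects this conversion shifts the index by one, which is precisely what turns the spacing $k$ into $k+1$.
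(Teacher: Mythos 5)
Your proof is correct and follows essentially the same route as the paper's: both arguments combine the admissibility condition \eqref{admi} with the horizontal-strip interlacing $\La^\cd_i\geq\La^*_i\geq\La^\cd_{i+1}$, the only difference being that the paper reaches the bound via a short case analysis on whether the intermediate row is fermionic, whereas you invoke the interlacing inequality directly, which is slightly cleaner. The telescoping derivation of \eqref{2ine} matches what the paper leaves implicit.
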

\begin{proof}
We have $\Lambda_{i+1}^\cd-\Lambda_{i+k+1}^* \geq r$, and thus
\begin{equation}\Lambda_{i}^*-\Lambda_{i+k+1}^* \geq \Lambda_{i+1}^*-\Lambda_{i+k+1}^* \geq r-1 \, .
\end{equation}
The equality $\Lambda_{i+1}^*-\Lambda_{i+k+1}^* = r-1$ 
can
only occur if $\Lambda_{i+1}^\cd-\Lambda_{i+1}^*=1$, in which case, 
$\Lambda^*_i \geq \Lambda_{i+1}^\cd
 > \Lambda_{i+1}^*$.  This gives $\Lambda_{i}^*-\Lambda_{i+k+1}^* \geq r$
as wanted.

Similarly, we have
\begin{equation}\Lambda_{i}^\cd-\Lambda_{i+k+1}^\cd \geq \Lambda_{i}^\cd-\Lambda_{i+k}^\cd \geq r-1 \, .
\end{equation}
This time, the case $\Lambda_{i}^\cd-\Lambda_{i+k}^\cd=r-1$ can
only happen if $\Lambda_{i+k}^\cd-\Lambda_{i+k}^*=1$, in which case
$\Lambda^\cd_{i+k} > \Lambda_{i+k}^*
 \geq \Lambda_{i+k+1}^\cd$.
This gives $\Lambda_{i}^\cd-\Lambda_{i+k+1}^\cd \geq r$.
\end{proof}

The next proposition says that the 
almost $(k,r,N)$-admissible superpartitions
are regular.  The proof, being rather involved, will be 
relegated to Appendix \ref{Prop8}.
\begin{proposition} \label{propquasi}
 Let $\Lambda$ be a superpartition obtained from a 
$(k,r,N)$-admissible superpartition $\Gamma$ by doing one of the following:
\begin{itemize}
\item[i)] removing a circle
\item[ii)] adding a circle
\item[iii)] changing a circle into a square
\item[iv)] changing a square into a circle
\end{itemize}
Then $P_{\Lambda}^{(\aa)}$ does not have a pole at $\alpha=\ad$.
\end{proposition}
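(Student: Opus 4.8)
My plan is to argue by contradiction, feeding the permutation characterization of a pole supplied by Lemma~\ref{lemsw} into the near-admissibility inequalities inherited from $\Gamma$. Suppose $P_{\Lambda}^{(\ad)}$ had a pole at $\ad$. Then Lemma~\ref{lemsw} — itself resting on the Sekiguchi eigenvalue equations of Proposition~\ref{LemmaSeki} and the triangularity of Corollary~\ref{corotri} — produces a superpartition $\Omega<\Lambda$ (so that $\Omega^{*}\leq\Lambda^{*}$ and $\Omega^{\cd}\leq\Lambda^{\cd}$ in dominance, and $\Omega$ carries the same fermionic degree $m$ as $\Lambda$) together with permutations $w,\sigma\in S_N$, at least one of which is not the identity, obeying \eqref{3.9}--\eqref{3.12}. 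Writing $w(i)-i=(k+1)\,e^{w}_i$ and $\sigma(i)-i=(k+1)\,e^{\sigma}_i$ with $e^{w}_i,e^{\sigma}_i\in\Z$ (this integrality is exactly the content of \eqref{3.10} and \eqref{3.12}, and uses $\gcd(k+1,r-1)=1$), the relations become $\Omega^{*}_i=\Lambda^{*}_{w(i)}+e^{w}_i(r-1)$ and $\Omega^{\cd}_i=\Lambda^{\cd}_{\sigma(i)}+e^{\sigma}_i(r-1)$. The whole point is to show that no nontrivial $w$ or $\sigma$ can be reconciled with $\Omega^{*}$ and $\Omega^{\cd}$ being partitions that assemble into a genuine superpartition $\Omega<\Lambda$; this contradiction is what yields regularity.

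The driving mechanism is the tension between a residue-class shuffle and the weakly-decreasing condition on $\Omega^{*}$ and $\Omega^{\cd}$. Because $\Lambda$ is obtained from the $(k,r,N)$-admissible $\Gamma$ by a single move, exactly one of $\Lambda^{*},\Lambda^{\cd}$ equals the corresponding partition of $\Gamma$, while the other is obtained from $\Gamma$'s by adding or deleting a single cell; by Lemma~\ref{lemadm} both $\Gamma^{*}$ and $\Gamma^{\cd}$ are $(k+1,r,N)$-admissible, so the inequalities \eqref{2ine} hold for $\Lambda^{*}$ and $\Lambda^{\cd}$ up to a defect of at most one unit localized at a single row. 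Iterating \eqref{2ine} gives $\Lambda^{*}_i-\Lambda^{*}_{i+(k+1)e}\geq e\,r$ and its $\Lambda^\cd$-analogue, so that $\Omega^{*}_i\leq\Lambda^{*}_i-e^{w}_i$ whenever $e^{w}_i>0$ and $\Omega^{*}_i\geq\Lambda^{*}_i+|e^{w}_i|$ whenever $e^{w}_i<0$. I would then confront these per-row estimates with $\Omega^{*}_i\geq\Omega^{*}_{i+1}$: a nontrivial $w$ inevitably requires some window gap of $\Lambda^{*}$ (at the boundary between a lowered and an unmoved row) to be $\leq r-1$, whereas \eqref{2ine}, together with the finer inequalities inherited from the admissibility \eqref{admi} of $\Gamma$, keeps every such gap $\geq r$ except at the unique defect — and a single defect provides too little slack to absorb an entire length-$(k+1)$ cyclic shift.

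For the genuinely fermionic cases ($m>0$) the two permutations cannot be handled separately: the demand that $\Omega^{\cd}/\Omega^{*}$ be at once a horizontal and a vertical $m$-strip forces $\Omega^{\cd}_i-\Omega^{*}_i\in\{0,1\}$ for every $i$, i.e. $\Lambda^{\cd}_{\sigma(i)}-\Lambda^{*}_{w(i)}+(e^{\sigma}_i-e^{w}_i)(r-1)\in\{0,1\}$, which locks $w$ and $\sigma$ together row by row within each residue class. I would run the proof as a case analysis over the four moves (i)--(iv): in each case I first locate the perturbed partition and the row carrying the defect, and then show that the coupled system above, combined with the partition and dominance constraints, forces $w=\sigma=\mathrm{id}$ and hence $\Omega=\Lambda$, against $\Omega<\Lambda$. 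The degenerate sub-cases in which a move drops the fermionic degree to $m=0$ (so $\Lambda^{*}=\Lambda^{\cd}$, $\Omega^{*}=\Omega^{\cd}$ and $w=\sigma$) collapse to the purely bosonic single-defect situation already settled in \cite{FJMM}.

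The hard part is exactly this coupled bookkeeping. Individually neither $\Lambda^{*}$ nor $\Lambda^{\cd}$ is rigid: since the window-$(k+1)$ inequalities \eqref{2ine} may be tight, a nontrivial residue-class permutation of one of them alone can perfectly well return a legitimate partition $\leq\Lambda^{*}$ (resp. $\leq\Lambda^{\cd}$), so the matching of Sekiguchi eigenvalues by itself does not preclude a pole. What rescues the argument is that any such shuffle is pinned down by the lone admissibility defect and, when $m>0$, that the two shuffles are chained together by the strip condition; proving these joint requirements inconsistent calls for a careful, location-dependent tracking of how the single defect propagates through the iterated inequalities \eqref{2ine}, separately for each of the four moves. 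This is the delicate and lengthy portion of the proof, and it is for this reason that it is deferred to Appendix~\ref{Prop8}.
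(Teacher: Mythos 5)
Your overall strategy---assume a pole, invoke Lemma~\ref{lemsw}, and play the resulting permutations $w,\sigma$ against admissibility inequalities inherited from $\Gamma$---is exactly the paper's. One efficiency you miss: cases ii) and iv) need none of this machinery, since adding a circle or converting a square into a circle leaves $\Lambda$ $(k,r,N)$-admissible, so Proposition~\ref{lemmapoles} applies directly; only i) and iii) require the permutation analysis.

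The serious problem is that your driving mechanism---that the relevant gaps of $\Lambda^*$ and $\Lambda^\cd$ are $\geq r$ except at a single defect row, so that one unit of slack cannot absorb a shift by $k+1$---rests on a false premise, and the argument built on it would collapse. The obstruction to a descent $w(l)>w(l+1)$ involves windows of length $k$, not $k+1$: combining $\Omega^*_l\geq\Omega^*_{l+1}$ with \eqref{3.9}, \eqref{3.10} and the admissibility of $\Gamma$ forces $w(l)=w(l+1)+k$ together with $\Lambda^*_{w(l+1)}-\Lambda^*_{w(l)}=r-1$, and a gap of exactly $r-1$ over $k$ consecutive rows is \emph{permitted} by \eqref{admi} at every fermionic row of $\Gamma$ (where $\Gamma^\cd_j-\Gamma^*_j=1$), not merely at the row touched by the move; indeed in case i) one has $\Lambda^*=\Gamma^*$ with no defect at all, yet descents of $w$ are not excluded. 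So nontrivial $w$ and $\sigma$ survive any slack count, and the paper eliminates them only by a much finer analysis: the strip condition $\Omega^\cd_i-\Omega^*_i\in\{0,1\}$ (which you do identify) pins $\sigma(i)-w(i)\in\{0,\pm(k+1)\}$ and shows $w$ and $\sigma$ differ in at most two positions; descents are then shown to be possible only adjacent to those positions; and an endgame argument (a permutation in this residue pattern with one descent would satisfy $w(j)\geq j$ for all $j$) reduces to $w=\mathrm{id}$ and $\sigma$ a transposition, or vice versa, whence either $\Omega=\Lambda$ (contradicting $\Omega<\Lambda$) or a violation of \eqref{admi} for $\Gamma$. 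Your proposal gestures at this coupling but explicitly defers the part that constitutes the proof, and the heuristic it offers in its place does not work.
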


\section{Lie superalgebras of differential operators}\label{Lie}

In this section, we introduce two sets  of differential operators.
The first one
forms a super Lie algebra isomorphic to $sl(1,2)$ 
while the other one  gives the negative-half of the super-Virasoro algebra. 
In the following section, we will determine the action of some elements of
these algebras on a generic $P_\La^{(\a)}$. These results
will then be used to
characterize the ideal spanned by the $P_{\La}^{(\ad)}$'s indexed by
$(k,r,N)$-admissible superpartitions.

We define the following bi-homogeneous differential operators of first order 
whose action preserves
the ring ${\bf \Lambda}^{\theta}=\mathbb C[x_1,\dots,x_N,\theta_1,\dots,\theta_N]^{S_N}$.
\begin{align}\label{homo}
\nabla& =\sum_{i=1}^N\partial_{x_i} &
\nabla^\perp& =\sum_{i=1}^Nx_i\left(x_i\partial_{x_i}+\theta_i\partial_{\theta_i}+\frac{N}{\alpha}\right)\nonumber\\
q& =\sum_{i=1}^N\theta_i\partial_{x_i} &
q^\perp& =\sum_{i=1}^Nx_i\partial_{\theta_i}\nonumber\\
Q&= \sum_{i=1}^N\theta_i\left(\frac{N}{\alpha}+x_i\partial_{x_i}\right)
&
Q^\perp& =\sum_{i=1}^N\partial_{\theta_i}\nonumber\\
E&= \sum_{i=1}^N\left(\frac{N}{\alpha}+ x_i\partial_{x_i}\right) &
\mathcal{E}& =\sum_{i=1}^N\left(x_i\partial_{x_i}+ \theta_i\partial_{\theta_i}\right).
\end{align}
The operators given above generate an eight-dimensional Lie superalgebra $\mathcal A$ whose 
(anti-)commutation relations are given in the following table.\footnote{The algebra $\mathcal A$ is isomorphic to $sl(1|2)$ (see for instance \cite{Dic}), 
with the following correspondence:
$$H=\frac12(E+\mathcal E), \quad Z=\frac12(E-\mathcal E),\quad E^+=i\nabla^\perp,\quad E^-=i\nabla,\quad F^+=iq^\perp,\quad F^-=Q^\perp,\quad \bar F^+=Q,\quad \bar F^-=iq.$$}
\begin{center}\label{tab1}
\begin{tabular}{l|cccccccc}
              &$ E$ & $\mathcal{E}$ & $q$ & $Q$ &$ q^\perp $& $Q^\perp$   & $\nabla$   & $\nabla^\perp$  \\ \hline
 $ E$         & 0   & 0             & $-q$ & 0 & $q^\perp$   & 0          & $-\nabla$ & $\nabla^\perp$ \\
 $\mathcal{E}$&   & 0               & 0 & $Q$ & 0            & $-Q^\perp$  &$-\nabla$ & $\nabla^\perp$ \\
  $q$         &     &              &  0 &  0 &  $\mathcal{E}$          &   $\nabla$  & 0        &   $Q$\\
 $ Q$         &     &         &   & 0  &      $\nabla^\perp$          & $E$      & $-q$  &  0 \\
  $q^\perp$& &            &    &      &  0 & 0  & $-Q^\perp$  &  0 \\
  $Q^\perp$ &       &   &      &    &   &  0 &  0 & $q^\perp$  \\
 $\nabla$   &  &    &    &    &   &   &  0 &  $E+\mathcal{E}$ \\
$\nabla^\perp$& &    &   &   &  &     &    &  0 \\
\end{tabular}
\end{center}
Note in particular that all commutation relations can be obtained by applying the rule $(AB)^\perp=B^\perp A^\perp$ and the super-Jacobi identities to the following six basic anti-commutations:
\begin{align}\label{4gene}
 &\{q,q\}=0, \quad \{q,Q\}=0,\quad \{Q,Q\}=0,\nonumber\\
&\{q,q^\perp\}=\mathcal{E},\quad \{q,Q^\perp\}=\nabla, \quad \{Q,Q^\perp\}=E.\end{align}
The relevance of this simple observation is that the algebra $\mathcal A$ is
generated by its fermionic elements: $q,\, q^\perp,\, Q$ and $Q^\perp$. 

There is a another natural Lie superalgebra of symmetric operators in $x$ and $\theta$.  Let us define the following first order (but non-homogeneous) linear operators: 
\begin{align} \label{virasoro}
&L_{n}=\sum_{i=1}^Nx_i^{-n}\left(x_i\partial_{x_i}+\frac{1-n}{2}\theta_i\partial_{\theta_i}\right),
\qquad n\leq 1, \nonumber\\
&G_{r}=\sum_{i=1}^N x_i^{-r+1/2}\left(\partial_{\theta_i}+\theta_i\partial_{x_i}\right),\qquad r\leq 1/2,
\end{align}
where it is understood that $n$ is an integer and $r$ is a half-integer.  These operators generate the negative half of the super-Virasoro algebra,
without central charge, which we will denote ${{\rm sVir}}^{(-)}$.
 \begin{equation}\label{sVir}
[L_n,L_m]=(n-m)L_{n+m},\quad [L_n,G_r]=\left(\frac{n}{2}-r\right)G_{n+r},\quad \{G_r,G_s\}=2L_{r+s}.\end{equation}
Obviously, the operators $L_n$ and $G_r$ preserve ${\bf \Lambda}^{\theta}$
for all $n\leq 1$ and $r\leq 1/2$.   The operators $G_{\pm 1/2},L_0,L_{\pm 1}$ generate a five-dimensional subalgebra isomorphic to $osp(1,2)$.
The relation between the latter operators and the homogeneous ones 
introduced in \eqref{homo} is simply:
\begin{align}\label{global}
&L_{1}=\nabla,\qquad
L_{-1}=\nabla^\perp -\frac{N  p_1}{\alpha},\qquad
L_{0}=\frac{1}{2}\left(\mathcal{E}+E\right)-\frac{N^2}{2\alpha},\nonumber\\
&G_{1/2}=q+Q^\perp,\qquad
G_{-1/2}=Q+q^\perp-\frac{N\tilde p_0}{\alpha} \,,
\end{align}
where we recall that $\tilde p_0$ and $p_1$ were defined in \eqref{spower}.
In view of relations \eqref{4gene}, these operators are almost completely generated by $q,q^\perp,\,Q$ and $Q^\perp$.
The missing pieces are $\tilde p_0$ in $G_{-1/2}$ and $p_1$ in $L_{-1}$. But since $L_{-1}=G_{-1/2}^2$, it suffices to add $\tilde p_0$ to $q,q^\perp,\,Q$ and $Q^\perp$ in order to generate $osp(1,2)$.  We will see in the next section
that, remarkably, the action of those operators on Jack superpolynomials
can be given explicitly.

\section{The ideal $\mathcal I_N^{(k,r)}$}\label{ideal}

Consider the subspace 
\begin{equation}
\mathcal I_N^{(k,r)} = {\rm span}_{\mathbb C} \bigl\{ 
P^{(\ad)}_{\Lambda}(x_1,\dots,x_N,\theta_1,\dots,\theta_N) \, \big| \, 
\Lambda~{\rm is}~(k,r,N){\text-}{\rm admissible}
\bigl\}.
\end{equation}
We will show that $\mathcal I_N^{(k,r)}$ is an ideal of
${\bf \Lambda}_N^\ta$ that is also stable under the action of the
algebras introduced in the previous section.

We first give the explicit action 
of $\tilde p_0 (=\sum_{i=1}^N\ta_i),\,q,\,q^\perp,\,Q$ and $Q^\perp$ on $P_\La$. 
In all cases, the expansion coefficients are expressed in terms of specific ratios of the hook-lengths  introduced  in \eqref{2hook}.
\begin{proposition}  \label{prop9}
We have 
\begin{align}
\tilde p_0 \, P_{\Lambda}^{(\aa)} &= \sum_{\Omega} (-1)^{\# \Omega^\circ} 
\left( \prod_{s \in {\rm col}_{\Omega^\circ}} 
\frac{h_{\Lambda}^{(\alpha)}(s)}{h_{\Omega}^{(\alpha)}(s)}  \right) P_{\Omega}^{(\aa)}
\label{forp0}\\
Q \, P_{\Lambda}^{(\aa)} & = \sum_{\Omega} (-1)^{\# \Omega^\circ} 
\left( \prod_{s \in {\rm col}_{\Omega^\circ}} 
\frac{h_{\Lambda}^{(\alpha)}(s)}{h_{\Omega}^{(\alpha)}(s)}  \right) 
\frac{(N+ 1- i +\alpha(j-1))}{\alpha}  \, P_{\Omega}^{(\aa)} \label{forQ}\\
Q^{\perp} P_{\Lambda}^{(\aa)} & = \sum_{\Omega} (-1)^{\# \Omega^\circ} 
\left( \prod_{s \in {\rm row}_{\Omega^\circ}} 
\frac{h^{\Omega}_{(\alpha)}(s)}{h^{\Lambda}_{(\alpha)}(s)}  \right) 
\bigl(N+1-i+\alpha(j-1)\bigr)  \, P_{\Omega}^{(\aa)} \label{forQperp}\\
q \,  P_{\Lambda}^{(\aa)} & = \sum_{\Omega} (-1)^{\# \Omega^\circ} 
\left( \prod_{s \in {\rm row}_{\Omega^\circ}} 
\frac{h_{\Lambda}^{(\alpha)}(s)}{h_{\Omega}^{(\alpha)}(s)}  \right) 
  \, P_{\Omega}^{(\aa)} \label{forq}\\
q^{\perp} P_{\Lambda}^{(\aa)} & = \sum_{\Omega} (-1)^{\# \Omega^\circ} 
\left( \prod_{s \in {\rm col}_{\Omega^\circ}} 
\frac{h^{\Omega}_{(\alpha)}(s)}{h^{\Lambda}_{(\alpha)}(s)}  \right) 
  \, P_{\Omega}^{(\aa)}\, . \label{forqperp}
\end{align}
The sum is taken in \eqref{forp0} and \eqref{forQ} over
all $\Omega$'s obtained by adding a circle to
$\Lambda$, in \eqref{forQperp} over
all $\Omega$'s obtained by removing a circle from
$\Lambda$,  in \eqref{forq} over
all $\Omega$'s obtained by converting a square of $\Lambda$ into
a circle, and in \eqref{forqperp} over
all $\Omega$'s obtained by converting a circle of $\Lambda$ into
a square.  Observe that in each of those cases, 
$\Lambda$ and $\Omega$ differ in exactly one cell
which we call the \underline{marked} cell.

\noindent The symbol $\# \Omega^\circ$ stands for the number
of circles in $\Omega$ above the marked cell.

\noindent 
The symbol ${\rm col}_{\Omega^\circ}$ stands for
the column of $\Omega$ and $\Lambda$ 
above the marked cell, while
${\rm row}_{\Omega^\circ}$ stands for
the row of $\Omega$ and $\Lambda$ 
to the left of the marked cell.

\noindent Finally, in \eqref{forQ} and \eqref{forQperp}, $(i,j)$ is the position
of the marked cell.
\end{proposition}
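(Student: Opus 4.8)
The plan is to expand each operator applied to $P_\Lambda^{(\alpha)}$ in the Jack superpolynomial basis and to treat separately the \emph{support} of the expansion and the \emph{values} of the coefficients. Since all five operators in \eqref{homo} are bi-homogeneous, tracking the bosonic and fermionic degrees shows that every $P_\Omega$ occurring in $A\,P_\Lambda$ has the total degree prescribed by $A$ and fermionic degree $m\pm1$. To pin down the support I would first compute $A\,m_\Lambda$ in the monomial basis, where $\tilde p_0$ and $Q$ produce monomials indexed by add-a-circle superpartitions, $Q^\perp$ by remove-a-circle ones, and $q,q^\perp$ by converting a square into a circle or the reverse; combined with the unitriangularity \eqref{Ptriangular} this exhibits the leading terms as the claimed single-cell modifications. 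To exclude spurious lower terms I would invoke the Sekiguchi operators of Proposition~\ref{LemmaSeki}: because $S(u,\alpha)$ and $\tilde S(u,\alpha)$ act diagonally with eigenvalues $\varepsilon_{\Lambda^*}$ and $\varepsilon_{\Lambda^\circledast}$ depending only on $\Lambda^*$ and $\Lambda^\circledast$, commuting $A$ past $S$ and $\tilde S$ forces any $\Omega$ in the support to have $\Omega^*$ and $\Omega^\circledast$ differing from $\Lambda^*,\Lambda^\circledast$ by at most one cell, which isolates exactly the single-cell modifications and fixes the marked-cell position $(i,j)$.

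For the coefficients I would exploit the scalar product \eqref{scap}. A direct check on the power-sum generators $p_n$ and $\tilde p_n$ shows that $q^\perp$ and $Q^\perp$ are the exact adjoints of $q$ and $Q$ under $\LL\,\cdot\,|\,\cdot\,\RR_\alpha$ (for instance $q\,p_n=n\tilde p_{n-1}$ and $q^\perp\tilde p_n=p_{n+1}$ give $q^\dagger=q^\perp$, while the $N/\alpha$ term in $Q$ is precisely what makes $Q^\dagger=Q^\perp$). Orthogonality then yields $c^{A}_{\Lambda\Omega}\,\|P_\Omega\|^2=\pm\,c^{A^\dagger}_{\Omega\Lambda}\,\|P_\Lambda\|^2$, the sign coming from the fermionic parity, so that \eqref{forqperp} follows from \eqref{forq} and \eqref{forQperp} from \eqref{forQ} once the norm ratio $\|P_\Lambda\|^2/\|P_\Omega\|^2$ is evaluated by \eqref{norm}. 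Here lies the mechanism that swaps upper hooks for lower hooks and rows for columns: the all-cells product coming from the norm formula must collapse, against the row product of upper hooks, into the column product of lower hooks (and vice versa), an identity I would verify directly from \eqref{norm} and \eqref{2hook}. Finally \eqref{forQ} reduces to \eqref{forp0} through the decomposition $Q=(N/\alpha)\tilde p_0+\sum_i\theta_i x_i\partial_{x_i}$, which combines the scalar $N/\alpha$ with the content weight of the promoted row supplied by $\sum_i\theta_i x_i\partial_{x_i}$ (the Sekiguchi weight of Proposition~\ref{LemmaSeki}) to give the full marked-cell factor $(N+1-i+\alpha(j-1))/\alpha=b^{(\alpha,N)}/\alpha$.

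This leaves \eqref{forp0} and \eqref{forq} as the two genuinely independent computations, which I would carry out from the non-symmetric realization \eqref{jackinnonsym}. Since $\tilde p_0=\sum_i\theta_i$ commutes with every $\mathcal K_w$ and kills the first $m$ slots of $\theta_1\cdots\theta_m$, one gets $\tilde p_0 P_\Lambda\propto\sum_w\mathcal K_w\sum_{i>m}\theta_i\,\theta_1\cdots\theta_m\,E_{\tilde\Lambda}$; reordering the anticommuting variables produces the sign $(-1)^{\#\Omega^\circ}$, and resumming the symmetrization identifies each summand with a $P_\Omega$ obtained by circling a row. The coefficient is then assembled from the prefactors $f_{\Lambda^s},f_{\Omega^s}$ together with the Knop--Sahi normalization $\prod_s d_{\tilde\Lambda}(s)$, which the Lemma preceding Proposition~\ref{lemmapoles} rewrites through the lower hooks $h^\Lambda_{(\alpha)}$, and after cancellation this must collapse to the upper-hook column ratio $\prod_{s\in\mathrm{col}_{\Omega^\circ}}h^{(\alpha)}_\Lambda(s)/h^{(\alpha)}_\Omega(s)$. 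The case of $q=\sum_i\theta_i\partial_{x_i}$ is analogous, the additional $\partial_{x_i}$ relating $E_{\tilde\Lambda}$ to the non-symmetric Jack polynomial with one box deleted via the affine-Hecke relations \eqref{relhecke}.

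I expect the main obstacle to be precisely this last step: controlling the signs generated by anticommuting the $\theta$'s through the symmetrizer, and showing that the ratio of non-symmetric normalizations (the $d_{\tilde\Lambda}$ products with $f_{\Lambda^s}/f_{\Omega^s}$) telescopes to the clean single-row or single-column hook-length ratio rather than to a product over the whole diagram. As independent checks I would specialize both sides at $x_1=\cdots=x_N=1$ using the evaluation formula \eqref{spe} to fix the overall normalization, and verify that the five formulas are interchanged correctly under the conjugation duality \eqref{dual}, which exchanges rows with columns and, on the hooks \eqref{2hook}, upper with lower.
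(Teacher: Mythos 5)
Your overall architecture matches the paper's: the $\perp$ formulas follow from the non-$\perp$ ones because $q^\perp$ and $Q^\perp$ are adjoint to $q$ and $Q$ under \eqref{scap} (with the norm formula \eqref{norm} converting the row product of upper hooks into the column product of lower hooks), so the real content is \eqref{forp0} and \eqref{forq}. But it is exactly there that the proposal has no workable argument, and the "main obstacle" you flag at the end is not a loose end — it is the entire proof. Two concrete gaps. First, the support argument via the Sekiguchi operators does not work as stated: $S(u,\alpha)$ and $\tilde S(u,\alpha)$ do not commute with $\tilde p_0$, $q$, etc., and without an explicit formula for the commutator their diagonality places no constraint on which $P_\Omega$ occur in $A\,P_\Lambda$; computing $A\,m_\Lambda$ plus triangularity only bounds the support in dominance order, it does not isolate the single-cell modifications. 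The paper does not rederive the support at all — it imports it from \cite{DLMeva} (Proposition 11 there for $\tilde p_0$, Eq.\ (A.22) for $q$), and even then must work to exclude the $\Omega$'s with $\ell(\Omega^*)>\ell(\Lambda^*)$ in \eqref{forq}, using \eqref{PropFactoIdual} and linear independence of the $P_{\mathcal R\Omega'}$.

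Second, and more seriously, the coefficients. You propose to expand $\sum_w\mathcal K_w\,\theta_i\theta_1\cdots\theta_m E_{\tilde\Lambda}$ and assert that the Knop--Sahi normalizations $\prod_s d_{\tilde\Lambda}(s)$ together with $f_{\Lambda^s}/f_{\Omega^s}$ "must collapse" to the single-column hook ratio. No mechanism is given for why this symmetrization decomposes into the $P_\Omega$'s (that requires relating $E_{\tilde\Lambda}$ to $E_{\tilde\Omega}$, the non-symmetric polynomial of a rearranged composition, through the Hecke relations \eqref{relhecke} — a substantial computation you defer), nor for why normalization constants of integral forms should govern transition coefficients between monic polynomials. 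The paper's mechanism is entirely different: an induction on the degree of $\Lambda^{\circledast}$. Restricting to $\ell=\ell(\Lambda^{\circledast})$ variables and using the factorizations \eqref{PropFactoI}--\eqref{PropFactoII} strips the first column (or its circle), showing the coefficients are invariant under $\mathcal C$ and $\widetilde{\mathcal C}$ and hence determined by induction; the one coefficient not reachable this way (circle added to, or square converted in, the first column) is isolated by applying $\hat\omega_{\alpha}$ and the row-removal identities \eqref{PropFactoIdual}--\eqref{PropFactoIIdual}, and evaluated as a norm ratio. None of this inductive structure appears in your proposal. A minor further point: deriving \eqref{forQ} from \eqref{forp0} via $Q=(N/\alpha)\tilde p_0+\sum_i\theta_i x_i\partial_{x_i}$ presupposes a Pieri formula for $\sum_i\theta_i x_i\partial_{x_i}$ that you have not proved (it could be repaired by realizing that operator as a commutator with the diagonal operator $\Delta$ and using its eigenvalues, but that argument is not made); the paper instead proves \eqref{forQ} by the same inductive method as \eqref{forq}.
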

\begin{proof} When $N$ is large enough with respect to the degree
of $\Lambda$, the scalar product 
\eqref{scap} is well-defined.  In this case, $q$ and $q^\perp$ (resp.
$Q$ and $Q^\perp$) are adjoint of each others, and \eqref{forQperp} can
be obtained from \eqref{forQ} (resp. 
\eqref{forqperp} can
be obtained from \eqref{forq}).  Note that when a formula is established
for large $N$, it also holds for all $N$ due to the stability of the Jack
superpolynomials with respect to restriction of variables.

It thus suffices to prove \eqref{forp0}, \eqref{forQ} and \eqref{forq}.
We will only prove \eqref{forp0} and \eqref{forq}, as \eqref{forQ} 
can be proven with similar methods as those used to prove \eqref{forq}.
Given that the proofs are quite long and rely on intricate properties
of Jack superpolynomials, they will be relegated to  
Appendix~\ref{apppieri}.
\end{proof}

Here is an example illustrating formula \eqref{forp0}. 
Identifying $P_\La^{(\aa)}$ with the diagram of $\La$, we have
\begin{equation} 
\tilde p_0\times {\tableau[scY]{&\\&\\&\bl\tcercle{}}}\, =
(-1)^0\times 1\times {\tableau[scY]{&&\bl\tcercle{}\\&\\&\bl\tcercle{}}}\, 
+(-1)^1\times\frac{(2+2\a)(1+2\a)\a}{(3+2\a)(2+2\a)(1+\a)} \times{\tableau[scY]{&\\&\\&\bl\tcercle{}\\\bl\tcercle{}}}\, 
\end{equation}

We can now show that $p_0, Q,Q^\perp,q$ and $q^\perp$ preserve
$\mathcal I_N^{(k,r)}$.
\begin{proposition} \label{prop14} For 
$\upsilon \in\{\tilde p_0, Q,Q^\perp,q,q^\perp\}$, we  have
$\upsilon \, \mathcal I_N^{(k,r)} \subseteq \mathcal I_N^{(k,r)}$.
\end{proposition}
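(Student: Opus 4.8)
The plan is to reduce everything to the explicit Pieri-type formulas of Proposition~\ref{prop9} and then show that, upon specialising $\alpha=\ad$, only the admissible superpartitions survive on the right-hand side. By linearity it suffices to prove $\upsilon\,P_\Lambda^{(\ad)}\in\mathcal I_N^{(k,r)}$ for a single $(k,r,N)$-admissible $\Lambda$. Each of the operators $\tilde p_0,q,q^\perp,Q,Q^\perp$ has coefficients that are regular at $\alpha=\ad$ (the only $\alpha$-dependence is the factor $N/\alpha$ in $Q$, which is harmless since $\ad\neq 0$), and $P_\Lambda^{(\ad)}$ is regular by Proposition~\ref{lemmapoles}; hence $\upsilon\,P_\Lambda^{(\ad)}$ is a well-defined symmetric superpolynomial, equal to the limit $\alpha\to\ad$ of the right-hand side of the appropriate formula among \eqref{forp0}--\eqref{forqperp}.

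Write that formula as $\upsilon\,P_\Lambda^{(\alpha)}=\sum_\Omega c_{\Lambda\Omega}(\alpha)\,P_\Omega^{(\alpha)}$, where each $\Omega$ differs from $\Lambda$ in exactly one cell (the marked cell) and is therefore \emph{almost} $(k,r,N)$-admissible; by Proposition~\ref{propquasi} every $P_\Omega^{(\alpha)}$ appearing is itself regular at $\ad$. I would first record that the coefficients $c_{\Lambda\Omega}(\alpha)$ are regular at $\ad$ as well. Indeed, since $\upsilon$ has coefficients regular at $\ad$ and $P_\Lambda$ is regular there, every monomial coefficient of $\upsilon\,P_\Lambda^{(\alpha)}$ is regular at $\ad$; because the $P_\Omega^{(\alpha)}$ are monic and triangular in the monomial basis with pairwise distinct leading monomials $m_\Omega$, a downward induction on the dominance order extracts each $c_{\Lambda\Omega}(\alpha)$ as the monomial coefficient of a superpolynomial regular at $\ad$, minus already-subtracted regular higher terms. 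Consequently the limit is
\[
\upsilon\,P_\Lambda^{(\ad)}=\sum_\Omega c_{\Lambda\Omega}(\ad)\,P_\Omega^{(\ad)} .
\]

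The heart of the proof is then the vanishing lemma: if $\Omega$ occurs in the sum but is \emph{not} $(k,r,N)$-admissible, then $c_{\Lambda\Omega}(\ad)=0$. Because every hook length is linear in $\alpha$, the factor $h^{(\alpha)}_\Lambda(s)=l_{\Lambda^\cd}(s)+\alpha\bigl(a_{\Lambda^*}(s)+1\bigr)$ vanishes at $\ad$ precisely when $l_{\Lambda^\cd}(s)(r-1)=(k+1)\bigl(a_{\Lambda^*}(s)+1\bigr)$, which, by the coprimality of $k+1$ and $r-1$, forces $l_{\Lambda^\cd}(s)=t(k+1)$ and $a_{\Lambda^*}(s)+1=t(r-1)$ for some integer $t\geq 1$; the analogous rigidity holds for the lower hook $h^\Lambda_{(\alpha)}(s)$. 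For the three formulas in which $\Lambda$-hooks occupy the numerator ($\tilde p_0,Q,q$) one shows that the failure of admissibility of $\Omega$ produces strictly more numerator factors vanishing at $\ad$ than denominator factors, so the (already regular) ratio vanishes. For the two formulas carrying $\Omega$-hooks in the numerator ($Q^\perp,q^\perp$) one argues the other way around: the denominator, being built from lower hooks of the admissible $\Lambda$, never vanishes at $\ad$ by Proposition~\ref{lemmapoles}, while non-admissibility of $\Omega$ forces a lower hook of $\Omega$ along the marked row to vanish. In every case $c_{\Lambda\Omega}(\ad)=0$. Granting this, the surviving terms are indexed by admissible $\Omega$, for which $P_\Omega^{(\ad)}\in\mathcal I_N^{(k,r)}$, whence $\upsilon\,P_\Lambda^{(\ad)}\in\mathcal I_N^{(k,r)}$ and the proposition follows.

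The main obstacle is the vanishing lemma of the third paragraph: it is the superspace counterpart of the delicate combinatorial estimates of \cite{FJMM} and requires a careful case analysis matching the position of the marked cell (and the circled rows and columns through which the hook-length products are taken) against the precise way in which admissibility \eqref{admi} can break down, always invoking the coprimality of $k+1$ and $r-1$ to convert the divisibility constraint $l(r-1)=(k+1)(a+1)$ into the rigid form above. I expect it to be efficient to treat $\tilde p_0$ together with $Q$ (they share the same column-hook structure) and to exploit the adjunctions $(q,q^\perp)$ and $(Q,Q^\perp)$ already used in the proof of Proposition~\ref{prop9} to roughly halve the number of independent cases.
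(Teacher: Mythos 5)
Your architecture coincides with the paper's: expand $\upsilon\,P_\Lambda^{(\alpha)}$ via Proposition~\ref{prop9}, invoke Proposition~\ref{propquasi} for the regularity of the almost-admissible $P_\Omega^{(\alpha)}$'s, and then analyse the coefficients at $\alpha=\ad$. Two points of comparison are worth making. First, your abstract regularity argument for the coefficients (unitriangularity of the $P_\Omega^{(\alpha)}$ in the monomial basis, downward induction on dominance) is a legitimate alternative to what the paper does, which is to show directly that the relevant denominators --- the products $\prod h^{(\alpha)}_{\Omega}(s)$ over ${\rm col}_{\Omega^\circ}$ or ${\rm row}_{\Omega^\circ}$ for \eqref{forp0}, \eqref{forQ}, \eqref{forq}, and $\prod h^{\Lambda}_{(\alpha)}(s)$ for \eqref{forQperp}, \eqref{forqperp} --- have no zero at $\ad$, via the coprimality of $k+1$ and $r-1$ and the inequalities of Lemma~\ref{lemadm}. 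Second, and more importantly, you have misplaced where the ``vanishing lemma'' is actually needed: for $\upsilon\in\{\tilde p_0,Q,q\}$ the operations of adding a circle or converting a square into a circle \emph{preserve} $(k,r,N)$-admissibility, so every $\Omega$ in the sum is admissible and there is nothing to make vanish --- the only content in those three cases is the non-vanishing of the denominator hooks (or, in your framework, nothing at all beyond your regularity argument). Your plan to show that ``failure of admissibility of $\Omega$ produces strictly more vanishing numerator factors'' for these operators would be chasing a vacuous statement. The genuine vanishing lemma concerns only $Q^\perp$ and $q^\perp$, and there your sketch agrees with the paper: the denominator is built from lower hooks of the admissible $\Lambda$ and is nonzero by the proof of Proposition~\ref{lemmapoles}, while non-admissibility of $\Omega$ forces $h^{\Omega}_{(\ad)}(s)=0$ at an explicitly located cell $s$ in the marked row (for $Q^\perp$) or column (for $q^\perp$) with leg-length $k$. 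The case analysis you defer is real but short; note in particular the boundary case $N=i+k$ with $\Lambda_{i+k}=0$ in \eqref{forQperp}, where no such cell exists and it is instead the explicit factor $N+1-i+\alpha(j-1)$ that vanishes at the marked cell $(i,r)$.
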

\begin{proof}
Let $\Lambda$ be a $(k,r,N)$-admissible superpartition.
We will prove that the expression for
$\upsilon P_{\Lambda}^{(\ad)}$ in Proposition~\ref{prop9}   
still holds at $\alpha = \ad$, and that, moreover,
only Jack superpolynomials indexed by $(k,r,N)$-admissible 
superpartitions remain in the expression.
Since,
the $P_\Omega^{(\ad)}$'s appearing in the expansion 
of $\upsilon P_{\Lambda}^{(\ad)}$ in Proposition~\ref{prop9}
are such that $\Omega$
is almost $(k,r,N)$-admissible,
we have from Proposition~\ref{propquasi} that the $P_{\Omega}$'s
are regular at $\alpha = \ad$.  We have thus left to prove that the
expansion coefficients are free of poles at $\alpha=\ad$
and vanish whenever $\Omega$
is not $(k,r,N)$-admissible.

\noindent {\it Case $\upsilon = \tilde p_0$}.  
If $\Lambda$ is $(k,r,N)$-admissible, then
all the superpartitions $\Omega$ that appear in the sum
in \eqref{forp0} are also $(k,r,N)$-admissible (adding
a circle to a $(k,r,N)$-admissible superpartition produces
a $(k,r,N)$-admissible superpartition).  We thus only have to show
that $h_{\Omega}^{(\alpha)}(s)$ does not have zeros for any
$s \in {\rm col}_{\Omega^\circ}$ when $\alpha=\ad$.

Let $j$ be the row of the marked cell 
(the cell of the circle that was added to $\Lambda$ to obtain 
$\Omega$).  We have that 
\begin{equation}
\prod_{s \in {\rm col}_{\Omega^\circ}} h_{\Omega}^{(\alpha)}(s) = 
\prod_{i<j} \bigl(j -i + \ad (\Lambda_i^*-\Lambda_j^*)\bigr).
\end{equation}
Hence we need to show that
\begin{equation}
j-i -\frac{k+1}{r-1} (\Lambda_i^*-\Lambda_j^*) \neq 0 ,
\end{equation}
for all $i<j$.  Suppose on the contrary that $j-i=(k+1)t$ and $\Lambda_i^*-\Lambda_j^*=(r-1)t$ for some positive integer $t$.
But, using Lemma~\ref{lemadm}, this leads to 
the contradiction
\begin{equation}
(r-1)t= \Lambda_i^*-\Lambda_j^* \geq \left \lfloor\frac{j-i}{k+1} \right 
\rfloor r
= \left \lfloor \frac{(k+1)t}{k+1} \right \rfloor r =rt.
\end{equation}

\noindent {\it Case $\upsilon = Q$}.
Using \eqref{forQ} instead of
\eqref{forp0}, the case is exactly as Case $\upsilon =\tilde p_0$.

\noindent {\it Case $\upsilon = Q^\perp$}.
We use \eqref{forQperp}.
In this case, $h^{\Lambda}_{(\alpha)}(s)$ does not
have zeros at $\ad$ by the proof of Proposition~\ref{lemmapoles}.
It thus suffices to prove that
 if $\Omega$ is not $(k,r,N)$-admissible, then 
\begin{equation}
\prod_{s \in {\rm row}_{\Omega^\circ}} h^{\Omega}_{(\ad)}(s) = 0,
\end{equation}
that is, there is at least one square $s \in {\rm row}_{\Omega^\circ}$ such that 
$h^{\Omega}_{(\ad)}(s) = 0$. 
Since $\Omega$ is obtained by removing a circle to $\Lambda$, 
$\Omega$ is not $(k,r,N)$-admissible only
if the circle is removed in a certain row $i$ such that
$\Lambda_i^\cd-\Lambda_{i+k}^*=r$.  
Suppose that $\Lambda_{i+k} = 0$ and
$N=i+k$.
Then the position of the marked cell is $(i,j)=(i,r)$
and the factor $N+1-i+\alpha(j-1)$ in  \eqref{forQperp}
is equal to zero.   
Otherwise, we have either $\Lambda_{i+k} > 0$ or
$N>i+k$.  If $N>i+k$, then 
$\Lambda_{i+k+1}^*<\Lambda_{i+k}^*$ by Lemma~\ref{lemadm} since
$\Lambda_i^*-\Lambda^*_{i+k}=r-1$.  If we let 
$j=\Lambda_{i+k}^*$, the leg-length of $s=(i,j)$ in $\Omega^*$
is thus equal to $k$. This gives
\begin{equation}
h^{\Omega}_{(\ad)}(i,j)=  k+1 -\frac{(k+1)}{(r-1)}(\Omega_i^\cd-j)=
k+1 -\frac{(k+1)}{(r-1)}(\Lambda_i^*-\Lambda^*_{i+k}) =
k+1 -\frac{(k+1)}{(r-1)}(r-1)=0\, .
\end{equation}
Finally, if $N=i+k$ and $j=\Lambda_{i+k}^*>0$
then the leg-length of $s=(i,j)$ in $\Omega^*$ is again equal to $k$.
The rest of the previous argument can thus be used and 
our claim holds.

\noindent {\it Case $\upsilon = q$}.
As in Case $\upsilon = \tilde p_0$, 
if $\Lambda$ is $(k,r,N)$-admissible, then
all the superpartitions $\Omega$ that appear in the sum
in \eqref{forq} are also $(k,r,N)$-admissible (converting
a square into a circle in
a $(k,r,N)$-admissible superpartition produces
a $(k,r,N)$-admissible superpartition).  We thus only have to show
that $h_{\Omega}^{(\alpha)}(s)$ does not have zeros for any
$s \in {\rm row}_{\Omega^\circ}$ when $\alpha=\ad$.

Let $i$ be the row of the marked cell.
Then
\begin{equation}
\prod_{s \in {\rm row}_{\Omega^\circ}} h_{\Omega}^{(\alpha)}(s) = 
\prod_{j<\Omega_i^{\cd}} 
\bigl({\Omega^\cd_j}'-i + \ad (\Omega_i^{*}-j+1)\bigr)
=\prod_{j<\Omega_i^{\cd}} 
\bigl({\Omega^\cd_j}'-i + \ad (\Omega_i^{\cd}-j)\bigr).
\end{equation}
Hence we need to show that
\begin{equation}
{\Omega^\cd_j}'-i + \ad (\Omega_i^{\cd}-j) \neq 0,
\end{equation}
for all $j<\Omega_i^\cd$.   Suppose  on the contrary  that
\begin{equation}
\Omega_i^\cd-j= t(r-1)\quad \text{and} \quad {\Omega_j^\cd}'-i= t(k+1)\quad\text{ for some positive integer $t$}. 
\end{equation}
Using $\Omega^\cd_{{\Omega^\cd}'_j} \geq j$ (see 
(\ref{ine2})) and the second inequality in \eqref{2ine},
we are then led to the contradiction
\begin{equation}
t(r-1)= \Omega^\cd_i-j \geq \Omega^\cd_i-\Omega^\cd_{{\Omega^\cd}'_j} \geq \left\lfloor \frac{{\Omega^\cd}'_j-i}{k+1}\right \rfloor r = \left \lfloor \frac{t(k+1)}{k+1}\right \rfloor r= tr ,
\end{equation}
and the claim holds.

\noindent {\it Case $\upsilon = q^\perp$}.
We use \eqref{forqperp}.  As in the proof of Case $\upsilon = Q^\perp$, 
$h^{\Lambda}_{(\alpha)}(s)$ does not
have zeros at $\ad$ by the proof of {Proposition~\ref{lemmapoles}}.
It thus
suffices to prove that
if $\Omega$ is not $(k,r,N)$-admissible, then 
\begin{equation}
\prod_{s \in {\rm col}_{\Omega^\circ}} h^{\Omega}_{(\ad)}(s) = 0.
\end{equation}
In this case $\Omega$ is obtained by converting a circle of $\Lambda$
into a square.  If the circle converted into a square is in row $i$ and
$\Omega$ is not $(k,r,N)$-admissible then $\Lambda^\cd_{i-k}-\Lambda_i^*=r$
and $\Lambda^\cd_{i-k}-\Lambda_i^\cd=r-1$.
If we let $j=\Lambda_i^\cd=\Omega_i^*$,
the leg-length of $s=(i-k,j)$ in $\Omega^*$
is thus equal to $k$ (since $\Omega_{i+1}^*=\La_{i+1}^* \leq \La_{i}^* < 
\La_{i}^\cd=\Omega_i^*$).  Therefore, we observe immediately that
\begin{equation}
h^{\Omega}_{(\ad)}(i-k,j)= k+1 -\frac{(k+1)}{(r-1)}(\Omega_{i-k}^\cd-j) =
 k+1 -\frac{(k+1)}{(r-1)}(\Lambda_{i-k}^\cd-\Lambda_i^\cd) =
k+1 -\frac{(k+1)}{(r-1)}(r-1)=0,
\end{equation} 
and our claim is verified.
\end{proof}

We can now prove our main result.
Recall that the super Lie algebras $\mathcal A$ and ${\rm sVir}^{(-)}$
were defined in \eqref{homo} and \eqref{virasoro} respectively.
\begin{theorem} \label{theoideal}   $\mathcal I_N^{(k,r)}$ is an ideal of
${\bf \Lambda}_N^\ta$.  Furthermore, the algebra $\mathcal A$ 
and the negative half of the super-Virasoro algebra ${\rm sVir}^{(-)}$
preserve $\mathcal I_N^{(k,r)}$.
\end{theorem}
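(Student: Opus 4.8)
My plan is to establish the three claims in the order: stability under $\mathcal A$, the ideal property, and then stability under ${\rm sVir}^{(-)}$, each using the previous. Throughout I use that the operators stabilizing $\mathcal I_N^{(k,r)}$ form an associative algebra, hence are closed under linear combinations, compositions and (super)commutators, so it suffices to exhibit each required operator built out of operators already known to preserve the ideal. Stability under $\mathcal A$ is then immediate: by the remark following \eqref{4gene} the algebra $\mathcal A$ is generated by its fermionic elements $q,q^\perp,Q,Q^\perp$, all of which preserve $\mathcal I_N^{(k,r)}$ by Proposition~\ref{prop14}, and the bosonic generators arise as the anticommutators $\mathcal E=\{q,q^\perp\}$, $\nabla=\{q,Q^\perp\}$, $E=\{Q,Q^\perp\}$ and $\nabla^\perp=\{Q,q^\perp\}$.

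For the ideal property it suffices to show that multiplication by each ring generator $p_n$ ($n\ge1$) and $\tilde p_n$ ($n\ge0$) preserves $\mathcal I_N^{(k,r)}$. Multiplication by $\tilde p_0$ is Proposition~\ref{prop14}; from \eqref{global} and $L_{-1}=G_{-1/2}^2$ (a consequence of \eqref{sVir}) one has $\tfrac N\alpha\,p_1=\nabla^\perp-L_{-1}$, whose right-hand side preserves the ideal by the previous paragraph, so multiplication by $p_1$ does too. A one-line computation gives, for multiplication operators, $[L_{-1},p_n]=n\,p_{n+1}$ and $[q,p_{n+1}]=(n+1)\tilde p_n$, so by induction multiplication by every $p_n$ and $\tilde p_n$ preserves $\mathcal I_N^{(k,r)}$. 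Since these generate ${\bf \Lambda}_N^\theta$, multiplication by any symmetric superpolynomial preserves $\mathcal I_N^{(k,r)}$, i.e.\ it is an ideal.

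For ${\rm sVir}^{(-)}$, relations \eqref{global} give $G_{1/2}=q+Q^\perp$ and $G_{-1/2}=Q+q^\perp-\tfrac N\alpha\tilde p_0$, so both preserve the ideal by Proposition~\ref{prop14}; squaring and anticommuting them via \eqref{sVir} shows the subalgebra $osp(1,2)=\langle L_0,L_{\pm1},G_{\pm1/2}\rangle$ preserves $\mathcal I_N^{(k,r)}$. The main obstacle lies exactly here: $osp(1,2)$ is a closed subalgebra, so no bracket of its elements yields the lower modes, and since $[D',M_f]=M_{D'(f)}$ for any first-order derivation $D'$, commuting the operators in hand with multiplications produces only multiplications — never a genuinely new first-order operator such as $L_{-2}$. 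To break out I would use the two second-order eigenoperators $D$ and $\Delta$ of \eqref{eqD}--\eqref{eqDelta}: by \eqref{Dval} every $P_\Lambda^{(\ad)}$ is an eigenfunction, so $D,\Delta$ preserve $\mathcal I_N^{(k,r)}$, and multiplication by $p_2$ now does as well. The key is that commuting a second-order operator with a multiplication drops the order by one; a direct computation gives $[D,p_2]=2\alpha\sum_i x_i^3\partial_{x_i}+\alpha p_2+2e_2$ and $[\Delta,p_2]=2\alpha\sum_i x_i^2\theta_i\partial_{\theta_i}$ (the singular and purely multiplicative pieces of $D,\Delta$ contributing only the symmetric polynomial $\alpha p_2+2e_2$), so that $L_{-2}=\tfrac1{2\alpha}[D,p_2]+\tfrac3{4\alpha}[\Delta,p_2]-\tfrac12 p_2-\tfrac1\alpha e_2$. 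As $\ad\ne0$, this exhibits $L_{-2}$ as a combination of ideal-preserving operators.

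Once $L_{-2}$ is available, the rest follows from \eqref{sVir} with no further input: $G_{-3/2}=-\tfrac23[L_{-2},G_{1/2}]$, and iterating $[L_{-1},L_{-n}]=(n-1)L_{-n-1}$ and $[L_{-1},G_{-r}]=(r-\tfrac12)G_{-r-1}$ produces every $L_{-n}$ ($n\ge2$) and $G_{-r}$ ($r\ge3/2$) as brackets of operators preserving $\mathcal I_N^{(k,r)}$, so all of ${\rm sVir}^{(-)}$ does. The delicate points I anticipate are the Grassmann sign bookkeeping in the commutator computations and confirming that the remainder in $[D,p_2]$ is the genuine symmetric polynomial $\alpha p_2+2e_2$ (using $\sum_{i\ne j}\tfrac{x_i^2x_j}{x_i-x_j}=e_2$); the Hamiltonian-commutator device for $L_{-2}$ is the one place where something beyond Propositions~\ref{prop9} and \ref{prop14} is required.
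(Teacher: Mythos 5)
Your proof is correct and follows essentially the same route as the paper's: reduce everything to Proposition~\ref{prop14} via the generators $q,q^\perp,Q,Q^\perp,\tilde p_0$, obtain the multiplication operators $p_n,\tilde p_n$ by iterated commutators with $\nabla^\perp$ (equivalently $L_{-1}$) and $q$, and break out of $osp(1,2)$ by expressing $L_{-2}$ through $[D,p_2]$ and $[\Delta,p_2]$, which preserve $\mathcal I_N^{(k,r)}$ because $D$ and $\Delta$ act diagonally on the $P_\Lambda^{(\ad)}$. The only discrepancy is the purely multiplicative remainder in the $L_{-2}$ identity (your $-\tfrac1\alpha e_2$ versus the paper's $-\tfrac{1}{\ad}(p_1^2-p_2)$), which is immaterial to the argument since either expression is a symmetric polynomial already shown to preserve the ideal.
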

\begin{proof}
To prove the first statement, we use the fact that
${\bf \Lambda}_N^\ta$
is generated by the power sums $p_m$ and $\tilde p_n$ for
$m \geq 1$ and $n \geq 0$ (see \eqref{spower}).  
{From} Proposition~\ref{prop14}, we
have that $p_1 \,  \mathcal I_N^{(k,r)} \subseteq \mathcal I_N^{(k,r)}$
given that
$\{\tilde p_0, q^\perp \}=p_1$.  Since $[\nabla^\perp,p_n] = n p_{n+1}$
and $\nabla^\perp=\{q^\perp,Q\}$ we then get recursively by 
Proposition~\ref{prop14}
that 
$p_m \,  \mathcal I_N^{(k,r)} \subseteq \mathcal I_N^{(k,r)}$ 
for all $m \geq 1$.  Finally, since
$[q, p_n]=n \tilde p_{n-1}$, we get from Proposition~\ref{prop14}
that $\tilde p_n \,  
\mathcal I_N^{(k,r)} \subseteq \mathcal I_N^{(k,r)}$ for all $n \geq 0$ and
the first statement follows.

Since $\mathcal A$ is generated by $Q,Q^\perp,q$ and $q^\perp$, it preserves
$\mathcal I_N^{(k,r)}$ by  Proposition~\ref{prop14}.

As already mentioned, 
${\rm sVir}^{(-)}$
is generated by $\tilde p_0, Q,Q^\perp,q,q^\perp$ and $L_{-2}$.
By Proposition~\ref{prop14}, it thus remains to show that 
$L_{-2}\mathcal I_N^{(k,r)} \subseteq \mathcal I_N^{(k,r)}$.  We have the relation
\begin{equation}
L_{-2} = \frac{3}{4\ad}[\Delta^{(\ad)},p_2]
 +\frac{1}{2\alpha}[D^{(\ad)},p_2]-\frac{1}{2}p_2 -\frac{1}{\ad}(p_1^2-p_2) \,,
\end{equation}
where $\Delta^{(\ad)}$ and $D^{(\ad)}$ correspond respectively
to the operators
$\Delta$ and $D$ at $\alpha=\ad$ (see \eqref{eqD} and \eqref{eqDelta}).
Obviously, $\Delta^{(\ad)} \, \mathcal I_N^{(k,r)} \subseteq \mathcal I_N^{(k,r)}$ and  $D^{(\ad)} \, \mathcal I_N^{(k,r)} \subseteq \mathcal I_N^{(k,r)}$ since
they both have the Jack superpolynomials 
as eigenfunctions. Finally, as we have just shown, 
$p_1$ and $p_2$ leave   $\mathcal I_N^{(k,r)}$ invariant
and the result follows.
\end{proof}

The theorem has an immediate consequence that will prove important in the
next section.  The proof is similar to that of an analogous property in 
\cite{FJMM}.
\begin{proposition} \label{proponm1}
Let $P \in \mathcal I_{N}^{(k,r)}$.  Then
\begin{equation}
[\partial_{x_N}^j P]_{x_N=\theta_N=0} \in   \mathcal I_{N-1}^{(k,r)}
\quad {\rm and} \quad [\partial_{\theta_N} \partial_{x_N}^j  P]_{x_N=\theta_N=0} \in   \mathcal I_{N-1}^{(k,r)} \quad {\rm for~all~}j \geq 0
\end{equation}
\end{proposition}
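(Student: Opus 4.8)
The plan is to reduce the statement to the generators of the ideal and then to carry out a restriction-of-variables (branching) analysis that runs exactly parallel to Proposition~\ref{prop14}. Since $\mathcal I_N^{(k,r)}$ is spanned over $\mathbb C$ by the $P_\Lambda^{(\ad)}$ with $\Lambda$ being $(k,r,N)$-admissible, and since both maps $P\mapsto[\partial_{x_N}^jP]_{x_N=\theta_N=0}$ and $P\mapsto[\partial_{\theta_N}\partial_{x_N}^jP]_{x_N=\theta_N=0}$ are $\mathbb C$-linear and do land in ${\bf\Lambda}_{N-1}^{\theta}$ (the operations commute with the $S_{N-1}$-action on the first $N-1$ pairs $(x_i,\theta_i)$, so symmetry in those variables survives differentiation in the $N$th variable followed by its specialization), it suffices to treat $P=P_\Lambda^{(\ad)}$ for a single admissible $\Lambda$.

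First I would establish, for generic $\alpha$, a branching formula expanding the two restrictions in the Jack basis of ${\bf\Lambda}_{N-1}^{\theta}$:
\begin{equation}
[\partial_{x_N}^{j}P_{\Lambda}^{(\alpha)}]_{x_N=\theta_N=0}=\sum_{\Omega}c_{\Lambda\Omega}(\alpha)\,P_{\Omega}^{(\alpha)},\qquad
[\partial_{\theta_N}\partial_{x_N}^{j}P_{\Lambda}^{(\alpha)}]_{x_N=\theta_N=0}=\sum_{\Omega}\tilde c_{\Lambda\Omega}(\alpha)\,P_{\Omega}^{(\alpha)},
\end{equation}
where in the first sum $\Omega$ runs over the superpartitions of fermionic degree $m$ obtained from $\Lambda$ by deleting a horizontal $j$-strip, and in the second sum $\Omega$ has fermionic degree $m-1$ and is obtained by removing one circle together with a horizontal $j$-strip. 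Such a formula can be derived from the symmetrization construction \eqref{jackinnonsym} together with the known expansion of the non-symmetric Jack polynomial $E_{\tilde\Lambda}$ in its last variable $x_N$; the resulting coefficients $c_{\Lambda\Omega},\tilde c_{\Lambda\Omega}$ are, as in Proposition~\ref{prop9}, explicit products of ratios of the upper- and lower-hook lengths $h^{(\alpha)}_\Lambda(s)$ and $h^\Lambda_{(\alpha)}(s)$ of \eqref{2hook} taken over the cells of the removed strip.

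It then remains to specialize $\alpha=\ad$ and to verify the two points that also drive Proposition~\ref{prop14}. For the surviving admissible terms, the $P_\Omega^{(\ad)}$ are regular by Proposition~\ref{lemmapoles}, and the hook-length denominators of $c_{\Lambda\Omega},\tilde c_{\Lambda\Omega}$ are nonzero at $\ad$ by the arguments of Proposition~\ref{lemmapoles} and Lemma~\ref{lemadm}. For the containment itself I would show that whenever $\Omega$ is \emph{not} $(k,r,N-1)$-admissible the corresponding coefficient vanishes at $\ad$: the failure of admissibility forces a cell $s$ in the relevant row or column bordering the removed strip whose leg/arm data make a numerator factor $h^{\Omega}_{(\ad)}(s)$ vanish identically, by the very same floor-function contradiction $t(r-1)\ge tr$ used in Proposition~\ref{prop14}, which relies on the coprimality of $k+1$ and $r-1$. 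Because the LHS is already regular at $\ad$, vanishing of every non-admissible coefficient forces the limit to be a $\mathbb C$-combination of admissible $P_\Omega^{(\ad)}$, proving both inclusions.

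The hard part will be this last step: writing the branching coefficients explicitly enough to isolate, for each type of non-admissible $\Omega$, the single vanishing hook factor, and handling the boundary cases in which the relevant row or column is short (the analogue of the $N=i+k$ subcase treated in Proposition~\ref{prop14}, where instead a prefactor such as $N+1-i+\alpha(j-1)$ provides the zero). As elsewhere in this work, the combinatorics of the circles, and in particular the bookkeeping distinguishing the bosonic removal from the circle removal, is what makes the argument markedly more involved than its counterpart in \cite{FJMM}.
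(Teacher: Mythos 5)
There is a genuine gap in the final step of your plan, precisely where you locate the ``hard part.'' The argument of Proposition~\ref{prop14} that you want to imitate works only because every $\Omega$ appearing in the Pieri-type expansions of Proposition~\ref{prop9} is \emph{almost} $(k,r,N)$-admissible, so that Proposition~\ref{propquasi} guarantees $P_{\Omega}^{(\alpha)}$ is regular at $\ad$; only then does the vanishing of a coefficient $c_{\Lambda\Omega}(\ad)$ actually remove the term from the limit. In your branching expansion the $\Omega$'s are obtained from $\Lambda$ by deleting an arbitrary horizontal $j$-strip (plus possibly a circle), and for $j\geq 2$ such $\Omega$'s are in general neither admissible nor almost admissible; their $P_{\Omega}^{(\alpha)}$ may well have poles at $\ad$, and then ``coefficient vanishes at $\ad$'' does not imply the term disappears --- the product $c_{\Lambda\Omega}(\alpha)P_{\Omega}^{(\alpha)}$ can tend to a nonzero symmetric superpolynomial that is not a $\mathbb C$-combination of admissible Jacks. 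Regularity of the left-hand side gives you no control over the individual terms. On top of this, the superspace branching rule itself, with coefficients written as products of hook-length ratios over the removed strip, is asserted rather than proved and is not available anywhere in the paper, so you would be taking on a substantial new Pieri-type computation (comparable to Appendix~\ref{apppieri}) before even reaching the admissibility analysis.

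The paper's own proof sidesteps all of this. The case $j=0$ is immediate: by the stability of Jack superpolynomials under restriction of variables, $[P_{\Lambda}^{(\aa)}]_{x_N=\theta_N=0}$ is either $0$ or the Jack superpolynomial indexed by the same (still admissible) $\Lambda$ in $N-1$ variables. The general case then follows by induction on $j$ from the identity
\begin{equation*}
[\partial_{x_N}^j P]_{x_N=\theta_N=0}= [\partial_{x_N}^{j-1}\nabla^{(N)}P]_{x_N=\theta_N=0}-\nabla^{(N-1)}[\partial_{x_N}^{j-1}P]_{x_N=\theta_N=0},
\end{equation*}
using only the fact (Theorem~\ref{theoideal}) that $\nabla=\sum_i\partial_{x_i}$ preserves the ideal; the fermionic half is handled identically with $Q^{\perp}=\sum_i\partial_{\theta_i}$. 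You should either adopt this route or supply a regularity statement for all the $\Omega$'s in your branching sum, which would be a far stronger (and unproven) analogue of Proposition~\ref{propquasi}.
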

\begin{proof} We first prove by induction that
$[\partial_{x_N}^j P]_{x_N=\theta_N=0} \in   \mathcal I_{N-1}^{(k,r)}$.
Recall that for any Jack superpolynomial $P_{\Lambda}^{(\aa)}$ in $N$ variables
we have that $[P_{\Lambda}^{(\aa)}]_{x_N=\theta_N=0}$ is either equal to zero or to
$P_{\Lambda}^{(\aa)}$ in
the variables $x_1,\dots,x_{N-1},\theta_1,\dots,\theta_{N-1}$. 
Since the Jack superpolynomials indexed by $(k,r,N)$-admissible
superpartitions form a basis of $\mathcal I_{N}^{(k,r)}$, 
we then have 
immediately that $[P]_{x_N=\theta_N=0} \in   \mathcal I_{N-1}^{(k,r)}$
if $P \in \mathcal I_{N}^{(k,r)}$.  Let  $\nabla^{(N)}= \sum_{i=1}^N \partial_{x_i}$
be the operator $\nabla \in \mathcal A$, with the number  of variables
$N$ made explicit.
The result then holds by induction from Theorem~\ref{theoideal}
using
\begin{equation}
[\partial_{x_N}^j P]_{x_N=\theta_N=0}= 
[\partial_{x_N}^{j-1} \nabla^{(N)} P]_{x_N=\theta_N=0} -\nabla^{(N-1)} [\partial_{x_N}^{j-1} P]_{x_N=\theta_N=0} \, .
\end{equation}

We now prove by induction that
$[\partial_{\theta_N} 
\partial_{x_N}^j P]_{x_N=\theta_N=0} \in   \mathcal I_{N-1}^{(k,r)}$.
Again, let $Q^\perp_{(N)} = \sum_{i=1}^N \partial_{\theta_i}$ be the operator
$Q^\perp \in \mathcal A$ with the number $N$
of variables made explicit.  We have that
$[\partial_{\theta_N} P]_{x_N=\theta_N=0} \in   \mathcal I_{N-1}^{(k,r)}$
from Theorem~\ref{theoideal} since
\begin{equation}
[\partial_{\theta_N} P]_{x_N=\theta_N=0}= 
[Q^\perp_{(N)} P]_{x_N=\theta_N=0} - Q^\perp_{(N-1)} [ P]_{x_N=\theta_N=0} \, .
\end{equation}
Finally, the general case follows again from Theorem~\ref{theoideal} since
\begin{equation}
[\partial_{\theta_N}\partial_{x_N}^j P]_{x_N=\theta_N=0}= 
[\partial_{\theta_N}\partial_{x_N}^{j-1} \nabla^{(N)} P]_{x_N=\theta_N=0} -\nabla^{(N-1)} [\partial_{\theta_N} \partial_{x_N}^{j-1} P]_{x_N=\theta_N=0} \, .
\end{equation}
\end{proof}

\section{Clustering properties of the Jack polynomials}
\label{les0}

Consider the subspace of symmetric superpolynomials 
in $N$ variables that vanish   whenever $k+1$ {\it commuting} 
variables are equal:
\begin{equation}   
{\mathcal F}_N^{(k)} = \bigl\{ f(x,\theta) 
\in {\bf \Lambda}_N^{\theta} \, 
\big| \,  f(x,\theta)=0 {\rm ~if~}
x_1=\cdots=x_{k+1} \bigr\} \, .
\end{equation}

We first show that ${\mathcal I}_N^{(k,r)} \subseteq {\mathcal F}_N^{(k)}$ 
for all $r$ {(as already pointed out in the introduction, following relation
\eqref{jackinnonsym}
this could also have been deduced from \cite{Kas})}.
\begin{proposition}\label{Propvanishwithout}
  Suppose that $N \geq k+1$.  Then
${\mathcal I}_N^{(k,r)} \subseteq {\mathcal F}_N^{(k)}$ for all $r$.
\end{proposition}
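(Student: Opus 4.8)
The plan is to induct on $N\ge k+1$, using Proposition~\ref{proponm1} to descend from $N$ to $N-1$ variables and reserving the evaluation formula \eqref{spe} for the base case. Since $\mathcal{I}_N^{(k,r)}$ is spanned by the admissible $P_\La^{(\ad)}$ and $\mathcal{F}_N^{(k)}$ is a linear subspace, it suffices to show that each $(k,r,N)$-admissible $P_\La^{(\ad)}$ vanishes when $x_1=\cdots=x_{k+1}$; by $S_N$-symmetry of the superpolynomials this is the same as vanishing when any $k+1$ of the commuting variables coincide.

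For the inductive step ($N\ge k+2$) I would take an arbitrary $P\in\mathcal{I}_N^{(k,r)}$ and expand it in the last pair of variables, $P=\sum_{j\ge0}x_N^j\bigl(A_j+\ta_N B_j\bigr)$, where $A_j$ and $B_j$ are proportional to $[\partial_{x_N}^jP]_{x_N=\ta_N=0}$ and $[\partial_{\ta_N}\partial_{x_N}^jP]_{x_N=\ta_N=0}$ respectively and involve only $x_1,\dots,x_{N-1},\ta_1,\dots,\ta_{N-1}$. Proposition~\ref{proponm1} puts every $A_j$ and $B_j$ in $\mathcal{I}_{N-1}^{(k,r)}$, which by the induction hypothesis (legitimate because $N-1\ge k+1$) is contained in $\mathcal{F}_{N-1}^{(k)}$. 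Hence each $A_j$ and $B_j$ vanishes at $x_1=\cdots=x_{k+1}$, and therefore so does $P$. This is the routine part.

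The base case $N=k+1$ carries the real content. Setting $x_1=\cdots=x_{k+1}=y$ and using bihomogeneity, $P_\La^{(\ad)}\big|_{x_i=y}=y^{n}\,G(\ta)$, where $G$ is a polynomial of degree $m$ in the anticommuting variables that is invariant under every $\kappa_\sigma$. If $m\ge2$ there is no nonzero such invariant: in any monomial $\ta_{i_1}\cdots\ta_{i_m}$ one can transpose two indices that occur, which fixes the monomial as a set but multiplies it by $-1$, forcing its coefficient to vanish; thus $G=0$ outright. If $m\in\{0,1\}$ then $G$ is a scalar multiple of $e_m(\ta)=\sum_{i_1<\cdots<i_m}\ta_{i_1}\cdots\ta_{i_m}$, hence entirely determined by its coefficient of $\ta_1\cdots\ta_m$, which is a power of $y$ times $\P_{\La,m}(1,\dots,1)$. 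I would then invoke the evaluation formula \eqref{spe}: since $m<r$, the cell $(1,r)$ survives in $\S\La=\La^\cd/(m,m-1,\dots,1)$ — it lies in $\La^\cd$ because admissibility forces $\La_1^\cd\ge\La_{k+1}^*+r\ge r$ — and there $b_\La^{(\ad,k+1)}(1,r)=(k+1)+\ad(r-1)=0$. Regularity of the admissible $P_\La^{(\ad)}$ (Proposition~\ref{lemmapoles}) guarantees $v_\La(\ad)\ne0$, so the product in \eqref{spe} is a genuine zero and $G=0$.

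The main obstacle is thus concentrated entirely in the base case, and is really a verification point: one must check that the vanishing factor $b_\La^{(\ad,k+1)}(1,r)$ actually occurs in $\S\La$ (which needs both $m<r$ and the admissibility inequality $\La_1^\cd\ge r$) and that the denominator $v_\La(\ad)$ is nonzero, so that the evaluation is honestly $0$ rather than an indeterminate $0/0$. Everything else — the descent in $N$ via Proposition~\ref{proponm1} and the soft $m\ge2$ symmetry argument — is straightforward.
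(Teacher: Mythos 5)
Your proof is correct and follows essentially the same route as the paper: the same induction on $N$ via Proposition~\ref{proponm1} for the descent, and the same base case $N=k+1$ handled by the evaluation formula \eqref{spe} through the vanishing factor $b_\La^{(\ad,k+1)}(1,r)$, together with the regularity statement to rule out a $0/0$. The only cosmetic difference is in the case $m\ge 2$, where the paper factors a Vandermonde determinant out of the coefficient of $\theta_{i_1}\cdots\theta_{i_m}$ while you argue directly from the antisymmetry of the $\kappa_\sigma$-invariant $\theta$-polynomial obtained after identifying all the $x_i$ — these are the same observation.
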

\begin{proof}
Let $\La$ be a $(k,r,N)$-admissible superpartition
of fermionic degree $m$.  We will show that
$P_{\La}^{(\ad)}=0$ if $x_1=\cdots=x_{k+1}$.  

Suppose first that $N=k+1$.  In this case, we set 
$x_1=\cdots=x_N$ in $P_{\La}^{(\ad)}$.  If $m>1$, we can factorize
a Vandermonde determinant $\Delta(x_{i_1},\dots,x_{i_m})$ from
the coefficient
$f_{i_1,\dots,i_m}(x_1,\dots,x_N)$ of $\theta_{i_1} \cdots \theta_{i_m}$ 
in $P_{\La}^{(\ad)}$.  Therefore, $f_{i_1,\dots,i_m}(x_1,\dots,x_N)$
vanishes if $x_1=\cdots=x_N$, and so does $P_{\La}^{(\ad)}$.
We thus have left to consider the cases $m=0,1$.
Let $\P^{(\ad)}_{\La,m}$ be such as in \eqref{defPs}.  By symmetry, it
suffices to prove that $\P^{(\ad)}_{\La,m}=0$ if $x_1=\cdots=x_N=x$.
 By the homogeneity of $\P^{(\ad)}_{\La,m}$,
we have
\begin{equation}
\P^{(\ad)}_{\La,m}(x,\dots,x)=x^{t} \P^{(\ad)}_{\La,m}(1,\dots,1)
\end{equation}
for some power $t$.
Hence we can use (\ref{spe}) to determine whether $\P^{(\ad)}_{\La,m}$ vanishes.
The admissibility condition when $N=k+1$ reduces to the 
single requirement: 
\begin{equation}
\La^\cd_1\geq \La^*_{N}+r.
\end{equation} 
The vanishing of the RHS of (\ref{spe}) can only  be due to that of $b_{\La}^{(\ad,k+1)}(1,r)$ 
(it is clear by inspection that $b_\La^{(\ad,N+1)}(s)$ 
cannot be zero for any other square $s$).
This square belongs to $\La^\cd$ since $\La^\cd_1\geq r$. But in addition, it must necessarily be part of $\S\La$: only the first square of the first row of $\La^\cd$ possibly does not belong to
$\S\La$ when $m=0,1$. When this is satisfied, we indeed see that
\begin{equation}
b_{\La}^{(\ad,k+1)}(1,r)=k+1-\frac{k+1}{r-1}(r-1)=0.\end{equation}

Now suppose by induction that 
the result holds for $N-1$, with $N \geq k+2$.
We have that 
\begin{equation}
P_{\Lambda}^{(\ad)} = \sum_{i \geq 0} x_N^i {\tilde f}^{(i)} + \sum_{i \geq 0} \theta_N x_N^i {\tilde g}^{(i)}
\end{equation}
where ${\tilde f}^{(i)}$ and ${\tilde g}^{(i)}$ are symmetric superpolynomials in 
${\bf \La}_{N-1}^{\theta}$ of fermionic degree $m$ and $m-1$ respectively.  Observe that
${\tilde f}^{(i)}$ and ${\tilde g}^{(i)}$ belong to ${\mathcal I}_{N-1}^{(k,r)}$
by Proposition~\ref{proponm1}, and the result follows by induction.
\end{proof}
We conjecture that when $r=2$ the inclusion is actually an equality.

\begin{conjecture} \label{I=F}We have ${\mathcal I}_N^{(k,2)} = {\mathcal F}_N^{(k)}$.
In other words, the Jack superpolynomials in $N$ variables whose superpartitions are $(k,2,N)$-admissible furnish a basis of the space
$\mathcal F_N^{(k)}$.
\end{conjecture}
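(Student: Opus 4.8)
The inclusion $\mathcal I_N^{(k,2)}\subseteq \mathcal F_N^{(k)}$ is already established in Proposition~\ref{Propvanishwithout}, so the entire content of Conjecture~\ref{I=F} is the reverse inclusion $\mathcal F_N^{(k)}\subseteq \mathcal I_N^{(k,2)}$. The plan is to reduce this to a bigraded dimension count. Since the vanishing condition is homogeneous in both the degree in $x$ and the degree in $\theta$, we have $\mathcal F_N^{(k)}=\bigoplus_{n,m}\bigl(\mathcal F_N^{(k)}\bigr)_{(n|m)}$, and likewise $\mathcal I_N^{(k,2)}$ is bigraded, each piece being spanned by the admissible $P_\Lambda^{(\ad)}$ (of total degree $|\Lambda^*|$ and fermionic degree $m$). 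These $P_\Lambda^{(\ad)}$ are regular at $\ad=-(k+1)$ by Proposition~\ref{lemmapoles}, and by the triangularity \eqref{Ptriangular} they remain unitriangular with respect to the monomial basis $\{m_\Lambda\}$ at $\alpha=\ad$; hence they are linearly independent. Combined with the known inclusion, the conjecture is therefore \emph{equivalent} to the bigraded Hilbert-series identity
\begin{equation}\label{eq:Hilb}
\dim\bigl(\mathcal F_N^{(k)}\bigr)_{(n|m)}=\#\bigl\{\,(k,2,N)\text{-admissible }\Lambda\ :\ |\Lambda^*|=n,\ m(\Lambda)=m\,\bigr\}\qquad\text{for all }n,m.
\end{equation}

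A reduction-by-leading-term argument packages one half of \eqref{eq:Hilb} and isolates the hard half. Fixing a total order refining the dominance order on superpartitions, take a nonzero $f\in\bigl(\mathcal F_N^{(k)}\bigr)_{(n|m)}$ with dominance-maximal monomial $m_\Lambda$. \emph{If} $\Lambda$ is $(k,2,N)$-admissible, then $f-c\,P_\Lambda^{(\ad)}$ (with $c$ the coefficient of $m_\Lambda$ in $f$) still lies in $\mathcal F_N^{(k)}$ by Proposition~\ref{Propvanishwithout} and has a strictly smaller leading term; an induction over the finite monomial poset of bidegree $(n|m)$ then yields $f\in\mathcal I_N^{(k,2)}$. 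Thus the whole difficulty concentrates in the following \textbf{Sub-claim:} the dominance-maximal superpartition occurring in any $f\in\mathcal F_N^{(k)}$ is $(k,2,N)$-admissible. Equivalently, no symmetric superpolynomial whose leading term is indexed by a non-admissible superpartition can vanish identically when $x_1=\cdots=x_{k+1}$; this is precisely the inequality $\dim(\mathcal F_N^{(k)})_{(n|m)}\le \#\{\text{admissible}\}$ that upgrades the inclusion to an equality.

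To attack the sub-claim I would set up a double induction on $N$ and on total degree, patterned on the proof of Proposition~\ref{Propvanishwithout}. The base cases $N\le k+1$ are immediate: for $N\le k$ admissibility is vacuous and $\mathcal I_N^{(k,2)}={\bf \Lambda}_N^{\theta}=\mathcal F_N^{(k)}$, while for $N=k+1$ the evaluation formula \eqref{spe} identifies exactly which $P_\Lambda^{(\ad)}$ vanish when $x_1=\cdots=x_N$. For the inductive step I would write $f\in\mathcal F_N^{(k)}$ as $f=\sum_i x_N^i\,\tilde f^{(i)}+\sum_i \theta_N x_N^i\,\tilde g^{(i)}$ with $\tilde f^{(i)},\tilde g^{(i)}\in{\bf \Lambda}_{N-1}^{\theta}$, invoke Proposition~\ref{proponm1} to place the components $[\partial_{x_N}^i f]_{x_N=\theta_N=0}$ and $[\partial_{\theta_N}\partial_{x_N}^i f]_{x_N=\theta_N=0}$ in $\mathcal I_{N-1}^{(k,2)}$, and use the inductive hypothesis to constrain their leading terms. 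The delicate point is reconstructing the leading term of $f$ from those of its components, i.e. lifting admissibility from $N-1$ to $N$ variables while tracking the anticommuting slots.

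Finally, the right-hand side of \eqref{eq:Hilb} is accessible: via the overpartition correspondence of Section~\ref{admS}, the generating function $\sum_{n,m}\#\{\text{admissible}\}\,q^{n}z^{m}$ is a finitized super-analogue of the Andrews--Gordon series, extractable from the restricted-overpartition generating functions of \cite{CM}. The main obstacle is thus the left-hand side: an independent computation of the bigraded Hilbert series of $\mathcal F_N^{(k)}$. In the purely bosonic slice $m=0$ this is governed by the level-$k$ $\widehat{sl}(2)$ principal-subspace character, and the super-statement should come from the superconformal minimal model $\mathcal{SM}(2,4k+4)$ or the associated $\mathcal Z_k$ parafermionic theory \cite{PLL,Mel,FJM}; making this identification rigorous at finite $N$ while retaining the $\theta$-grading is exactly where the present techniques must be supplemented. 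I expect this Hilbert-series matching---equivalently, the lifting step of the induction, in which setting $x_1=\cdots=x_{k+1}$ places no constraint on the independent variables $\theta_1,\dots,\theta_{k+1}$---to be the principal difficulty, and the reason the statement remains a conjecture.
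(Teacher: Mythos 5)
You have not proved the statement, and neither does the paper: this is a conjecture, and the only thing the paper establishes is the inclusion $\mathcal I_N^{(k,2)}\subseteq\mathcal F_N^{(k)}$ (Proposition~\ref{Propvanishwithout}) together with the observation that, by regularity (Proposition~\ref{lemmapoles}) and unitriangularity \eqref{Ptriangular} of the admissible $P_\Lambda^{(\ad)}$, equality is equivalent to the equality of the two characters in \eqref{cara} --- which the authors then verify numerically up to degree $12$ for $k+1\le N\le 5$, $1\le k\le 4$. Your reduction to the bigraded Hilbert-series identity is exactly this same reduction, and your Gaussian-elimination reformulation (the ``Sub-claim'' that the dominance-maximal superpartition in the support of any $f\in\mathcal F_N^{(k)}$ is $(k,2,N)$-admissible) is a correct equivalent restatement of the missing half. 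But that Sub-claim \emph{is} the conjecture, and your sketch of how to attack it is a program, not an argument; so the genuine gap in your write-up coincides with the reason the statement is labelled a conjecture.

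Two points in the sketch deserve a concrete warning. First, the base case $N=k+1$ is not ``immediate'': the evaluation formula \eqref{spe} only decides which \emph{admissible} $P_\Lambda^{(\ad)}$ vanish at $x_1=\cdots=x_N$ (and for $m\ge 2$ one must first invoke the Vandermonde factorization as in the proof of Proposition~\ref{Propvanishwithout}); it gives no information about whether these polynomials \emph{span} all of $\mathcal F_{k+1}^{(k)}$, which is the direction you need. Second, in the inductive step the components $\tilde f^{(i)},\tilde g^{(i)}$ of a general $f\in\mathcal F_N^{(k)}$ do land in $\mathcal F_{N-1}^{(k)}$ (set $x_1=\cdots=x_{k+1}$ and compare coefficients of $x_N^i$ and $\theta_N x_N^i$), so the inductive hypothesis places them in $\mathcal I_{N-1}^{(k,2)}$; note that Proposition~\ref{proponm1} is not what you want here, since it presupposes $f\in\mathcal I_N^{(k,r)}$. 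The real obstruction is that membership of all components in $\mathcal I_{N-1}^{(k,2)}$ does not force $f\in\mathcal I_N^{(k,2)}$ without an injectivity or dimension statement relating the two ideals across the restriction of variables --- and that statement is again equivalent to the character identity. So the proposal is a faithful and correctly organized account of what is known, with the open core honestly isolated, but it should not be mistaken for progress beyond the paper's own evidence.
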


Thanks to Proposition \ref{Propvanishwithout}, in order to prove ${\mathcal I}_N^{(k,2)} = {\mathcal F}_N^{(k)}$,  it is sufficient to prove the equality 
of the two following characters:   
\begin{equation}\label{cara}
 \ch {\mathcal I}_N^{(k,2)} (u,v):=\sum_{n,m\geq 0} \dim {\mathcal I}_{N,n,m}^{(k,2)} u^n v^m,\qquad  \ch {\mathcal F}_N^{(k)}(u,v) :=\sum_{n,m\geq 0} \dim {\mathcal F}_{N,n,m}^{(k)} u^n v^m,
\end{equation} 
where the sub-indices $n,m$ refer to the bi-homogeneous component of 
degree $(n|m)$ (total degree $n$ and fermionic degree $m$).
Obviously, $\dim {\mathcal I}_{N,n,m}^{(k,2)}$ is equal to the number of $(k,2,N)$-admissible superpartitions,  and thus
$\ch {\mathcal I}_N^{(k,2)} (u,v)$ can
be easily evaluated degree by degree.
As explained in Appendix \ref{Characters}, the series expansion of $\ch {\mathcal F}_N^{(k)}$ is also computable degree by degree  (a few examples of characters, up to degree 10, are presented 
in Appendix  \ref{Characters}).  
The equality of the two characters has been verified up to degree 12, for $k+1\leq N\leq 5$ and $1\leq k\leq 4$.
 This is the computational evidence for Conjecture \ref{I=F}.

 As mentioned in the introduction,  this conjecture 
not only reveals a  remarkable property of the Jack superpolynomials 
 but appears to be
a clear indication of potential applications in superconformal 
field theory, the admissibility condition capturing a new variant of the generalized exclusion relation.


When the fermionic degree $m$ of the $(k,r,N)$-admissible 
superpartition $\Lambda$ is smaller than $r$, we actually have
a stronger result than ${\mathcal I}_N^{(k,r)} \subseteq {\mathcal F}_N^{(k)}$.
\begin{proposition}\label{sJ0} Suppose that $N \geq k+1$ and $r>m$, where
$m$ is the fermionic degree of the $(k,r,N)$-admissible superpartition
$\La$.  Then
$\P^{(\ad)}_{\La,m}$ vanishes whenever $k+1$ of the variables $\{x_1,\dots,x_N\}$
are equal,  where we recall that $\P^{(\ad)}_{\La,m}$ was defined in 
\eqref{defPs}.
\end{proposition}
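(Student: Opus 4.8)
The plan is to prove the statement by induction on $N$, after first disposing of the clusterings that involve at most one of the antisymmetric variables $x_1,\dots,x_m$. By the block symmetry of $\P^{(\ad)}_{\La,m}$ (it is symmetric in $x_1,\dots,x_m$ and, separately, in $x_{m+1},\dots,x_N$) it suffices to treat, for each $0\le a\le\min(m,k+1)$, the single clustering $x_1=\cdots=x_a=x_{m+1}=\cdots=x_{m+b}=x$ with $b=k+1-a$. By Proposition~\ref{Propvanishwithout} together with the full symmetry of $P^{(\ad)}_\La$, the numerator $\S_{\La^a,\La^s}=P^{(\ad)}_\La|_{\theta_1\cdots\theta_m}=\Delta(x_1,\dots,x_m)\,\P^{(\ad)}_{\La,m}$ vanishes on every such locus. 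When $a\le1$ the Vandermonde $\Delta(x_1,\dots,x_m)$ does \emph{not} vanish there, so, the ambient polynomial ring being an integral domain, $\P^{(\ad)}_{\La,m}$ itself vanishes. This settles all clusterings with $a\le1$; in particular, since $r>m$ forces $a\le m<r$, the whole proposition is immediate when $m\le1$.

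For $a\ge2$ I would argue by induction on $N$. The base case $N=k+1$ forces all variables to coincide, so by homogeneity it is enough to evaluate $\P^{(\ad)}_{\La,m}(1,\dots,1)$ through the evaluation formula \eqref{spe}. The factor $b^{(\ad,k+1)}_\La(1,r)=(k+1)-\tfrac{k+1}{r-1}(r-1)=0$ occurs, and the cell $(1,r)$ lies in $\S\La=\La^\cd/(m,m-1,\dots,1)$: it belongs to $\La^\cd$ because admissibility gives $\La^\cd_1\ge\La^*_{k+1}+r\ge r$, and it is not deleted by the staircase precisely because $r>m$. Hence the product on the right of \eqref{spe} vanishes, and so does $\P^{(\ad)}_{\La,m}$.

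For the inductive step ($N\ge k+2$) I assume the result for $N-1$. Choosing the representative clustering so that the symmetric variable $x_N$ is left free — possible whenever the cluster uses at most $N-m-1$ symmetric variables — I expand $\P^{(\ad)}_{\La,m}=\sum_{j\ge0}x_N^j\,h^{(j)}(x_1,\dots,x_{N-1})$. Since $\Delta(x_1,\dots,x_m)$ is independent of $x_N$, the quantity $j!\,\Delta\,h^{(j)}$ is the coefficient of $\theta_1\cdots\theta_m$ in $[\partial_{x_N}^jP^{(\ad)}_\La]_{x_N=\theta_N=0}$, which lies in $\mathcal I^{(k,r)}_{N-1}$ by Proposition~\ref{proponm1}. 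Expanding that element in the basis of admissible Jack superpolynomials and extracting the $\theta_1\cdots\theta_m$-component retains only the $(k,r,N-1)$-admissible $\Omega$ with exactly $m$ circles, so after cancelling $\Delta$ each $h^{(j)}$ is a $\mathbb C$-linear combination of the $\P^{(\ad)}_{\Omega,m}$ in $N-1$ variables, all still with $r>m$. By the induction hypothesis every $h^{(j)}$ then vanishes when $k+1$ of $x_1,\dots,x_{N-1}$ coincide, and therefore so does $\P^{(\ad)}_{\La,m}$ on the chosen locus.

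The hard part will be the one family of clusterings not reached this way: those with $a\ge2$ that exhaust all $N-m$ symmetric variables (so $b=N-m$, forcing $k+2\le N\le k+m-1$ and $m\ge3$). There $\Delta$ vanishes, the extraction argument would leave only $k$ clustered variables among $x_1,\dots,x_{N-1}$ — below the threshold of the induction — and the only free variables are antisymmetric, which cannot be specialized without destroying the coefficient of $\theta_1\cdots\theta_m$. I expect to dispatch these rigid configurations by a dedicated argument: either by showing directly that $\S_{\La^a,\La^s}$ vanishes to order at least $\binom{a}{2}+1$ as the cluster coalesces (the antisymmetry in $x_1,\dots,x_m$ already supplies $\binom{a}{2}$, and the extra unit must be extracted from the admissibility constraint together with $r>m$), or by a secondary reduction that removes a free antisymmetric variable while lowering the fermionic degree to $m-1$ and invokes a companion statement. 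This is the step where the hypothesis $r>m$ does the essential work, and the one I would expect to be the most delicate.
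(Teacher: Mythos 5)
Your reduction of the $a\le 1$ clusterings via Proposition~\ref{Propvanishwithout}, your base case $N=k+1$, and your $x_N$-expansion inductive step are all sound and match the paper's argument where they overlap. But the proof is not complete: the family you flag at the end — clusterings with $a\ge 2$ antisymmetric variables that exhaust all $N-m$ symmetric ones (so $k+2\le N\le k+m-1$, $m\ge 3$) — is genuinely not covered by anything you have established, and you offer only two candidate strategies without carrying either out. That is a real gap, not a routine verification.

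The paper closes exactly this hole with a second expansion, symmetric to your $x_N$ one but in the pair $(x_1,\theta_1)$: write $P_\La^{(\ad)}=\sum_i x_1^i\tilde f^{(i)}+\sum_i\theta_1x_1^i\tilde g^{(i)}$ with $\tilde f^{(i)},\tilde g^{(i)}$ in the variables $x_2,\dots,x_N,\theta_2,\dots,\theta_N$; Proposition~\ref{proponm1} (applied with $x_1,\theta_1$ in place of $x_N,\theta_N$) puts these in $\mathcal I_{N-1}^{(k,r)}$, and only the $\tilde g^{(i)}$, of fermionic degree $m-1$, contribute to $\P_{\La,m}^{(\ad)}$, which becomes $\sum_i x_1^i\,\tilde g^{(i)}|_{\theta_2\cdots\theta_m}\big/\bigl((x_1-x_2)\cdots(x_1-x_m)\Delta(x_2,\dots,x_m)\bigr)$. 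Since $m-1<r$, the induction hypothesis at $N-1$ applies and gives vanishing whenever $k+1$ of $\{x_2,\dots,x_N\}$ coincide. Combined with your $x_N$-expansion (vanishing whenever $k+1$ of $\{x_1,\dots,x_{N-1}\}$ coincide) and the block symmetry in $x_1,\dots,x_m$ and in $x_{m+1},\dots,x_N$, every $(k+1)$-subset of $\{1,\dots,N\}$ is reached, because a subset of size $k+1\le N-1$ misses some index, which can be swapped within its block with $1$ or with $N$. Your parenthetical guess about ``a secondary reduction that removes a free antisymmetric variable while lowering the fermionic degree to $m-1$'' is precisely this move; note that no separate companion statement is needed, since $r>m-1$ keeps the same proposition applicable. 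As written, however, your argument does not prove the proposition for $m\ge 3$ and $k+2\le N\le k+m-1$.
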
 
\begin{proof}
The proof basically follows the steps of the proof of Proposition~\ref{Propvanishwithout}.  But the
loss of the overall symmetry and the division by the Vandermonde determinant
make the arguments somehow more involved.

We first consider the case $N=k+1$ and show that
$\P^{(\ad)}_{\La,m}$ vanishes when $x_1=\cdots=x_{k+1}=x$
if and only if $r>m$.  As in the proof of 
Proposition~\ref{Propvanishwithout}, we use (\ref{spe}) to determine whether $\P^{(\ad)}_{\La,m}$ vanishes.  Again the admissibility condition when $N=k+1$ reduces to the 
single requirement 
\begin{equation}
\La^\cd_1\geq \La^*_{N}+r, 
\end{equation} 
and the vanishing of the RHS of (\ref{spe}) can only  be due to that of $b_{\La}^{(\ad,k+1)}(1,r)$.  Since $s=(1,r)$
must necessarily be part of $\S\La$, this forces $r>m$ (since the first $m$ squares of the first row of $\La^\cd$ are not part of $\S\La$). When this is satisfied, we indeed see that
\begin{equation}
b_{\La}^{(\ad,k+1)}(1,r)=k+1-\frac{k+1}{r-1}(r-1)=0.\end{equation}

Now suppose by induction that 
the result holds for $N-1$, with $N \geq k+2$.
We have that 
\begin{equation}
P_{\Lambda}^{(\ad)} = \sum_{i \geq 0} x_1^i {\tilde f}^{(i)} + \sum_{i \geq 0} \theta_1 x_1^i {\tilde g}^{(i)}
\end{equation}
where ${\tilde f}^{(i)}$ and ${\tilde g}^{(i)}$ are symmetric superpolynomials in the variables $x_2,\dots,x_N,\theta_2,\dots,\theta_N$
of fermionic degree $m$ and $m-1$ respectively.  Observe that
${\tilde f}^{(i)}$ and ${\tilde g}^{(i)}$ belong to ${\mathcal I}_{N-1}^{(k,r)}$
by Proposition~\ref{proponm1} (the proposition is used with the variables
$x_1$ and $\theta_1$ instead of $x_N$ and $\theta_N$, and with the 
variables of ${\mathcal I}_{N-1}^{(k,r)}$ taken to be
$x_2,\dots,x_N,\theta_2,\dots,\theta_N$).
Therefore
\begin{equation}
\P^{(\ad)}_{\La,m} = \sum_{i \geq 0} \frac{x_1^{i}}{(x_1-x_2)\cdots (x_1-x_m)} 
\frac{{\tilde g}^{(i)}|_{\theta_2 \cdots \theta_{m}}}{\Delta(x_2,\dots,x_{m})}
\end{equation}
vanishes 
by induction 
 whenever $k+1$ of the
variables $\{x_2,\dots,x_{N}\}$ are equal (${\tilde g}^{(i)}$ is a sum of
Jack superpolynomials of fermionic degree $m-1<r$
indexed by $(k,r,N-1)$-admissible superpartitions).  
If $m=N$ the proof is over since
$\P^{(\ad)}_{\La,m}$ is symmetric in the variables $x_1,\dots,x_m$
and thus vanishes 
 whenever $k+1$ of the
variables $\{x_1,\dots,x_{m}\}$ are equal.  
If $m<N$ we use
\begin{equation}
P_{\Lambda}^{(\ad)} = \sum_{i \geq 0} x_N^i f^{(i)} + \sum_{i \geq 0} \theta_N x_N^i g^{(i)}
\end{equation}
where $f^{(i)}$ and $g^{(i)}$ are symmetric superpolynomials 
of fermionic degree $m$ and $m-1$ respectively
in the variables
$x_1,\dots,x_{N-1},\theta_1,\dots,\theta_{N-1}$.  Again
$f^{(i)}$ and $g^{(i)}$ belong to ${\mathcal I}_{N-1}^{(k,r)}$
by Proposition~\ref{proponm1}.  Hence
\begin{equation}
\P^{(\ad)}_{\La,m} = \sum_{i \geq 0} x_N^{i} \frac{f^{(i)}|_{\theta_1 \cdots \theta_m}}{\Delta(x_1,\dots,x_m)}
\end{equation}
vanishes  by induction whenever $k+1$ of the
variables $\{x_1,\dots,x_{N-1}\}$ are equal ($f^{(i)}$ is a sum of
Jack superpolynomials of fermionic degree $m<r$
indexed by $(k,r,N-1)$-admissible superpartitions).  
We have thus shown that $\P^{(\ad)}_{\La,m}$ vanishes
whenever $k+1$ of the
variables $\{x_1,\dots,x_{N-1}\}$ are equal or
whenever $k+1$ of the
variables $\{x_2,\dots,x_{N}\}$ are equal.  But
$\P^{(\ad)}_{\La,m}$ is symmetric in the variables $x_1,\dots,x_m$ and in 
the variables $x_{m+1},\dots,x_N$, which implies
that
$\P^{(\ad)}_{\La,m}$ vanishes  whenever $k+1$ of the
variables $\{x_1,\dots,x_N\}$ are equal.
\end{proof}

For instance, taking $k=1$,
  \begin{align}\P^{(-2)}_{\La,m}(x,x)&\ne 0\quad {\rm if} \quad
\Lambda={{\tiny \tableau[scY]{&&&&\bl\tcercle{}\\&&\bl\tcercle{}}}}
\qquad \text{ since $r=m=2$}\nonumber\\
  \P^{(-2/3)}_{\La,m}(x,x)&=0\quad {\rm if} \quad
\Lambda={{\tiny \tableau[scY]{&&&&&\bl\tcercle{}\\&&\bl\tcercle{}}}}
\quad \, \, \text{ since $r=4$ and $m=2$.}
  \end{align}
This last  example illustrates the fact that a purely antisymmetric superpartition $\La$
can lead to 
the vanishing of $\P_{\La,m}^{(\ad)}$.

We finally indicate the form of
the clustering property of the Jack superpolynomials. 
Let $x_{i_1}=\cdots=x_{i_k}=x$ and let $x_{i_{k+1}}=x'$ be a variable that does 
not belong to $\{x_{i_1},\dots,x_{i_k} \}$. Let also 
$a$ be the number of elements in 
$\{x_{i_1},\dots,x_{i_k},x_{i_{k+1}} \} \cap \{x_1,\cdots, x_m\}$.  We conjecture
the following:
\begin{conjecture} \label{conjecrma}
 If $\La$ is $(k,r,N)$-admissible, then
\begin{equation}
 (x-x')^{r-a} \quad {\rm divides}  \quad
\P_{\La,m}^{(\ad)} \, .\end{equation}
If moreover $N \geq k+m+1$ and  $r>m > 0$, then the multiplicity of the factor $(x-x')$ is exactly equal to $r-a$.
\end{conjecture}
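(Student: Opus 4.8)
The plan is to establish both assertions by induction on $N$, adapting the peeling argument used for Propositions~\ref{Propvanishwithout} and~\ref{sJ0} so that it records the \emph{order} of the zero at $x'=x$ and not merely its presence. Since $\P_{\La,m}^{(\ad)}$ is symmetric in $x_1,\dots,x_m$ and separately in $x_{m+1},\dots,x_N$, I would first normalize the configuration, letting $a$ denote the number of the $k+1$ clustered variables lying in the antisymmetric block $\{x_1,\dots,x_m\}$ and placing the $x'$-variable first when it belongs to that block. The shift of the exponent from $r$ down to $r-a$ reflects the division by the Vandermonde in \eqref{defPs}: the antisymmetry of the numerator $\S_{\La^a,\La^s}^{(\ad)}$ forces additional vanishing when clustered variables of the antisymmetric block coincide, and this forced vanishing is exactly what is stripped off by $\Delta(x_1,\dots,x_m)$, so that a careful order count in the degeneration where the $k$ copies merge into $x$ and then $x'\to x$ lowers the net order carried by $\P_{\La,m}^{(\ad)}$ by $a$. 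The case $a=0$ is then precisely the order-$r$ statement \eqref{newcluster}--\eqref{clusterprescribed} for the $\ta_1\cdots\ta_m$-component.

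For the lower bound (the first assertion), I would reduce the claim to the vanishing of the first $r-a$ Taylor coefficients of $\P_{\La,m}^{(\ad)}$ in $x'-x$, counting from the zeroth, the order-one case being the content of Propositions~\ref{Propvanishwithout} and~\ref{sJ0}. Each such coefficient is the image of $P_\La^{(\ad)}$ under an operation that differentiates in the $x'$-slot and then merges that slot into the cluster; the aim is to show that these images again lie in an ideal $\mathcal I_{N'}^{(k,r)}$ on fewer variables, so that the inductive hypothesis forces their clustered specializations to vanish to the required order. The engine for this is Proposition~\ref{proponm1} together with the stability of $\mathcal I_N^{(k,r)}$ under the operators of Section~\ref{Lie} (Theorem~\ref{theoideal}), the antisymmetric copies being absorbed through the symmetry of $\P_{\La,m}^{(\ad)}$ exactly as at the end of the proof of Proposition~\ref{sJ0}. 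A technical point to settle is that Proposition~\ref{proponm1} specializes a variable to $0$, whereas the clustering merges $x'$ into the nonzero value $x$; I would bridge this by applying $\nabla=L_1$ and its powers to move the specialization point, keeping careful track of which superpartitions remain admissible after each reduction.

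The second assertion --- that the multiplicity is \emph{exactly} $r-a$ under $N\geq k+m+1$ and $r>m>0$ --- is where I expect the real difficulty. Beyond the lower bound it requires the \emph{non}-vanishing of the $(r-a)$-th Taylor coefficient, that is, an explicit identification of the leading clustering coefficient. The induction reduces this to the base case $N=k+1$, but there the evaluation formula \eqref{spe} only delivers the single value at $x_1=\cdots=x_N=1$, which detects order $\geq 1$ and not the precise order; one therefore needs a refinement of \eqref{spe} that computes the whole clustered polynomial $\P_{\La,m}^{(\ad)}(x,\dots,x,x')$, presumably via the symmetrization formula \eqref{jackinnonsym} and the explicit structure of $E_{\tilde\La}$ in $k+1$ variables. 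Producing such a closed form for the top coefficient is exactly what is missing, and the exceptional superpartitions recorded in the footnotes (where the multiplicity jumps above $r-a$) show that no purely combinatorial admissibility argument can rule out extra vanishing; this is the reason the hypotheses $N\geq k+m+1$ and $r>m$ are imposed, and why the statement remains conjectural.
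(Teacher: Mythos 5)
The statement you are addressing is Conjecture~\ref{conjecrma}: the paper does not prove it. It is stated as a conjecture and supported only by computational evidence (all admissible superpartitions with $m\geq 1$, $n\leq 10$, $k,r\leq 6$, $N\leq 8$), so there is no proof in the paper against which to compare your argument. What the paper actually \emph{proves} in Section~\ref{les0} is strictly weaker: Propositions~\ref{Propvanishwithout} and~\ref{sJ0} establish only that the relevant polynomials vanish when $k+1$ commuting variables are identified, i.e.\ that the order of the zero is at least one, not at least $r-a$. Everything beyond first-order vanishing is left open.

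Your proposal is an honest outline rather than a proof, and you correctly identify where it breaks down, but let me name the gaps precisely. First, the reduction of ``the first $r-a$ Taylor coefficients in $x'-x$ vanish'' to membership in ideals $\mathcal I_{N'}^{(k,r)}$ is not actually available from the paper's machinery: Proposition~\ref{proponm1} controls the coefficients $[\partial_{x_N}^j P]_{x_N=\theta_N=0}$, i.e.\ the expansion of one variable around $0$, whereas you need derivatives in the $x'$-slot evaluated at the generic clustered point $x$. Your proposed bridge via $\nabla=L_1$ does not work as stated, because $\nabla=\sum_i\partial_{x_i}$ differentiates \emph{all} variables simultaneously (it generates a global translation $x_i\mapsto x_i+a$), so it cannot isolate the derivative in a single slot; some genuinely new identity relating single-slot derivatives at a clustered point to elements of the ideal would be required, and none is supplied. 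Second, even the order-one input you invoke only yields $(x-x')^1\mid \P_{\La,m}^{(\ad)}$ under the stated hypotheses; iterating the induction on $N$ as in Proposition~\ref{sJ0} does not by itself upgrade the order, since the peeling argument there tracks vanishing, not multiplicity. Third, as you concede, the exact-multiplicity claim would need an explicit evaluation of the leading clustering coefficient in the base case $N=k+1$, which neither you nor the paper produces. In short, the statement remains a conjecture, and your sketch, while pointing at plausible ingredients, does not close any of the three gaps.
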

This conjecture has been heavily tested: it has been checked for all $(k,r,N)$-admissible superpartitions of fermionic degree $m\geq 1$, bosonic degree $n\leq 10$, and such that $k\leq 6$, $r\leq 6$, $N\leq 8$ (for a total of  17924 cases).  Among all cases, only 489 have multiplicities strictly greater than $r-a$. Of course, none of these exceptional cases also satisfies $N\geq k+m+1$ and $r>m$.  



Conjecture~\ref{conjecrma} gives the clustering property of the
Jack polynomials with prescribed symmetry $\P_{\La,m}^{(\ad)}$ when
$r>m$, since in this case $r$ is always larger than $a$. 
Conjecture~\ref{conjecrma} also readily implies the clustering property
of the Jack superpolynomials described in \eqref{newcluster} and
\eqref{newcluster2}.   Recall that
\begin{equation} \label{eqantisym}
P_{\Lambda}^{(\ad)}(x_1,\dots,x_{N}) 
\Big |_{\theta_1 \cdots \theta_m} = \Delta(x_1,\dots,x_m) \, \P_{\Lambda,m}^{(\ad)}(x_1,\dots,x_{N}) \, .
\end{equation}
If $N \geq k+m+1$, we then get that  
\begin{equation} \label{ex11}
P_{\Lambda}^{(\ad)}(x_1,\dots,x_{N-k-1},x',\overbrace{x,\ldots,x}^{k \text{ times}}) 
\Big |_{\theta_1 \cdots \theta_m} \quad  \text{vanishes as } \quad (x-x')^r 
\quad  \text{when} \quad x \to x'  \, ,
\end{equation}
from Conjecture~\ref{conjecrma} with $a=0$. In this case, the condition $r>m$ plays no role and can be relaxed.   As pointed out previously,  the inequality $N \geq m+k+1$ ensures that the sets
$\{1,...,m\}$ and $\{N-k,\dots,N\}$ do not intersect and that the equality $a=0$  holds. But in this situation Conjecture~\ref{conjecrma} states  that not only  $(x-x')^r$ divides the polynomial, 
but that the multiplicity of $(x-x')$
is precisely $r$.
This is the situation described in
\eqref{newcluster} and the discussion following it.

If two of the $k$ variables set equal to $x$ belong to $\{x_1,\dots,x_m\}$ 
we have that 
\begin{equation} 
\left[ P_{\Lambda}^{(\ad)}(x_1,\dots,x_{N}) 
\Big |_{\theta_1 \cdots \theta_m} \right]_{x_{i_1}=\cdots=x_{i_k}=x,x_{i_{k+1}}=x'} = 0 \, ,
\end{equation}
from the antisymmetry of the Vandermonde determinant in \eqref{eqantisym}.
If one of the $k+1$ variables $\{x_{i_1},\dots,x_{i_{k+1}} \}$
belongs to $\{x_1,\dots,x_m\}$, then Conjecture~\ref{conjecrma} with $a=1$ implies that
\begin{equation} 
\left[ P_{\Lambda}^{(\ad)}(x_1,\dots,x_{N}) 
\Big |_{\theta_1 \cdots \theta_m} \right]_{x_{i_1}=\cdots=x_{i_k}=x,x_{i_{k+1}}=x'} \quad { \text{vanishes as } }\quad
(x-x')^{r-1} \quad   \text{when} \quad x \to x' 
 \, .
\end{equation}
Finally, if 
one of the $k$ variables $\{x_{i_1},\dots,x_{i_{k}} \}$
belongs to $\{x_1,\dots,x_m\}$ and $x_{i_{k+1}}$ belongs to $\{x_1,\dots,x_m\}$, 
then Conjecture~\ref{conjecrma} with $a=2$ leads to
\begin{equation} 
\left[ P_{\Lambda}^{(\ad)}(x_1,\dots,x_{N}) 
\Big |_{\theta_1 \cdots \theta_m} \right]_{x_{i_1}=\cdots=x_{i_k}=x,x_{i_{k+1}}=x'}  \quad { \text{vanishes as } }\quad
(x-x')^{r-1}  \quad  \text{when} \quad x \to x' 
\, ,
\end{equation}
since there is a factor $(x-x')$ in the Vandermonde determinant.  The last
three equations immediately imply that
\begin{equation} 
P_{\Lambda}^{(\ad)}(x_1,\dots,x_{N-k-1},x',\overbrace{x,\ldots,x}^{k \text{ times}})
  \quad { \text{vanishes as } }\quad (x-x')^{r-1}  \quad  \text{when} \quad x \to x' 
\, ,
\end{equation}
which corresponds to \eqref{newcluster2}.

 We conclude with some  examples of the clustering property.  For simplicity,
we use the notation for superpartitions where 
$\La$ is written as $\La^*$ with circles in the entries for which 
$\La^\cd_i-\La^*_i=1$.  With $k=2$ and $r=3$, setting $x_2=x_3=x$, we have:
\begin{equation}
 \P_{{\tiny{(4,\cercle{2},1,0)}}}^{(-3/2)}(x_1,x,x,x_4) = (x-x_1)^2 
(x-x_4)^3 f(x_1,x_4,x) \, ,
\end{equation}
where $f(x_1,x_4,x)$ is not divisible by either $(x-x_1)$ or $(x-x_4)$.
Compare the power 2 (when $a=1$) and 3 (when $a=0$).
With $k=3$ and $r=2$, and setting $x_2=x_3=x_4=x$, we get
\begin{equation}
\P_{{\tiny{(3,\cercle{1},1,0,0)}}}^{(-4)}(x_1,x,x,x,x_5) = -x (x-x_1)(x-x_5)^2 
(x+3x_5-x_1).
\end{equation}
Note in this case that if $x_5=0$ then 
\begin{equation} \label{eqexample}
\P_{{\tiny{(3,\cercle{1},1,0)}}}^{(-4)}(x_1,x,x,x) = -x^3 (x-x_1)^2,
\end{equation}
and the power of $(x-x_1)$ is equal to two instead of one. This does not contradict Conjecture~\ref{conjecrma} however since in this case $N=4<k+m+1=5$.



\begin{appendix}
\section{Proof of Proposition~\ref{propquasi}}\label{Prop8}

We recall Proposition~\ref{propquasi} and then give its proof.

\noindent {\it Let $\Lambda$ be a superpartition obtained from a 
$(k,r,N)$-admissible superpartition $\Gamma$ by doing one of the following:
\begin{itemize}
\item[i)] removing a circle
\item[ii)] adding a circle
\item[iii)] changing a circle into a square
\item[iv)] changing a square into a circle
\end{itemize}
Then $P_{\Lambda}$ does not have a pole at $\alpha=\ad$.}
\begin{proof}
Case ii) and iv) are immediate from {Proposition~\ref{lemmapoles}}
since $\Lambda$ is still $(k,r,N)$-admissible
in those cases.  We will only prove Case i) as Case iii) follows similarly.

Consider case i).  We have
$
\Lambda^*= \Gamma^* 
$
and $\Lambda^\cd$ is obtained by removing a cell from $\Gamma^\cd$ in a
certain row $a$ (which is thus non-fermionic in $\La$). 
The proof consists in supposing that $P_\Lambda^{(\alpha)}$ 
has a pole at $\alpha =\ad$ and deriving a contradiction
to Lemma~\ref{lemsw}.
Let $\Omega$ and $w,\sigma \in S_N$  be such as in Lemma~\ref{lemsw}.
We will first prove that $w$ and $\sigma$ differ at most in two positions.
This will follow from two claims.

\noindent {\bf Claim I:} if $\sigma(i) > w(i)$ then 
\begin{itemize}
\item[(\itta)] $\sigma(i)=w(i)+k+1$,
\item[(\ittb)] $\Omega_i^\cd = \Omega_i^*$,
\item[(\ittc)] $\Lambda^*_{w(i)}-\Lambda^\cd_{\sigma(i)}=r-1$,
\item[(\ittd)] $a \neq \sigma(i)$ and 
$\Lambda_{\sigma(i)}^\cd-\Lambda_{\sigma(i)}^*=1$ ,
\item[(\itte)] if $a \neq w(i)$ then 
$\Lambda_{w(i)}^\cd-\Lambda_{w(i)}^*=1$ .
\end{itemize} 
By (\ref{3.10}) and (\ref{3.12})
, we have that $\sigma(i)=w(i)+t(k+1)$ for some integer $t>0$.
Hence
\begin{equation}
\Omega_i^*= \Lambda^*_{w(i)} + (w(i) -i ) \frac{r-1}{k+1},
\end{equation}
and
\begin{equation}
\Omega_i^\cd= \Lambda^\cd_{w(i)+t(k+1)} + (w(i)+t(k+1) -i ) \frac{r-1}{k+1}.
\end{equation}
Therefore
\begin{equation} \label{eqclaim1}
\Omega^\cd_i - \Omega^*_i = \Lambda^\cd_{w(i)+t(k+1)} - \Lambda^*_{w(i)} + t(r-1).
\end{equation}
Now $\Lambda^*=\Gamma^*$ is $(k+1,r,N)$-admissible
by Lemma~\ref{lemadm} and thus
$\Lambda^*_{w(i)}- \Lambda^\cd_{w(i)+t(k+1)} \geq rt -1$, which gives
\begin{equation}
0 \leq \Omega^\cd_i - \Omega^*_i \leq -rt+1 +t(r-1)=1-t \, .
\end{equation}
Therefore the only possibility is $t=1$ and (\itta) follows.  Setting $t=1$ in the
previous equation implies (\ittb).  Letting $\Omega^\cd=\Omega^*$ and $t=1$
in (\ref{eqclaim1})
gives (\ittc).  
For (\ittc) to occur, we need $\Lambda_{\sigma(i)}^\cd \neq \Lambda_{\sigma(i)}^*$, and thus (\ittd) follows
since row $a$ is not fermionic.
Since $\Lambda_{w(i)}^\cd-\Lambda_{\sigma(i)}^\cd \geq r$ whenever $w(i) \neq a$,
(\itte) follows from (\ittc).

\noindent {\bf Claim II:} if $\sigma(i) < w(i)$ then 
\begin{itemize}
\item[(\ittf)] $w(i)=\sigma(i)+k+1$,
\item[(\ittg)] $\Omega_i^\cd - \Omega_i^*=1$,
\item[(\itth)] $\Lambda^\cd_{\sigma(i)}-\Lambda^*_{w(i)}=r$,
\item[(\itti)] $\Lambda_{\sigma(i)}^\cd=\Lambda_{\sigma(i)}^*$ ,
\item[(\ittj)] if $a \neq \sigma(i)$ then 
$\Lambda_{w(i)}^\cd=\Lambda_{w(i)}^*$. 
\end{itemize}
By (\ref{3.10}) and (\ref{3.12}), we have that $w(i)=\sigma(i)+t(k+1)$ for some integer $t>0$.
Hence
\begin{equation}
\Omega_i^*= \Lambda^*_{\sigma(i)+t(k+1)} + (\sigma(i)+t(k+1) -i ) \frac{r-1}{k+1},
\end{equation}
and
\begin{equation}
\Omega_i^\cd= \Lambda^\cd_{\sigma(i)} + (\sigma(i) -i ) \frac{r-1}{k+1}.
\end{equation}
Therefore
\begin{equation} \label{eqclaim2}
\Omega^\cd_i - \Omega^*_i = \Lambda^\cd_{\sigma(i)} - \Lambda^*_{\sigma(i)+t(k+1)} - t(r-1).
\end{equation}
We have that $\Lambda^*=\Gamma^*$ is $(k+1,r,N)$-admissible
by Lemma~\ref{lemadm} and thus $\Lambda^\cd_{\sigma(i)}- \Lambda^*_{\sigma(i)+t(k+1)} \geq rt$, which gives
\begin{equation}
1 \geq \Omega^\cd_i - \Omega^*_i \geq rt -t(r-1)=t \, .
\end{equation}
Therefore the only possibility is $t=1$ and (\ittf) follows.  Setting $t=1$ in the
previous equation implies (\ittg).  Letting $\Omega^\cd-\Omega^*=1$ and $t=1$
in (\ref{eqclaim2}) gives (\itth).  
Since $\Lambda_{\sigma(i)}^*-\Lambda_{w(i)}^* \geq r$, for (\itth) to occur,
we need $\Lambda_{\sigma(i)}^\cd = \Lambda_{\sigma(i)}^*$, and thus (\itti) follows.
Since $\Lambda_{\sigma(i)}^\cd-\Lambda_{w(i)}^\cd \geq r$ whenever 
$\sigma(i) \neq a$,
(\ittj) follows from (\itth).

If $w$ and $\sigma$ do not coincide, then there exists a $i$ such that
$\sigma(i)>w(i)$.  From (\itta), this implies
$\sigma(i)=w(i)+k+1$. Now let $j$ be such that
$w(j)=\sigma(i)=w(i)+k+1$.
{From} (\itta) and (\ittf), we have
$\sigma(j)=w(i)+ k+1 \pm (k+1)$ (the case $\sigma(j)=w(i)+k+1$
is impossible since $\sigma(i)=w(i)+k+1$ and $i \neq j$).  
But if $\sigma(j)=w(i)+2(k+1)$
and $l$ is such that $w(l)= w(i)+2 (k+1)$ then the only option is
$\sigma(l)=w(i)+3(k+1)$.  Continuing in this way leads to a contradiction since
$\sigma$ is finite.  Therefore  $\sigma(j)=w(i)$.  We will show that this is impossible if $a \neq w(i)$.  We have from (\itth) (using $j$ instead of $i$)
\begin{equation} \label{eqcontr}
\Lambda^\cd_{w(i)} - \Lambda_{\sigma(i)}^* =r.
\end{equation}
If $a \neq w(i)$, we have from (\itte) that $\Lambda^\cd_{w(i)}-\Lambda^*_{w(i)}=1$.
Hence
\begin{equation}
\Lambda^\cd_{w(i)} - \Lambda_{\sigma(i)}^*  > \Lambda^*_{w(i)} - \Lambda_{\sigma(i)}^*=
 \Lambda^*_{w(i)} - \Lambda^*_{w(i)+k+1} \geq r,
\end{equation}
which
contradicts \eqref{eqcontr}.  Therefore $\sigma$ and $w$ coincide, except possibly at
two positions $i$ and $j$, where 
\begin{equation}\label{eqcoincide}
\sigma(i)=w(i)+k+1 \qquad {\rm and} \qquad
\sigma(j)=w(j)-k-1=w(i)=a.
\end{equation} 

We say that a permutation $w$ has a {\it descent} at $l$ if
$w(l)>w(l+1)$.  The next two claims give some consequences of $w$ and
$\sigma$ having a descent at $l$.

\noindent {\bf Claim III:}
if $w(l)>w(l+1)$ then
\begin{itemize}
\item[(\ittk)] $w(l)=w(l+1)+k$,
\item[(\ittl)] $\Lambda_{w(l+1)}^*-\Lambda_{w(l)}^*=r-1$,
\item[(\ittm)] $\Lambda^\cd_{w(l+1)}-\Lambda^*_{w(l+1)}=1$ if $a \neq w(l+1)$,
\item[(\ittn)] $\Omega_l^*=\Omega_{l+1}^*$.
\end{itemize}
Let $m$ be such that $w(l)-w(l+1)=m>0$.  Using $\Omega^*_{l}\geq \Omega^*_{l+1}$,
we have from \eqref{3.9} that
\begin{equation}
\frac{m+1}{k+1} (r-1) \geq \Lambda^*_{w(l+1)}- \Lambda^*_{w(l)}. 
\end{equation}
Since $\Gamma$ is $(k,r,N)$-admissible we get
\begin{equation}
\Lambda_{w(l+1)}^*-\Lambda_{w(l)}^* \geq \left\lfloor \frac{m}{k}
\right\rfloor r -1.
\end{equation}
It then follows that
\begin{equation}
\frac{m+1}{k+1}(r-1) \geq 
\Lambda_{w(l+1)}^*-\Lambda_{w(l)}^* \geq \left\lfloor \frac{m}{k}
\right\rfloor r -1,
\end{equation}
which implies $m=k$ given that $m \equiv k \mod k+1$ from \eqref{3.10}.
This immediately gives (\ittk) and (\ittl).
Assertion (\ittm) follows since  $\Gamma$ 
is $(k,r,N)$-admissible.  Using (\ittk) and (\ittl), we get
\begin{equation}\Omega_l^*-\Omega_{l+1}^*=\Lambda_{w(l)}^*-\Lambda_{w(l+1)}^*+(w(l)-w(l+1)+1) \frac{r-1}{k+1}=0,
\end{equation}
and (\ittn) follows.

\noindent {\bf Claim IV:}
if $\sigma(l)>\sigma(l+1)$ and $a \neq 
\sigma(l),\sigma(l+1)$
then
\begin{itemize}
\item[(\itto)] $\sigma(l)=\sigma(l+1)+k$ ,
\item[(\ittp)] $\Lambda_{\sigma(l+1)}^\cd-\Lambda_{\sigma(l)}^\cd=r-1$,
\item[(\ittq)] $\Lambda^\cd_{\sigma(l)}-\Lambda^*_{\sigma(l)}=1$,
\item[(\ittr)] $\Omega_l^\cd=\Omega_{l+1}^\cd$.
\end{itemize}
Following the steps of the proof in Claim~{\bf III}, let 
$m$ be such that $\sigma(l)-\sigma(l+1)=m>0$. 
Using $\Omega^\cd_{l}\geq \Omega^\cd_{l+1}$,
we have from \eqref{3.11} that
\begin{equation}
\frac{m+1}{k+1} (r-1) \geq \Lambda^\cd_{\sigma(l+1)}- \Lambda^\cd_{\sigma(l)}.
\end{equation}
Since $\Gamma$ is $(k,r,N)$-admissible we get
\begin{equation}
\Lambda_{\sigma(l+1)}^\cd-\Lambda_{\sigma(l)}^\cd \geq \left\lfloor 
\frac{m}{k}\right\rfloor r -1.
\end{equation}
It then follows that
\begin{equation}
\frac{m+1}{k+1}(r-1) \geq \Lambda_{\sigma(l+1)}^\cd-\Lambda_{\sigma(l)}^\cd
 \geq \left\lfloor \frac{m}{k}\right\rfloor r -1,
\end{equation}
which implies $m=k$  since $m \equiv k \mod k+1$
from {3.12}. 
This immediately gives (\itto) and (\ittp).
Assertion (\ittq) follows since  $\Gamma$ 
is $(k,r,N)$-admissible.
Using (\itto) and (\ittp), we get
\begin{equation}\Omega_l^\cd-\Omega_{l+1}^\cd=\Lambda_{\sigma(l)}^\cd-\Lambda_{\sigma(l+1)}^\cd
+(\sigma(l)-\sigma(l+1)+1) \frac{r-1}{k+1}=0,
\end{equation}
and (\ittr) follows.

Suppose that 
$w(l) > w(l+1)$, with
$w(l)=\sigma(l)$, $w(l+1)=\sigma(l+1)$ and $a\neq \sigma(l),\sigma(l+1)$.  From (\ittn) and (\ittr) we get that $\Omega_l^*=\Omega_{l+1}^*$
and $\Omega_l^\cd=\Omega_{l+1}^\cd$. We will show that this is not possible since
$\Omega_l^\cd-\Omega_{l}^*=1$ and thus $\Omega$ would not be a superpartition.
Indeed, using
\begin{equation}
\Omega_l^* = \Lambda_{\sigma(l)}^* + (\sigma(l)-l) \frac{r-1}{k+1},
\end{equation}
and
\begin{equation}
\Omega_l^\cd = \Lambda_{\sigma(l)}^\cd + (\sigma(l)-l) \frac{r-1}{k+1},
\end{equation}
we have from (\ittq) that $\Omega_l^\cd-\Omega_{l}^*=1$.

Recall from \eqref{eqcoincide} that $w$ and $\sigma$ coincide,  
except possibly at
two positions $i$ and $j$, where $\sigma(i)=w(i)+k+1$ and 
$\sigma(j)=w(j)-k-1=w(i)=a$ (and thus $i \equiv j \mod k+1$).   
{From} what we have established in the previous paragraph, 
we can have $w(l)>w(l+1)$ only if $l=i,i-1$ or $l=j,j-1$.  We now show
that $l$ cannot be equal to $i$ or $j-1$.  Suppose $l=i$. We have
$\sigma(i)=w(i)+k+1$ and, from (\ittk),
$w(i+1)=\sigma(i+1)=w(i)-k$.  Hence
$\sigma(i)-\sigma(i+1)=2k+1$ which violates (\itto) given that $a =w(i)$ is 
not equal
to either $\sigma(i)$ or $\sigma(i+1)$.  Suppose $l=j-1$.  From (\ittn) we
have $\Omega_{j-1}^*=\Omega_j^*$.  But since $\sigma(j)=w(j)-k-1$, we have
from (\ittg) that $\Omega_j^\cd -\Omega^*_j=1$, and hence we get the contradiction that
$\Omega$ is not a
superpartition.

Therefore, $w(l)>w(l+1)$ only if $l=i-1$ or $l=j$.  Suppose $i<j$ and that
we have descents 
at $i-1$ and $j$.  The only option is $j=i+k+1$ 
since otherwise there would be extra descents.  Thus $w(i+k+1)=w(i)+k+1$,
and hence from (\ittk) we get
$w(i-1)=w(i)+k$ and $w(i+k+2)=w(i+k+1)-k=w(i)+1$.   The relevant portion
of the permutation $w$ is thus
\begin{equation}
\begin{array}{cccccccc}
\cdots &i-1 & i & \cdots & i+k & i+k+1 & i+k+2 & \cdots\\
\cdots & w(i)+k & w(i) & \cdots & w(i+k) & w(i)+k+1 & w(i)+1 & \cdots
\end{array}
\end{equation}
But this is impossible since $w(i+k)$ cannot be equal to $w(i)+k$ (given that $w(i-1)=w(i)+k$)
and thus $w(i+k) \leq w(i)-1$ and there would necessarily be extra descents
in $w$.  Therefore, in this case there can be at most
one descent (in position $i+k+1$).  From the argument at the end of the proof of Proposition~2.6
of \cite{FJMM}, we conclude that $w$ is the identity\footnote{The argument goes as follows.  Suppose $w$ has exactly one descent (at $i+k+1$).
Then $w(j) \geq j$ for $1\leq j \leq i+k+1$.  Since $w(i+k+1) \equiv i+k+1 \mod k+1$, and $w(i+k+1)=i+k+1$ is impossible (there would not be a descent at 
$i+k+1$), this implies that $w(i+k+1)\geq i+2k+2$.  Hence
$w(i+k+2)=w(i+k+1)-k \geq i+k+2$, and given that
$w(j)<w(j+1)$ holds for $j\geq i+k+1$, we have also $w(j)\geq j$ for $j\geq i+k+1$.  Therefore the permutation $w$ is such that $w(j)\geq j$ for all $j$,
which
is obviously a contradiction}.  Therefore 
$\sigma$ needs to be the transposition $(i,i+k+1)$ in order not to be equal to the identity.  We will show that this implies
the contradiction $\Omega=\Lambda$.  We have $\Omega^*=\Lambda^*$ since
$w$ is the identity.  We also have that $\Omega^\cd$ and $\Lambda^{\cd}$
coincide except possibly in rows $i$ and $i+k+1$.  We have 
\begin{equation}
\Omega^\cd_i = \Lambda^\cd_{i+k+1} + (i+k+1-i) \frac{r-1}{k+1}= \Lambda^\cd_{i+k+1}
+ r-1 .\end{equation}
{From} (\ittc), we have $\Lambda^*_i-\Lambda^\cd_{i+k+1}=r-1$, and thus
$
\Omega^\cd_i = \Lambda^*_i
$.
But since $a=w(i)=i$, we have that $\Lambda^*_i=\Lambda^\cd_i$ and thus
$\Omega^\cd_i=\Lambda^\cd_i$, which means that $\Omega^\cd=\Lambda^\cd$. 

Finally, suppose $j<i$.  As in the previous case, it is easy to deduce that
$j=i-k-1$, $\sigma(i-1)=w(i-1)=w(i)+k$,
$\sigma(i)=w(i)+k+1$, $\sigma(i-k-1)=w(i)$ and $\sigma(i-k)=w(i-k)=w(i)+1$.
The relevant portion
of the permutation $\sigma$ is thus
\begin{equation}
\begin{array}{ccccccc}
\cdots &i-k-1 & i-k & \cdots & i-1 & i & \cdots\\
\cdots & w(i) & w(i)+1 & \cdots & w(i)+k & w(i)+k+1  & \cdots
\end{array}
\end{equation}
Since $w$ does not have descents in $i$ and $i-k-2$, we have
$\sigma(i+1)=w(i+1) > w(i)$ and $\sigma(i-k-2)=w(i-k-2)< w(i-k-1)=w(i)+k+1$.
This implies
$\sigma(i+1) \geq w(i)+k+2$ and $\sigma(i-k-2)\leq w(i)-1$
 and hence 
$\sigma$ has no descent, which means that $\sigma$ is the identity.  
Therefore 
$w$ needs to be the transposition $(i-k-1,i)$ in order not to be equal 
to the identity.  We will show that this is impossible.
Since $\sigma$ is the identity we have
$\Omega^\cd=\Lambda^\cd$.  Now, from (\itth) with $i$ replaced by 
$i-k-1$ we have 
$\Lambda^\cd_{i-k-1} -\Lambda_i^*=r$ and thus
\begin{equation}
\Omega_{i-k-1}^* = \Lambda_i^* + \bigl(i-(i-k-1)\bigr) \frac{r-1}{k+1}
=
\Lambda_i^* + r-1= \Lambda_{i-k-1}^\cd-1= \Lambda_{i-k-1}^*-1,
\end{equation}
since $w(i)=i-k-1=a$.  Hence 
\begin{equation}
\Lambda^\cd_{i-k} = \Omega^\cd_{i-k} \leq \Omega^*_{i-k-1} < \Lambda^*_{i-k-1}=r + \Lambda_i^*,
\end{equation}
which implies that $\Lambda_{i-k}^\cd- \Lambda^*_{i}=
\Gamma^\cd_{i-k} -\Gamma^*_{i} < r$.  This contradicts  the 
$(k,r,N)$-admissibility of $\Gamma$ and completes the proof.
\end{proof}

\section{Proof of formulas~\eqref{forp0} and \eqref{forq}}
\label{apppieri}
We first need to define a few concepts introduced in \cite{DLMeva}.

If the first column of the diagram of $\La$ does not contain a circle,
we introduce the ``column-removal'' operation $\C$ defined such that
$\C \La$ is the
superpartition
whose diagram is obtained by removing the
first column of the diagram of $\La$ (the
operation is illustrated in Fig.~\ref{FigC}).

If the first column of the diagram of
$\La $ contains a circle, we define the ``circle-removal'' operation  $\widetilde{\C}$ such that the diagram of $\widetilde{\C} \La$ is obtained from that of $\La$ by removing the circle in the first column of the diagram of $\La$
(also illustrated in Fig.~\ref{FigC}).
\begin{figure}[h]\caption{Operators $\C$ and $\widetilde{\C}$}\label{FigC}
 \begin{equation*}\mathcal{C} \,:\quad {\tableau[scY]{&&&\\&&\bl\tcercle{}\\&\\&\bl\tcercle{}}}  \longmapsto \quad {\tableau[scY]{&&\\&\bl\tcercle{}\\ \\\bl\tcercle{}}}
\qquad \widetilde{\mathcal{C}} \,:\quad {\tableau[scY]{&&&\bl\tcercle{}\\&\\&\bl\tcercle{}\\\bl\tcercle{}}}  \longmapsto \quad {\tableau[scY]{&&&\bl\tcercle{}\\&\\&\bl\tcercle{}\\\bl }}
 \end{equation*}
\end{figure}

Similarly, we can introduce two row operations whose
actions on diagrams is illustrated in Fig. \ref{FigRtilde}.
\begin{figure}[h]\caption{Operators $\R$ and $\widetilde{\R}$}\label{FigRtilde}
\begin{equation*}\mathcal{R} \,:\quad {\tableau[scY]{&&&\\&&\bl\tcercle{}\\&\\&\bl\tcercle{}}}  \longmapsto \quad {\tableau[scY]{ &&\bl\tcercle{}\\& \\&\bl\tcercle{}\\ \bl }}
\qquad \widetilde{\mathcal{R}} \,:\quad {\tableau[scY]{&&&\bl\tcercle{}\\&\\&\bl\tcercle{}\\&\bl }}  \longmapsto \quad {\tableau[scY]{&&&\bl \\&\\&\bl\tcercle{}\\&\bl }}
 \end{equation*}
\end{figure}

The following proposition from \cite{DLMeva} will prove essential.
\begin{proposition}[\cite{DLMeva}] 
Let $\La$ be a superpartition
such that $\ell(\La^\cd)=\ell(\La^*)=\ell$.  Then
\begin{equation} \label{PropFactoI}
P_\La^{(\aa)}(x_1,\ldots,x_\ell;\theta_1,\ldots,\theta_\ell)=  x_1\cdots x_\ell \,
 P_{\C\La}^{(\aa)}(x_1,\ldots,x_\ell;\theta_1,\ldots,\theta_\ell) .\end{equation}
Similarly, let $\La$ be a superpartition such that 
$\ell(\La^\cd)=\ell(\La^*)+1=\ell$. Then
\begin{equation}\label{PropFactoII}
(-1)^{m-1}
\Bigl[ \partial_{\theta_\ell}\,P_\La^{(\aa)}(x_1,\ldots,x_\ell;\theta_1,\ldots,\theta_\ell)
\Bigr]_{x_\ell=0}
=
P_{\widetilde{\C}\La}^{(\aa)}(x_1,\ldots,x_{\ell-1};\theta_1,\ldots,\theta_{\ell-1}).
\end{equation}

Let $(x_-;\theta_-)=(x_2,x_3,\dots;\theta_2, \theta_3,\dots)$.  
If the first row of the diagram of $\Lambda$ is bosonic
(that is, $\La^*_1= \La^\circledast_1=k$),
then
\begin{equation} \label{PropFactoIdual}
\coeff{x_1^k} P_\La^{(\aa)}(x;\theta) = P_{\R\La}^{(\aa)}(x_-;\theta_-).
\end{equation}
Similarly, if the first row of the diagram of $\La$ is fermionic
(that is, $\La_1^*=\La_1^\circledast-1=k$),
then
\begin{equation} \label{PropFactoIIdual}
\coeff{x_1^k}\, \partial_{\theta_1} \, P_\La^{(\aa)}(x;\theta)
= P_{\R\widetilde{\R}\La}^{(\aa)}(x_-;\theta_-).
\end{equation}
\end{proposition}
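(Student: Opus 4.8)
The plan is to reduce all four identities to statements about the non-symmetric Jack polynomials $E_\eta$ via the symmetrization formula \eqref{jackinnonsym}, splitting the proposition into a ``column'' pair (\eqref{PropFactoI}, \eqref{PropFactoII}) and its conjugate ``row'' pair (\eqref{PropFactoIdual}, \eqref{PropFactoIIdual}). Everything rests on two elementary facts about the $E_\eta$'s, both obtainable from the eigenvalues \eqref{etab} and the Hecke-type relations \eqref{relhecke}: a multiplication rule and a stability rule.

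First I would establish the multiplication rule: if every part of $\eta$ satisfies $\eta_i\geq 1$, then $E_\eta=(x_1\cdots x_N)\,E_{\eta-1^N}$. This follows by checking that the symmetric factor $e_N=x_1\cdots x_N$ intertwines the Dunkl--Cherednik operators as $\mathcal D_i(e_N g)=e_N(\mathcal D_i+\alpha)g$: the rational and exchange pieces of $\mathcal D_i$ commute with the symmetric $e_N$, while $\alpha x_i\partial_{x_i}$ contributes the extra $\alpha$ since $x_i\partial_{x_i}e_N=e_N$. Because passing from $\eta$ to $\eta-1^N$ lowers each eigenvalue $\bar\eta_i$ by exactly $\alpha$ by \eqref{etab}, while the leading monomials agree, uniqueness forces the rule. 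Then \eqref{PropFactoI} is immediate: when $\ell(\La^\cd)=\ell(\La^*)=\ell$ every entry of $\tilde\La$ is $\geq 1$, so $E_{\tilde\La}=e_\ell\,E_{\tilde\La-1^\ell}=e_\ell\,E_{\widetilde{\C\La}}$; being symmetric, $e_\ell$ pulls out of $\sum_w\mathcal K_w$ in \eqref{jackinnonsym}, and the normalizations agree because $\C$ merely shifts the multiset of bosonic parts (trailing zeros included), so that $f_{(\C\La)^s}=f_{\La^s}$. For \eqref{PropFactoII} I would invoke the stability rule $E_\eta(x_1,\dots,x_{\ell-1},0)=E_{(\eta_1,\dots,\eta_{\ell-1})}(x_1,\dots,x_{\ell-1})$, valid when $\eta_\ell=0$: the lone circle in the first column is exactly a fermionic part of value $0$, so that applying $\partial_{\theta_\ell}$ and setting $x_\ell=0$ isolates the summands of \eqref{jackinnonsym} realizing this zero row, while commuting $\theta_\ell$ to the front past $\theta_1\cdots\theta_{m-1}$ produces the sign $(-1)^{m-1}$.

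The row pair \eqref{PropFactoIdual}, \eqref{PropFactoIIdual} I would treat as the conjugate statements, peeling off the largest part instead of the first column. The direct route repeats the argument above with the affine/cyclic intertwiner $E_{(\eta_2,\dots,\eta_N,\eta_1+1)}(x)=x_N\,E_\eta(x_N,x_1,\dots,x_{N-1})$, which lets one extract the coefficient of the top power $x_1^k$ of the largest (bosonic, resp.\ fermionic) row and, after relabeling $x_2,x_3,\dots\mapsto x_1,x_2,\dots$, identify it with $P_{\R\La}$ (resp.\ $P_{\R\widetilde\R\La}$). That conjugation $\La\leftrightarrow\La'$ interchanges $\C,\widetilde\C$ with $\R,\widetilde\R$ --- consistent with the duality \eqref{dual} --- is the structural reason these hold and can be used to organize the bookkeeping.

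I expect the main obstacle to be purely combinatorial. Unlike the symmetric factor $e_\ell$ in \eqref{PropFactoI}, the operations $\partial_{\theta_\ell}(\cdot)|_{x_\ell=0}$ and $\coeff{x_1^k}$ do not commute with $\sum_{w\in S_N}\mathcal K_w$, so I would have to determine precisely which permutations $w$ leave a non-zero contribution, collect the resulting anticommuting signs into the single factor $(-1)^{m-1}$, and confirm that the monomial normalizations on both sides coincide --- which they do because each diagram operation only permutes the multiset of parts (padding zeros included) and hence preserves the relevant $f_{\La^s}$. Handling this sign-and-multiplicity matching in the fermionic sector is where the real work lies.
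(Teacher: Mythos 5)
The paper does not actually prove this proposition: it is stated with the attribution [\cite{DLMeva}] and used as imported input, so there is no in-paper argument to measure your proposal against. Judged on its own terms, your reduction to the non-symmetric Jack polynomials via \eqref{jackinnonsym} is a sensible strategy, and your treatment of \eqref{PropFactoI} is essentially complete and correct: the intertwining $\mathcal D_i(e_\ell g)=e_\ell(\mathcal D_i+\alpha)g$, the eigenvalue shift read off from \eqref{etab}, the fact that $e_\ell=x_1\cdots x_\ell$ commutes with every $\mathcal K_w$, and the invariance of $f_{\La^s}$ under the uniform shift of parts together give \eqref{PropFactoI}. The extraction of $\coeff{x_1^k}$ in \eqref{PropFactoIdual} by a top-degree argument on the monomial support of $E_{\tilde\La}$ is also standard and workable.

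The genuine gap is in the fermionic identities \eqref{PropFactoII} and \eqref{PropFactoIIdual}, and it is not merely bookkeeping. The stability rule you invoke, $E_\eta(x_1,\dots,x_{\ell-1},0)=E_{(\eta_1,\dots,\eta_{\ell-1})}(x_1,\dots,x_{\ell-1})$, requires the zero entry to sit in the \emph{last} slot, whereas for $\tilde\La=(\La_m,\dots,\La_1,\La_N,\dots,\La_{m+1})$ the circled zero part is $\tilde\La_1$, i.e.\ the \emph{first} slot. After applying $\mathcal K_w$, the substitution $x_\ell=0$ lands in position $w^{-1}(\ell)$ of the argument list of $E_{\tilde\La}$, and for the permutations surviving $\partial_{\theta_\ell}$ this position ranges over $\{1,\dots,m\}$; a non-symmetric Jack polynomial with an \emph{interior} variable set to zero does not reduce term by term to a smaller $E_\eta$. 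The collapse to $P_{\widetilde{\C}\La}^{(\a)}$ can therefore only occur after resumming over the relevant coset of $S_{\ell-1}$ in $S_\ell$, simultaneously with the collection of the anticommuting signs into the single factor $(-1)^{m-1}$ and the matching of $f_{\La^s}$ with $f_{(\widetilde{\C}\La)^s}$. You correctly identify this as "where the real work lies," but that is precisely the decisive step of the proof, and it is not supplied; the same deferral affects \eqref{PropFactoIIdual}. As written, the proposal is a credible outline for half of the statement and an unexecuted plan for the other half.
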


We now proceed to the proof of formula \eqref{forp0}.
\begin{proof}[Proof of formula \eqref{forp0}]
An equivalent form of formula \eqref{forp0}
is given in \cite{jeff}. We reproduce this proof, with minor improvements, since its pattern also applies to the proof of \eqref{forq}. 
The structure of the proof follows the original derivation of the Pieri formulae for Jack polynomials in \cite{Stan}. The key steps rely heavily on the results of \cite{DLMeva}.

The proof is done in the case where the number of variables is infinite.
The finite case is recovered by letting $x_i=0$ and $\theta_i=0$
for all $i>N$.  We will use the notation
\begin{equation}\label{p0P}\tilde p_0^{(\ell)}= \sum_{i=1}^\ell \theta_i  \quad {\rm and} \quad
P_{\Lambda}^{(\alpha,\ell)} = P_{\Lambda}
(x_1,\dots,x_\ell,\theta_1,\dots,\theta_\ell;\alpha).
\end{equation}
Note that $P_{\Lambda}^{(\alpha,\ell)}=0$ if $\ell(\Lambda^\cd)>\ell$.

The proof proceeds by induction on the degree of $\Lambda^\cd$.
{From} \cite{DLMeva} (cf. Proposition 11 applied to the case $n=0$),
we know that
\begin{equation} \label{twosides}
\tilde p_0 \, P_{\Lambda}^{(\aa)} = \sum_{\Omega} c_{\Lambda \Omega}
P_{\Omega}^{(\aa)},
\end{equation}
for some coefficients $c_{\Lambda \Omega} \in \mathbb Q(\alpha)$,
where the sum is over all $\Omega$'s obtained by adding a circle to
$\Lambda$.

Suppose first that the first column of $\Lambda$ does not have
a circle and is of length $\ell(\La^\cd)=\ell(\La^*)=\ell$.
Restricting \eqref{twosides} to $\ell$ variables
we get
\begin{equation}
\tilde p_0^{(\ell)}  P_{\Lambda}^{(\alpha,\ell)} = \sum_{\Omega} c_{\Lambda \Omega} \, 
P_{\Omega}^{(\alpha,\ell)},
\end{equation}
where 
all the $\Omega$'s in the sum have length $\ell(\Om^\cd)=\ell$
 and do not have a circle
in the first column (otherwise $P_{\Omega}^{(\alpha,\ell)}$ would
be equal to zero). Using \eqref{PropFactoI}, we get
\begin{equation}
\tilde p_0^{(\ell)}   x_1 \cdots x_\ell 
P_{\mathcal C \Lambda}^{(\alpha,\ell)} = \sum_{\Omega} c_{\Lambda \Omega} \,  x_1 \cdots x_\ell \,
P_{\mathcal C \Omega}^{(\alpha,\ell)}.
\end{equation}
Canceling $x_1 \cdots x_\ell$ on both sides, we can then use
induction since the degree of $(\mathcal C \Lambda)^\cd$ is smaller
than that of $\Lambda^\cd$.  The result follows in this
case since 
\begin{equation}
 \prod_{s \in {\rm col}_{\Omega^\circ}} 
\frac{h_{\Lambda}^{(\alpha)}(s)}{h_{\Omega}^{(\alpha)}(s)} =
 \prod_{s \in {\rm col}_{\mathcal C \Omega^\circ}}
\frac{h_{ \mathcal C\Lambda}^{(\alpha)}(s)}{h_{\mathcal C\Omega}^{(\alpha)}(s)} ,
\end{equation}
and $\# \Omega^\circ = \# \mathcal C \Omega^\circ$.  This covers all the cases
where $\ell(\Om^\cd)=\ell$.

Suppose now
that $\Gamma$ is obtained from $\Lambda$ by adding a circle to the first
column of $\La$. 
Isolating the term $P_\Gamma$ and using the result we just established,
we then have
\begin{equation}
\tilde p_0 \, P_{\Lambda}^{(\aa)} = 
 c_{\Lambda \Gamma} P_{\Gamma}^{(\aa)} + \sum_{\Omega} (-1)^{\# \Omega^\circ} 
\left( \prod_{s \in {\rm col}_{\Omega^\circ}} 
\frac{h_{\Lambda}^{(\alpha)}(s)}{h_{\Omega}^{(\alpha)}(s)}  \right) P_{\Omega}^{(\aa)},
\end{equation}
where the sum is over $\Omega$'s 
with $\ell(\Om^\cd)=\ell$.
Applying the endomorphism $\hat \omega_\aa$, we obtain from \eqref{dual}
\begin{equation}
(-1)^{m} \alpha \tilde p_0 
\|P_{\Lambda}\|^2 P_{\Lambda'}^{(1/\alpha)}
= 
 c_{\Lambda \Gamma} \|P_{\Gamma}\|^2 P_{\Gamma'}^{(1/\alpha)}
+ 
\sum_{\Omega} (-1)^{\# \Omega^\circ} 
\left( \prod_{s \in {\rm col}_{\Omega^\circ}} 
\frac{h_{\Lambda}^{(\alpha)}(s)}{h_{\Omega}^{(\alpha)}(s)}  \right) 
 \|P_{\Omega}\|^2 P_{\Omega'}^{(1/\alpha)} \, ,
\end{equation}
where we recall that the norm $\|P_{\Lambda}\|^2$ is defined in \eqref{norm}.
Differentiating with respect to $\ta_1$ on both sides and 
using \eqref{PropFactoIIdual}, we get
\begin{equation}
(-1)^{m}\alpha \|P_{\Lambda}\|^2 P_{\mathcal R \Lambda'}^{(1/\alpha)}(x_-,\ta_-)
= 
 c_{\Lambda \Gamma} \|P_{\Gamma}\|^2 
P_{\mathcal R \tilde{ \mathcal R} \Gamma'}^{(1/\alpha)}(x_-,\ta_-).
\end{equation}
Since $\Lambda'=\tilde{ \mathcal R} \Gamma'$
we obtain, from \eqref{norm},
\begin{equation}
c_{\Lambda \Gamma} = (-1)^m \alpha \frac{\|P_{\Lambda}\|^2}{\|P_{\Gamma}\|^2}
= (-1)^m \left( \prod_{s \in {\Lambda}} 
\frac{h_{\Lambda}^{(\alpha)}(s)}{h^{\Lambda}_{(\alpha)}(s)}  \right) 
\left( \prod_{s \in \Gamma} 
\frac{h^{\Gamma}_{(\alpha)}(s)}{h_{\Gamma}^{(\alpha)}(s)}  \right) 
=(-1)^{\# \Gamma^\circ} 
\left( \prod_{s \in {\rm col}_{\Gamma^\circ}} 
\frac{h_{\Lambda}^{(\alpha)}(s)}{h_{\Gamma}^{(\alpha)}(s)}  \right) .
\end{equation} 
The last equality holds since the hook-lengths of the same type cancel two-by-two from the two diagrams for all $s$  except those
in the row and the column that are rendered fermionic by the added circle. But since the circle has been added in the first column, the fermionized row has no square.  Hence only the squares in the first column have to be considered. Finally, only the ratio of the upper-hooks do not cancel (since for $s \in {\rm col}_{\Gamma^\circ}$, we have $l_{\Gamma^\cd}(s)=l_{\La^\cd}(s)+1=l_{\La^*}(s)+1$).
This concludes the case where the first column of $\Lambda$ does not have
a circle.

Finally, suppose that the first column of $\Lambda$ has a circle and is 
of length $\ell(\La^\cd)=\ell$.  
In this case the $\Omega$'s in \eqref{twosides}
also have a circle in the first column and are of
length $\ell(\Om^\cd)=\ell$
(each
$\Omega$ is obtained by adding a circle to $\Lambda$ in a column other than the first and necessarily shorter since two circles cannot appear in the same row).  Working in
$\ell$ variables we obtain
\begin{equation}
(-1)^{m} \bigl[ \partial_{\theta_{\ell}} ( \tilde p_0^{(\ell)} 
P_{\Lambda}^{(\alpha,\ell)}) 
\bigr]_{x_\ell=\theta_\ell=0} =
(-1)^{m} \bigl[  P_{\Lambda}^{(\alpha,\ell)} 
\bigr]_{x_\ell=\theta_\ell=0} + (-1)^{m-1} \bigl[\tilde p_0^{(\ell)} 
\partial_{\theta_{\ell}} P_{\Lambda}^{(\alpha,\ell)} 
\bigr]_{x_\ell=\theta_\ell=0} = \tilde p_0^{(\ell-1)} P_{\tilde{\mathcal C} \Lambda}^{(\alpha,\ell-1)} ,
\end{equation}
where the last equality follows from \eqref{PropFactoII}.
Hence, we get from \eqref{twosides} 
\begin{equation}
\tilde p_0^{(\ell-1)} P_{\tilde{\mathcal C} \Lambda}^{(\alpha,\ell-1)}
 =  \sum_{\Omega} c_{\Lambda \Omega}
(-1)^{m} \bigl[ \partial_{\theta_{\ell}} (P_{\Omega}^{(\alpha,\ell)}) 
\bigr]_{x_\ell=\theta_\ell=0}
=
 \sum_{\Omega} c_{\Lambda \Omega} P_{\tilde{\mathcal C} \Omega}^{(\alpha,\ell-1)} ,
\end{equation}
and the result follows again by induction (the degree of 
$(\tilde{\mathcal C} \La)^\cd$ is smaller than that of $\La^\cd$)
since
\begin{equation}
 \prod_{s \in {\rm col}_{\Omega^\circ}} 
\frac{h_{\Lambda}^{(\alpha)}(s)}{h_{\Omega}^{(\alpha)}(s)} =
 \prod_{s \in {\rm col}_{{\tilde{\mathcal C}} \Omega^\circ}}
\frac{h_{ \tilde{\mathcal C}\Lambda}^{(\alpha)}(s)}{h_{\tilde{\mathcal C}\Omega}^{(\alpha)}(s)} ,
\end{equation}
and $\# \Omega^\circ = \# \tilde{\mathcal C} \Omega^\circ$.
\end{proof}

We finally proceed to the proof of formula \eqref{forq}.
\begin{proof}[Proof of formula \eqref{forq}]
Unless otherwise stated, we will assume throughout the proof that
the number of variables is infinite.
The finite case will 
follow immediately by setting $x_i=0$ and
$\theta_i=0$  
for every $i >N$.
As in the proof of formula \eqref{forp0},
we use the notation
\begin{equation}q^{(\ell)}= \sum_{i=1}^\ell \theta_i \partial_{x_i}
 \quad {\rm and} \quad
P_{\Lambda}^{(\alpha,\ell)} = P_{\Lambda}
(x_1,\dots,x_\ell,\theta_1,\dots,\theta_\ell;\alpha).
\end{equation}

It is easy to check that $q$ and $\hat \omega_{\alpha}$ commute 
when
acting on the powers sums
\begin{equation}
q \,  \hat \omega_{\alpha} (p_{\Lambda}) =
\hat \omega_{\alpha} ( q \, p_{\Lambda} ) \, , 
\end{equation}
hence $q$ and $\hat \omega_{\alpha}$ commute when acting on the whole
space of symmetric superpolynomials.

The proof proceeds again by induction on the degree of $\Lambda^\cd$.
We have proven in \cite{DLMeva} (cf. Eq. (A.22) with $d=q$) that 
\begin{equation} \label{twosides2}
q \,  P_{\Lambda}^{(\aa)} = \sum_{\Omega}  d_{\Lambda \Omega}
  \, P_{\Omega}^{(\aa)},
\end{equation}
where the sum is  
over certain $\Omega$'s such that $\Omega^{\cd}=\Lambda^\cd$.  
We now show
that if $\Omega$ appears in \eqref{twosides2}, then
the length of $\Omega^*$ cannot be larger than that of 
$\Lambda^*$.  Using the duality \eqref{dual}
and the commutation of
$q$ and $\hat \omega_{\alpha}$, we have
\begin{equation}
\hat \omega_{\alpha} ( q \, P_{\Lambda}^{(\aa)}) =
q\, \omega_{\alpha} ( P_{\Lambda}^{(\aa)} ) = (-1)^{\binom{m}{2}} \| P_{\Lambda} \|^2 
q \, 
P_{\Lambda'}^{(1/\alpha)}.
\end{equation}
Hence, we get from \eqref{twosides2}
\begin{equation}
(-1)^{\binom{m}{2}} \| P_{\Lambda} \|^2 
q \, 
P_{\Lambda'}^{(1/\alpha)} = 
(-1)^{\binom{m}{2} }\sum_{\Omega} \| P_{\Omega} \|^2  
d_{\Lambda \Omega}
  \, 
 P_{\Omega'}^{(1/\alpha)}.
\end{equation}
Suppose there are some $\Omega$'s in the sum
such that the length of $\Omega^*$ is larger than that of $\Lambda^*$, and
let $\ell$ be the length of  $\Lambda^\cd$.
Since $\Omega^\cd=\Lambda^\cd$,
 this can only happen if the length of $\Omega^*$
is $\ell$ and that of $\Lambda^*$ is $\ell-1$.
Taking the coefficient of $x_1^\ell$ on both sides of the equation and
using \eqref{PropFactoIdual}
we obtain
\begin{equation}
0 = 
(-1)^{\binom{m}{2} }\sum_{\Omega} \| P_{\Omega} \|^2  
d_{\Lambda \Omega}
  \,  P_{\mathcal R \Omega'}^{(1/\alpha)}(x_-,\theta_-),
\end{equation}
where the sum is over $\Omega$'s such that $\ell(\Omega^*)=\ell$.
But this is a contradiction since the 
$P_{\mathcal R \Omega'}^{(1/\alpha)}(x_-,\ta_-)$'s
are linearly independent.

Let $\ell$ be again the length of $\Lambda^\cd$.
We consider first the $\Omega$'s in \eqref{twosides2} 
such that the length of $\Omega^*$ is smaller than that
of
$\Lambda^*$. 
 We will show that
there is at most one such $\Omega$ and that, as formula \eqref{forq} claims,
$\Omega$ is obtained 
by replacing (if possible) the lowest square in the
first column of $\Lambda$ by a circle.

By \eqref{PropFactoII},
we have when we restrict to $\ell$ variables:
\begin{align}
\bigl[ \partial_{\theta_\ell} (q^{(\ell)} P_{\Lambda}^{(\alpha,\ell)})\bigr]_{x_\ell=0}
=\bigl[ \partial_{\theta_\ell} (q^{(\ell)}  x_1 \cdots x_{\ell } 
P_{\mathcal C \Lambda}^{(\alpha,\ell)})\bigr]_{x_\ell=0}
&= \bigl[ \partial_{\theta_\ell} (\theta_\ell  x_1 \cdots x_{\ell-1} 
P_{\mathcal C \Lambda})^{(\alpha,\ell)}\bigr]_{x_\ell=0}\nonumber\\
&=  x_1 \cdots x_{\ell-1} \bigl[
P_{\mathcal C \Lambda}^{(\alpha,\ell)}\bigr]_{x_\ell=\theta_\ell=0}.
\end{align}
But since 
\begin{equation}\bigl[P_{\mathcal C \Lambda}^{(\alpha,\ell)} \bigr]_{x_\ell=\theta_\ell=0}=
\left \{ 
\begin{array}{ll}
P_{\mathcal C \Lambda}^{(\alpha,\ell-1)}  & {\rm if~} \ell({\mathcal C} \Lambda)< \ell ,\\
0 & {\rm otherwise}
\end{array}
\right .
\end{equation}
we get
\begin{equation}
\bigl[ \partial_{\theta_\ell} (q^{(\ell)} P_{\Lambda}^{(\alpha,\ell))})\bigr]_{x_\ell=0}
=
\left \{ 
\begin{array}{ll}
P_{\Gamma}^{(\alpha,\ell-1)}  & {\rm if~} \ell({\mathcal C} \Lambda)< \ell, \\
0 & {\rm otherwise},
\end{array}
\right .
\end{equation}
where $\Gamma$ is obtained from $\Lambda$ by removing the lowest square in the
first column of $\Lambda$.    Using \eqref{twosides2} this implies
\begin{equation}
\sum_{\Omega} d_{\Lambda \Omega}
(-1)^m \bigl[ \partial_{\theta_\ell} (P_{\Omega}^{(\alpha,\ell))})\bigr]_{x_\ell=0}
= \left \{ 
\begin{array}{ll}
(-1)^m P_{\Gamma}^{(\alpha,\ell-1)}  & {\rm if~} \ell({\mathcal C} \Lambda)< \ell, \\
0 & {\rm otherwise}.
\end{array}
\right .
\end{equation}
If $\Omega$ appears in the sum
and
the length of
$\Omega^*$ is equal to $\ell$, then
\begin{equation}
\bigl[P_{\Omega}^{(\alpha,\ell)}\bigr]_{x_\ell=0}= \bigl[x_1 \cdots x_{\ell} P_{\mathcal C\Omega}^{(\alpha,\ell))}\bigr]_{x_\ell=0}=0 
\end{equation}
Therefore the remaining terms in the sum are such that 
$\ell(\Omega^*)=\ell-1$. 
Those $\Omega$'s need to have a circle in their first
column since $\Omega^\cd=\Lambda^\cd$ and the length of 
$\Lambda^\cd$ is $\ell$.
Hence from \eqref{PropFactoII},
\begin{equation}
\sum_{\Omega} d_{\Lambda \Omega}
(-1)^m \bigl[ \partial_{\theta_\ell} P_{\Omega}^{(\alpha,\ell)}
\bigr]_{x_\ell=0} = \sum_{\Omega} d_{\Lambda \Omega} P_{\tilde{\mathcal C} \Omega}^{(\alpha,\ell-1)}
=  \left \{ 
\begin{array}{ll}
(-1)^m P_{\Gamma}^{(\alpha,\ell-1)}  & {\rm if~} \ell({\mathcal C} \Lambda)< \ell ,\\
0 & {\rm otherwise},
\end{array}
\right .
\end{equation}
where the sum is over $\Omega$'s such that $\ell(\Omega^*)=\ell-1$.
Equating both sides of the equation, we get that
the only possibly  non-zero $d_{\Lambda \Omega}$ is such that
$\Gamma=\tilde{\mathcal C} \Omega$ and that 
\begin{equation}
d_{\Lambda \Omega} = (-1)^m.
\end{equation}
Formula \eqref{forq} is thus proved in this case.

We now prove the proposition in the cases where the 
length of $\Omega^*$ is equal to the length of $\Lambda^*$.
By duality and the commutation of $q$ and $\hat \omega_\alpha$, we have
from \eqref{twosides2}
\begin{equation} \label{eqspeciale}
(-1)^m \| P_{\Lambda}\|^2 q P_{\Lambda'}^{(1/\alpha)} = \sum_{\Omega} d_{\Lambda \Omega}
\| P_{\Omega} \|^2 P_{\Omega'}^{(1/\alpha)} \, .
\end{equation} 
Suppose first that $\Lambda$ does not have a circle in its first column.
Using \eqref{PropFactoIdual}, we get
\begin{equation}
\underset{x_1^{\ell}}{\rm coeff} \left( q P_{\Lambda'}^{(1/\alpha)} \right)=
q P_{(\mathcal C \Lambda)'}^{(1/\alpha)}(x_-,\ta_-) .
\end{equation}
We have seen that all the $\Omega$'s that appear in \eqref{twosides2}
are such that $\ell(\Omega^*)\leq \ell(\Lambda^*)=\ell$.  Therefore,
taking the coefficient of $x_1^\ell$ on both sides of \eqref{eqspeciale},
we obtain by \eqref{PropFactoIdual}
\begin{equation}
(-1)^m \| P_{\Lambda}\|^2 q P_{(\mathcal C \Lambda)'}^{(1/\alpha)}(x_-,\ta_-)
= 
\sum_{\Omega} d_{\Lambda \Omega}
\| P_{\Omega} \|^2  P_{(\mathcal C \Omega)'}^{(1/\alpha)}(x_-,\ta_-),
\end{equation}
where the sum is over $\Omega$'s such that $\ell(\Omega^*)=\ell$.
This gives immediately
\begin{equation}
(-1)^m \| P_{\Lambda}\|^2 d_{(\mathcal C \Lambda)' (\mathcal C \Omega)' } = 
d_{\Lambda \Omega}
\| P_{\Omega} \|^2  \, ,
\end{equation}
and can thus conclude by induction, since the degree of $({\mathcal C} \La)^\cd$
is smaller than that of $\La^\cd$, that
\begin{equation}
d_{\Lambda \Omega} = (-1)^{m-\# {\Gamma^\circ}} \frac{\| P_{\Lambda}\|^2}{\| P_{\Omega}\|^2}
\left( \prod_{s \in {\rm row}_{{\Gamma}^\circ}} 
\frac{h_{(\mathcal C  \Lambda)'}^{(1/\alpha)}(s)}{h_{(\mathcal C \Omega)'}^{(1/\alpha)}(s)}  
\right),
\end{equation}
where $\Gamma=(\mathcal C \Omega)'$.  Since $(\mathcal C  \Omega)'$ is
obtained by replacing a square of $(\mathcal C  \Lambda)'$ by a circle,
we have that $\Omega$ is also obtained from $\Lambda$ by replacing
a square by a circle.  It is easy to see that
$m-\# {\Gamma^\circ}= \#\Omega^\circ$ and that
\begin{equation}
 \prod_{s \in {\rm row}_{{\Gamma}^\circ}} 
\frac{h_{(\mathcal C  \Lambda)'}^{(1/\alpha)}(s)}{h_{(\mathcal C \Omega)'}^{(1/\alpha)}(s)} 
= \prod_{s \in {\rm col}_{\Omega^\circ}} 
\frac{h^{\Lambda}_{(\alpha)}(s)}{h^{\Omega}_{(\alpha)}(s)} ,
\end{equation}
given that $h^\Lambda_{(\alpha)}(i,j)=\alpha h_{\Lambda'}^{(1/\alpha)}(j,i)$.
Formula \eqref{forq} then follows in this case since
\begin{equation}
 \frac{\| P_{\Lambda}\|^2}{\| P_{\Omega}\|^2} =  \left( \prod_{s \in {\rm col}_{\Omega^\circ}} \frac{h^{\Omega}_{(\alpha)}(s)}{h^{\Lambda}_{(\alpha)}(s)}  \right)
 \left( \prod_{s \in {\rm row}_{\Omega^\circ}} 
\frac{h_{\Lambda}^{(\alpha)}(s)}{h_{\Omega}^{(\alpha)}(s)} \right) .
\end{equation}

Finally, the case where $\Lambda$ has a circle in its first column is
proven in a similar way.  Using \eqref{PropFactoIIdual} instead of
\eqref{PropFactoIdual},  we obtain
the recursion
\begin{equation}
(-1)^m \| P_{\Lambda}\|^2 d_{(\mathcal C \tilde {\mathcal C} \Lambda)' 
(\mathcal C \tilde {\mathcal C} \Omega)' } = 
d_{\Lambda \Omega}
\| P_{\Omega} \|^2  \, ,
\end{equation}
which gives again the desired result by induction.
\end{proof}

\section{Character formulas for $ {\mathcal{F}}_{N}^{(k)}$}\label{Characters}

{  In this appendix, we present sample expressions for the character
  \begin{equation} \ch  {\mathcal{F}}_{N}^{(k)}:=\sum_{n\geq 1, m\geq 0}   \dim { {\mathcal F}}_{N,n,m}^{(k)} \, u^n v^m. 
\end{equation} 
  Let us illustrate these computations by considering the determination of  $ \dim {\mathcal F}_{3,3,2}^{(1)}$.  Any element $f\in {\mathcal F}_{3,3,2}^{(1)}$ is of the form }
\begin{equation}\label{eqfF} f=a_1m_{(3,0;0)}+a_2m_{(2,1;0)}+a_3m_{(2,0;1)}+a_4 m_{(1,0;2)},\qquad a_i\in\mathbb{C},
\end{equation} and such that
\begin{equation} \label{eqfF2} f(x_1,x_2,x_3,\theta_1,\theta_2,\theta_3)|_{x_1=x_2}= 0.
\end{equation}
Now substitute \eqref{eqfF} into \eqref{eqfF2} and in the resulting equation,  collect the coefficients of all distinct non-symmetric monomials $\theta_{i_2}\theta_{i_3}x_2^{\eta_2}x_3^{\eta_3}$, {with} $i_2<i_3$ and $\eta=(\eta_2,\eta_3)$ {being} a composition.  This leads to the following linear system of equation:
\begin{equation} a_{1}=0,\qquad a_{2}+a_3=0,\qquad a_{2}
-a_{4}=0,\qquad a_1+a_3+a_4=0.
\end{equation}
Hence, $f\in {\mathcal F}_{3,3,2}^{(1)}$ if and only if 
\begin{equation} a_{1}=0,\qquad a_{3}=-a_{2},\qquad a_{4}=a_2,
\end{equation}
for some $a_2\in\mathbb{C}$. Consequently,  $\dim {\mathcal F}_{3,3,2}^{(1)}=1$. {Granting our conjecture, this is not surprising given that} there is only one $(1,2,3)$-admissible and of degree $(3|1)$, namely $(2,1;)$.  

The procedure just illustrated for the computation of $\dim {\mathcal F}_{3,3,2}^{(1)}$ can be easily generalized and computer implemented. Collecting together the various  $\dim {\mathcal F}_{N,n,m}^{(k)}$  for a fixed value of $N$, we thereby construct the lowest terms of the character $\ch {\mathcal F}_N^{(k)}$.

  We have been able to construct the characters  $\ch \mathcal{F}_N^{(k)}$ up to degree $n=12$ (and note that $m$ is always bounded by the relations $m\leq n$ and $m(m-1)/2\leq n$).   Here are some examples (for presentation purposes, the series are truncated  at degree $n=10$):
{\small 
 \begin{multline*}
\ch \mathcal{F}_2^{(1)} = uv+ \left( 2\,v+{v}^{2}+1 \right) {u}^{2}+ \left( 1+2\,{v}^{2}+3\,v
 \right) {u}^{3}+ \left( 2+2\,{v}^{2}+4\,v \right) {u}^{4}+ 
\\
\left( 2+5
\,v+3\,{v}^{2} \right) {u}^{5}+ \left( 3+6\,v+3\,{v}^{2} \right) {u}^{
6}+ \left( 4\,{v}^{2}+7\,v+3 \right) {u}^{7} \\
+ \left( 4\,{v}^{2}+8\,v+4
 \right) {u}^{8}+ \left( 9\,v+5\,{v}^{2}+4 \right) {u}^{9}
+O(u^{10})
\end{multline*}
\begin{multline*}
\ch \mathcal{F}_3^{(1)} = \left( {v}^{2}+{v}^{3} \right) {u}^{3}+ \left( v+2\,{v}^{2}+{v}^{3}
 \right) {u}^{4} \\
+ \left( 2\,v+4\,{v}^{2}+2\,{v}^{3} \right) {u}^{5}+
 \left( 1+4\,v+6\,{v}^{2}+3\,{v}^{3} \right) {u}^{6}+ \left( 1+6\,v+9
\,{v}^{2}+4\,{v}^{3} \right) {u}^{7}\\
+ \left( 2+9\,v+12\,{v}^{2}+5\,{v}
^{3} \right) {u}^{8}+ \left( 3+12\,v+16\,{v}^{2}+7\,{v}^{3} \right) {u
}^{9}
+O(u^{10})
\end{multline*}
\begin{multline*}
\ch \mathcal{F}_4^{(1)} = 
\left( {v}^{3}+{v}^{4} \right) {u}^{6}+ \left( {v}^{2}+2\,{v}^{3}+{v}
^{4} \right) {u}^{7} + \left( 2\,{v}^{2}+4\,{v}^{3}+2\,{v}^{4} \right) 
{u}^{8}+ \left( v+5\,{v}^{2}+7\,{v}^{3}+3\,{v}^{4} \right) {u}^{9}
+O(u^{10})
\end{multline*}
\begin{multline*}
\ch \mathcal{F}_3^{(2)} = 
uv+ \left( 1+2\,{v}^{2}+3\,v \right) {u}^{2}+ \left( 2+{v}^{3}+4\,{v}^
{2}+5\,v \right) {u}^{3}+ \left( 8\,v+6\,{v}^{2}+{v}^{3}+3 \right) {u}
^{4}\\
+ \left( 4+11\,v+9\,{v}^{2}+2\,{v}^{3} \right) {u}^{5}+ \left( 6+
15\,v+12\,{v}^{2}+3\,{v}^{3} \right) {u}^{6}+ \left( 16\,{v}^{2}+19\,v
+4\,{v}^{3}+7 \right) {u}^{7}\\
+ \left( 24\,v+20\,{v}^{2}+9+5\,{v}^{3}
 \right) {u}^{8}+ \left( 29\,v+25\,{v}^{2}+7\,{v}^{3}+11 \right) {u}^{
9}
+O(u^{10})
\end{multline*}
\begin{multline*}
\ch \mathcal{F}_4^{(2)} = 
 \left( v+2\,{v}^{2}+{v}^{3} \right) {u}^{3}+ \left( 3\,v+4\,{v}^{2}+2
\,{v}^{3}+1 \right) {u}^{4} \\
+ \left( 1+6\,v+9\,{v}^{2}+4\,{v}^{3}
 \right) {u}^{5}
+ \left( 3+11\,v+14\,{v}^{2}+{v}^{4}+7\,{v}^{3}
 \right) {u}^{6}+ \left( 23\,{v}^{2}+17\,v+{v}^{4}+11\,{v}^{3}+4
 \right) {u}^{7}\\
+ \left( 2\,{v}^{4}+25\,v+32\,{v}^{2}+7+16\,{v}^{3}
 \right) {u}^{8}+ \left( 3\,{v}^{4}+35\,v+46\,{v}^{2}+23\,{v}^{3}+9
 \right) {u}^{9}
+O(u^{10})
\end{multline*}
\begin{multline*}
\ch \mathcal{F}_5^{(2)} = 
\left( {v}^{2}+{v}^{3} \right) {u}^{5}+ \left( v+3\,{v}^{2}+3\,{v}^{3
}+{v}^{4} \right) {u}^{6}+ \left( 3\,v+8\,{v}^{2}+7\,{v}^{3}+2\,{v}^{4
} \right) {u}^{7}\\
+ \left( 1+7\,v+15\,{v}^{2}+13\,{v}^{3}+4\,{v}^{4}
 \right) {u}^{8}+ \left( 2+13\,v+27\,{v}^{2}+23\,{v}^{3}+7\,{v}^{4}
 \right) {u}^{9}
+O(u^{10})
\end{multline*}
\begin{multline*}
\ch \mathcal{F}_4^{(3)} = 
uv+ \left( 1+2\,{v}^{2}+3\,v \right) {u}^{2}+ \left( 2+{v}^{3}+5\,{v}^
{2}+6\,v \right) {u}^{3}+ \left( 10\,v+8\,{v}^{2}+2\,{v}^{3}+4
 \right) {u}^{4}\\
+ \left( 5+15\,v+14\,{v}^{2}+4\,{v}^{3} \right) {u}^{5
}+ \left( 8+22\,v+20\,{v}^{2}+{v}^{4}+7\,{v}^{3} \right) {u}^{6}+
 \left( 30\,{v}^{2}+30\,v+{v}^{4}+11\,{v}^{3}+10 \right) {u}^{7}
\\+
 \left( 2\,{v}^{4}+40\,v+40\,{v}^{2}+14+16\,{v}^{3} \right) {u}^{8}+
 \left( 3\,{v}^{4}+52\,v+55\,{v}^{2}+23\,{v}^{3}+17 \right) {u}^{9}
+O(u^{10})
\end{multline*}
\begin{multline*}
\ch \mathcal{F}_5^{(3)} = 
\left( v+2\,{v}^{2}+{v}^{3} \right) {u}^{3}+ \left( 4\,v+5\,{v}^{2}+2
\,{v}^{3}+1 \right) {u}^{4} \\
+ \left( 2+8\,v+11\,{v}^{2}+5\,{v}^{3}
 \right) {u}^{5}+ \left( 4+15\,v+19\,{v}^{2}+{v}^{4}+9\,{v}^{3}
 \right) {u}^{6}+ \left( 32\,{v}^{2}+24\,v+2\,{v}^{4}+16\,{v}^{3}+6
 \right) {u}^{7}\\
+ \left( 4\,{v}^{4}+37\,v+48\,{v}^{2}+10+25\,{v}^{3}
 \right) {u}^{8}+ \left( 7\,{v}^{4}+53\,v+71\,{v}^{2}+39\,{v}^{3}+14
 \right) {u}^{9}
+O(u^{10})
\end{multline*}
\begin{multline*}
\ch \mathcal{F}_6^{(3)} = 
\left( v+2\,{v}^{2}+{v}^{3} \right) {u}^{5}+ \left( 1+3\,v+5\,{v}^{2}
+4\,{v}^{3}+{v}^{4} \right) {u}^{6}+ \left( 1+7\,v+13\,{v}^{2}+9\,{v}^
{3}+2\,{v}^{4} \right) {u}^{7}\\
+ \left( 3+14\,v+24\,{v}^{2}+18\,{v}^{3}
+5\,{v}^{4} \right) {u}^{8}+ \left( 5+25\,v+44\,{v}^{2}+33\,{v}^{3}+9
\,{v}^{4} \right) {u}^{9}
+O(u^{10})
\end{multline*}
}

As mentioned in the text, these expansions match those of the characters $\ch {\mathcal I}_N^{(k,2)}$. {For instance, to the term $4v^2u^4$ in $\ch \mathcal{F}_4^{(2)}$, there corresponds the following four $(2,2,4)$-admissible superpartitions of degree $(4|2)$: $(3,1;0,0),\,(1,0;3,0),\,(2,1;1,0),\,(2,0;2,0)$.}

\end{appendix}
\begin{acknow}
We  thank the referees for their valuable comments and suggestions  that
have greatly improved the article. 
We are also grateful to Stephen Griffeth for helpful discussions.
 This work was  supported by NSERC (Natural Sciences and Engineering Research Council of Canada),
FONDECYT (Fondo Nacional de Desarrollo Cient\'{\i}fico y
Tecnol\'ogico de Chile) grants \#1090016 (L.L.) and \#1090034 (P.D.), 
and by CONICYT (Comisi\'on Nacional de Investigaci\'on Cient\'ifica y Tecnol\'ogica de Chile) via 
the Anillo de Investigaci\'on ACT56 (Lattices and Symmetry).
\end{acknow}

\end{document}